\def\d{\mathrm{d}}
\newcommand{\E}{\mathbb{E}}
\renewcommand{\ge}{\geqslant}
\renewcommand{\le}{\leqslant}
\renewcommand{\geq}{\geqslant}
\renewcommand{\leq}{\leqslant}
\renewcommand{\epsilon}{\varepsilon}
\theoremstyle{plain}
\newtheorem{theorem}{Theorem}
\newtheorem{corollary}{Corollary}
\newtheorem{lemma}{Lemma}
\newtheorem{proposition}{Proposition}
\theoremstyle{definition}
\newtheorem{assumption}{Assumption}
\newtheorem{example}{Example}[assumption]
\theoremstyle{remark}
\theoremstyle{definition}
\renewcommand{\cite}{\citet}
\renewcommand{\cdots}{\dots}
\begin{document} 
	
\title{Robust Bayesian Dynamic Programming for On-policy Risk-sensitive Reinforcement Learning}

\author{
    Shanyu Han\thanks{\scriptsize School of Mathematical Sciences, Peking University, China. \Letter~\texttt{hsy.1123@pku.edu.cn}}
     \and
	Yangbo He\thanks{\scriptsize School of Mathematical Sciences, Peking University, China. \Letter~\texttt{heyb@math.pku.edu.cn}}
    \and
	Yang Liu\thanks{\scriptsize School of Science and Engineering, The Chinese University of Hong Kong, Shenzhen, China. \Letter~\texttt{yangliu16@cuhk.edu.cn}}
   }

\maketitle

\begin{abstract}

We propose a novel framework for risk-sensitive reinforcement learning (RSRL) that incorporates robustness against transition uncertainty. We define two distinct yet coupled risk measures: an inner risk measure addressing state and cost randomness and an outer risk measure capturing transition dynamics uncertainty.
Our framework unifies and generalizes most existing RL frameworks by permitting general coherent risk measures for both inner and outer risk measures. 
Within this framework, we construct a risk-sensitive robust Markov decision process (RSRMDP), derive its Bellman equation, and provide error analysis under a given posterior distribution.
We further develop a Bayesian Dynamic Programming (Bayesian DP) algorithm that alternates between posterior updates and value iteration. 
The approach employs an estimator for the risk-based Bellman operator that combines Monte Carlo sampling with convex optimization, for which we prove strong consistency guarantees. 
Furthermore, we demonstrate that the algorithm converges to a near-optimal policy in the training environment and analyze both the sample complexity and the computational complexity under the Dirichlet posterior and CVaR.
Finally, we validate our approach through two numerical experiments. The results exhibit excellent convergence properties while providing intuitive demonstrations of its advantages in both risk-sensitivity and robustness. Empirically, we further demonstrate the advantages of the proposed algorithm through an application on option hedging.

	\noindent \textbf{Keywords}: risk-sensitive reinforcement learning, Bayesian dynamic programming, robust reinforcement learning, Markov decision processes, coherent risk measures, Laguerre tessellation, optimal transport, convex optimization. 
\end{abstract}

\section{Introduction}
Reinforcement learning (RL) focuses on developing agents that learn optimal policies through interactions with the environment to maximize cumulative rewards or minimize cumulative costs. RL has gained significant attention across multiple domains such as robotics (\cite{gu2017deep}, \cite{brunke2022safe}), finance (\cite{deng2016deep}, \cite{du2020deep}), and games (\cite{silver2018general}). 
However, when applied to real-world tasks, RL typically encounters two key challenges. First, learned policies often optimize only for expected rewards while ignoring rare but potentially catastrophic outcomes, which can introduce substantial risks. Second, when there are discrepancies between the training environment and the real-world deployment environment, the resulting policies tend to suffer severe performance degradation. These issues highlight the lack of risk sensitivity and robustness in conventional RL approaches. Recent research has increasingly emphasized these two aspects, giving rise to two important frameworks: risk-sensitive reinforcement learning (RSRL) and robust reinforcement learning. These two RL frameworks exhibit clear distinctions. RSRL replaces the expectation on rewards with alternative functionals that capture risk, commonly referred to as risk measures. This risk-sensitive viewpoint has also been modeled as an optimization problem under risk-measure-based constraints; for example, \cite{fang2023fairness} study fair individual treatment rule using a Value-at-Risk constraint.  In contrast, robust RL typically accounts for uncertainties in the transition kernel. Some studies have revealed that under certain conditions, the two approaches can be equivalent (\cite{osogami2012robustness}, \cite{shen2013risk}, \cite{chow2015risk}, \cite{zhang2023soft})—for instance, a coherent risk measure can be reformulated as the supremum of expectations over an uncertainty set of transition kernels. Nevertheless, relatively few works investigate both perspectives simultaneously. A remaining challenge lies in handling robustness to model mis-specification and distributional shift under non-expectation-based risk objectives (or preferences). 

Scenarios where both risk sensitivity and robustness are simultaneously required are widespread.
This idea is closely related to worst-case risk problems studied in distributionally robust optimization (DRO), which jointly consider non-expectation-based objectives and model uncertainty (see \cite{kuhn2025distributionally}). 
A representative example arises in financial portfolio management. The performance of a fund manager is typically not evaluated solely by average returns, but rather through risk-adjusted measures such as the Sharpe ratio (expected return divided by standard deviation) or, equivalently, mean–variance utility. At the same time, the future distribution of asset prices is inherently uncertain and can only be estimated from historical data, which highlights the necessity of robustness. Another illustrative example comes from control problems in autonomous driving. In this setting, controlling tail risks is of fatal importance, since extreme events often correspond to catastrophic outcomes such as traffic accidents. On the other hand, the road data used for training typically deviates from real-world driving conditions, which likewise necessitates considerations of robustness.

Broadly speaking, RSRL captures non-expectation-based preferences over uncertainties in states and rewards (or costs), while robust reinforcement learning addresses uncertainties in the transition dynamics. Although these two types of uncertainties could, in principle, be unified into a broader notion of overall uncertainty, modeling them separately offers irreplaceable advantages—most notably enhanced intuitiveness and interpretability. We illustrate this point with a heuristic example. Although the example does not involve sequential decision-making in the RL sense, it clearly highlights the importance of distinguishing between risk-sensitivity and robustness.
Consider a policymaker whose goal is to improve the average income of the bottom 5\% of earners rather than the overall average income—formally, to minimize the Conditional Value-at-Risk (CVaR) at level $0.05$ of a random loss $-X,$ where $X$ denotes the random income with a distribution $P.$ Meanwhile, the income distribution $P$ is unknown; the policymaker places a prior $P\sim\chi$ over plausible distributions. We consider two modeling approaches. In the first, the policymaker integrates both types of uncertainty, with the objective denoted as $\mathrm{CVaR}_{0.05}(-X;X\sim P,P\sim \chi).$ In the second, the policymaker separately models the randomness of income and the uncertainty over its distribution, leading to the objective with double-layered risk measures denoted as $\mathrm{CVaR}_{0.5}(\mathrm{CVaR}_{0.05}(-X;X\sim P);P\sim\chi).$  The first case corresponds to the CVaR under the marginal distribution of  $X$ which can be interpreted as \textit{“the average income of the bottom 5\% of earners as subjectively perceived by the policymaker.” } In contrast, the second case has a different meaning: it can be interpreted as \textit{“ensuring the average income of the bottom 5\% of earners even under the worst 50\% of scenarios where the policymaker’s knowledge about the distribution is most inaccurate.”} Clearly, the latter offers stronger fairness and rationality, and its formulation serves as the main source of inspiration for the framework developed in this paper. 

Specifically, in this paper, we construct a new class of Markov decision processes (MDPs) using a double-layered risk measure similar to the one discussed earlier. This formulation can be viewed as a synthesis of the dynamic risk problem in RSRL and the BRMDP framework (see \cite{lin2022bayesian}) in robust reinforcement learning.
Our primary contributions are threefold. First, we establish a novel measure-theoretic MDP framework centered on double-layered risk measures (inner and outer). The framework is intuitive and provides risk-sensitivity through the inner risk measure and robustness via the outer risk measure. By accommodating general coherent risk measures, this formulation generalizes most existing RL frameworks, including conventional RL, DRP, classical Bayesian RL, and the RL framework in \cite{wang2023bayesian}. Furthermore we provide theoretical foundations such as Bellman equations and error analysis. Second, we develop a Bayesian Dynamic Programming approach for model-free on-policy learning within this framework. To estimate the doubly-nonlinear Bellman operator, we combine Monte Carlo simulation with convex optimization techniques. Third, we theoretically and experimentally validate the proposed Bayesian DP method. Theoretically, we prove (a) strong consistency of the Bellman operator estimator, (b) posterior convergence to the true transition, (c) overall algorithmic convergence, and (d) characterization of both the sample complexity and the computational complexity, including both the number of iterations required for convergence and the cost of each iteration.
Experimentally, we validate the method's risk-sensitivity, convergence properties, and robustness through two simple yet illustrative experiments. In the risk-neutral setting, we benchmark our approach against classical Q-learning (\cite{watkins1992q}) and two representative DRRL frameworks (\cite{liu2022distributionally, neufeld2024robust}). Under a CVaR-based inner risk measure, we compare with iterated CVaR RL (\cite{du2022provably}).

\subsection{Literature}
RSRL aims to incorporate risk factors into the policy learning process, focusing not simply on expectation but rather on other functionals (called risk measures) that account for variability. RSRL can be divided into two categories: static risk problems (SRP) and dynamic risk problems (DRP). The SRP optimization applies a global risk measure to the cumulative cost. \cite{bauerle2011markov} propose using state augmentation to study SRP with a conditional-value-at-risk (CVaR) objective and reduce the problem to an ordinary MDP. 
Subsequent studies have built on this to investigate model-free reinforcement learning under static CVaR (\cite{chow2018risk}, \cite{prashanth2014policy}, \cite{wang2023near}). Other common forms of SRP include the use of mean–variance utility (\cite{di2012policy}, \cite{la2013actor}, \cite{xie2018block}) and the entropy risk measure (ERM; \cite{fei2020risk} and \cite{fei2021risk}). Recently, 
\cite{ni2022policy} develop a policy gradient approach for Entropic Value-at-Risk (EVaR) objectives and \cite{han2025risk} adopt convex scoring functions to handle SRP with a unified class of risk measures including variance, CVaR, ERM, EVaR, and mean-risk utilities. 
SRP is also related to distributional reinforcement learning. While the earliest studies in distributional RL focused on learning the optimal distribution of reward while still optimizing the expectation (see \cite{bellemare2017distributional}), more recent works have examined SRP from a distributional perspective. For example, \cite{kim2024risk} employ policy gradient to study distributional reinforcement learning under static CVaR, and \cite{chen2024provable} apply a general function approximation to investigate distributional reinforcement learning under static Lipschitz risk measures. 
More recent work on SRP can be found in \cite{zhang2021mean}, \cite{wang2024risk} and \cite{ni2024risk}. In contrast, DRP applies recursive risk measures to the cost at each step. The early formulation of DRP, as seen in \cite{mihatsch2002risk}, employs relatively simple piecewise linear functions as risk measures. 
\cite{ruszczynski2010risk} proposes the MDP framework under recursive coherent risk measures and establishes theoretical foundations including Bellman equations, value iteration, and policy iteration. 
Some subsequent work has focused on the DRP problem under iterated CVaR (see \cite{du2022provably} and \cite{chen2023provably}). 
\cite{tamar2016sequential} design an Actor-Critic (AC) algorithm for coherent recursive risk measures, and \cite{coache2024reinforcement} extend it to convex risk measures. More recent work on DRP is available in \cite{coache2024robust}, \cite{coache2023conditionally}, \cite{liang2024regret} and \cite{yu2022risk}. 
In our current work, we focus on DRP with recursive coherent risk measures, due to their inherent time-consistent advantages. 

The most widely studied form of robust reinforcement learning is distributionally robust reinforcement learning (DRRL), which is formulated within the MDP framework augmented with distributionally robust optimization (DRO) techniques. DRRL optimizes for the worst-case performance when the unknown model parameters are in an ambiguity set. 
\cite{shen2013risk} conduct a theoretical study of MDPs under DRO and establish connections with MDPs under static risk measures in SRP. 
\cite{liu2022distributionally} and \cite{neufeld2024robust} develop Q-learning for DRRL, where \cite{liu2022distributionally} employ a KL-divergence-based ambiguity set and \cite{neufeld2024robust} employ a Wasserstein-distance-based ambiguity set. 
\cite{shi2024distributionally} propose a model-based offline DRRL algorithm with near-optimal sample complexity.
More recent studies in DRRL can be found in \cite{badrinath2021robust}, \cite{wang2023model}, \cite{blanchet2023double}, \cite{zhou2023natural}, and \cite{zhang2023soft}.
An alternative method for introducing robustness is the Bayesian risk optimization (BRO) framework (\cite{wu2018bayesian}, \cite{zhou2015simulation}), which quantifies parameter uncertainty through a risk measure over Bayesian posterior distributions. Compared to DRO, BRO offers greater flexibility since DRO's exclusive focus on worst-case scenarios often leads to overly conservative policies. \cite{lin2022bayesian} propose a BRO framework for MDP (termed BRMDP) and a dynamic programming algorithm in finite-horizon scenarios. \cite{wang2023bayesian} develop this approach for infinite-horizon problems and distinguish from the classical Bayesian dynamic programming (Bayesian DP) by applying risk measures rather than expectations over the posterior distributions.
In this work, we employ the BRMDP framework to introduce robustness into our approach. Our work focuses on on-policy learning, which \cite{wang2023bayesian} identify in their concluding remarks as an important future research direction. 
Bayesian DP is first proposed by \cite{strens2000bayesian},  which models transition probability using Bayesian posterior distributions and alternately performs posterior updates and dynamic programming value iterations. Other related Bayesian methods for RL are reviewed in \cite{ghavamzadeh2015bayesian}. 

Although both risk sensitivity and robustness are important concepts, only a few studies have explored their equivalence, and even fewer works in reinforcement learning have examined them jointly. 
Integrating these aspects remains a key challenge in the field. Notable advances include \cite{jaimungal2022robust} and \cite{queeney2023risk}, which address the SRP and DRP, respectively, using DRO to embed robustness. \cite{ni2024robust} propose a robust SRP under CVaR using DRO. Further, \cite{pan2019risk} tackle this integration through the framework of robust adversarial RL (RARL).  The objective of our current work is to provide a robust risk-sensitive RL framework by unifying the DRP and BRMDP with double-layered risk measures.

\section{Preliminaries and Problem Formulation}
\paragraph{Standard MDP.}
Consider a Markov decision process (MDP) with a finite state space $\mathcal{S}=\{s^1,\ldots,s^K\}$ and a finite action space $\mathcal{A} = \{a^1,\ldots,a^B\}$, where $c(s,a,s')$ is a deterministic, state-action-dependent reward function, bounded by $\bar{C}=\max_{s,a,s'}|c(s,a,s')|.$  A Markov policy is a function $\pi:\mathcal{S}\times\mathcal{A}\to[0.1],$ meaning the probability of taking action $a$ at state $s$ and we define the value as $\pi(s|a).$ Denote all the Markov policies by $\Pi.$ 
To model the discrepancy between the training environment and the real
environment, we consider defining the transition probability as a random vector following some distribution $\chi,$ which is inspired by \cite{liu2022distributionally}.  Additionally, in the later section, we select a subjective distribution $\chi$ as the prior and subsequently update it to the posterior using Bayes' theorem.

\paragraph{Transition Probability.}
Below we show our novel definition of the transition probability, which is considered as a ``random vector" on a special ``probability space". We rigorously define them 
based on a measure-theoretic language. 
Subsequently, we formulate all risk measures within this measure-theoretic foundation. 
Denote by $\mathbb{P}_0$ the counting measure on $(\mathcal{S},2^{\mathcal{S}})$. Let $\mathscr{V}_0 = \mathscr{L}_{n_0}(\mathcal{S},2^{\mathcal{S}},\mathbb{P}_0),\mathscr{Y}_0 = \mathscr{L}_{m_0}(\mathcal{S},2^{\mathcal{S}},\mathbb{P}_0)$ with $n_0,m_0\in(1,\infty),$ and $\frac{1}{n_0}+\frac{1}{m_0}=1$ (since $\mathcal{S}$ is finite, $\mathscr{V}_0= \mathscr{Y} _0=  \mathscr{L}_{\infty}(\mathcal{S},2^{\mathcal{S}},\mathbb{P}_0)$). We further define
\begin{equation}\begin{aligned}
    \mathscr{M}&= \left\{m\in\mathscr{Y}_0:\sum_{s'\in\mathcal{S}}m(s')=1,m(s')\geq 0,\forall s'\in\mathcal{S}\right\}\ \text{and}\ \mathscr{P} =\mathscr{M}^{|\mathcal{S}|\times|\mathcal{A}|}. 
 \end{aligned}\notag\end{equation}
The element of $\mathscr{P}$ is represented as $q =\left(q(\cdot|s^1,a^1),q(\cdot|s^1,a^2),\ldots,q(\cdot|s^K,a^B)\right),$ where $q(\cdot|s,a)=(q(s^1|s,a),\ldots,q(s^K|s,a)).$ Denote by $\mathscr{B}( \mathscr{P})$ the Borel $\sigma$-algebra of $\mathscr{P}.$ For any distribution $\chi$ along with the corresponding probability measure $\mathbb{P}^\chi,$ 
let $\mathscr{V}^{\chi}_1 = \mathscr{L}_{n_1}(\mathscr{P},\mathscr{B}( \mathscr{P}),\mathbb{P}^\chi),\mathscr{Y}^{\chi}_1 = \mathscr{L}_{m_1}(\mathscr{P},\mathscr{B}( \mathscr{P}),\mathbb{P}^\chi)$ with $n_1,m_1\in(1,\infty),$ and $\frac{1}{n_1}+\frac{1}{m_1}=1.$ We define the transition probability $p$ as an element on $\mathscr{V}^{\chi}_1,$ which is a ``random vector". Here we emphasize the distinction between $\mathscr{P}$ and the special ``probability space" $\mathscr{V}^{\chi}_1:$ the elements of the former are deterministic vectors (discretely-distributed random variables), which we denote by $q,$ while those of the latter are ``random variables", which we denote by $p.$ 

\paragraph{Inner and Outer Risk Measures.}
We now formally define the inner and outer risk measures, which constitute the core of our robust risk-sensitive RL framework. For any $q\in\mathscr{P},\pi\in\Pi,$ and initial state distribution $\mu_0,$ initial action distribution $\tau_0,$ by Ionescu-Tulcea Theorem (see \cite{klenke2013probability}), there exists a unique probability measure $\mathbb{P}^{q,\pi,\mu_0,\tau_0}$ on $(\Omega,\mathscr{F})=((\mathcal{S})^{\infty},(2^\mathcal{S})^{\infty}),$ such that 
\begin{itemize}
    \item[(1)] $\mathbb{P}^{q,\pi,\mu_0,\tau_0}(S_0=s')=\mu_0(s');$
    \item[(2)] $\mathbb{P}^{q,\pi,\mu_0,\tau_0}(S_1=s'|S_0=s_0)=\sum_{a\in\mathcal{A}}\tau_0(a|s_0)q(s'|s_0,a);$
    \item[(3)] $\mathbb{P}^{q,\pi,\mu_0,\tau_0}(S_{t+1}=s'|S_0=s_0,\ldots,S_t=s_t) = \sum_{a\in\mathcal{A}}\pi(a|s)q(s'|s_t,a),\forall t\geq 1.$
\end{itemize}
Thus there exists a random trajectory $X^{q,\pi,\mu_0,\tau_0}=(S_0,A_0,S_1,\ldots,S_t,A_t,\ldots)$ following $\mathbb{P}^{q,\pi,\mu_0,\tau_0}.$ Define $\mathscr{F}_t=\sigma(S_0,A_0,S_1,A_1,\ldots,A_{t-1},S_t),$ and 
$\mathscr{F}_0\subset\mathscr{F}_1\subset\cdots\subset\mathscr{F}$ is a filtration on $(\Omega,\mathscr{F}).$ For any $q\in\mathscr{P},$ and any $\pi\in\Pi,$ we consider one-step conditional risk measures $\rho_{q,\pi,0},\rho_{q,\pi,1},\ldots,\rho_{q,\pi,t},\ldots$ such that $\rho_{q,\pi,t}:\mathscr{F}_{t+1}\to\mathscr{F}_{t}.$ We further assume $\left\{\rho_{q,\pi,t}\right\}_{t\geq0}$ are stationary Markov risk measures (\cite{ruszczynski2010risk}) w.r.t. process $X^{q,\pi,\mu_0,\tau_0},$ i.e., there exists a risk transition mapping $\sigma:\mathscr{V}_0\times\mathscr{M}\to\mathbb{R}$ such that
\begin{equation}
    \rho_{q,\pi,t} (v(S_t,A_t,S_{t+1})) = \sum_{a\in\mathcal{A}} \pi(a|S_t) \cdot\sigma(v(S_t,a,\cdot),q(\cdot|S_t,a)),\notag
\end{equation}
for all $(S_t,A_t,S_{t+1})_{t\geq0}$ from $X^{q,\pi,\mu_0,\tau_0}.$ Meanwhile, we consider that for any distribution $\chi,$ there exists a risk measure $\beta_{p\sim\chi}$ on $\mathscr{L}_{n_1}(\mathscr{P},\mathscr{B}( \mathscr{P}),\mathbb{P}^{\chi}).$ We refer to $\rho_{q,\pi,t}$ as the inner risk measure and $\beta_{\chi}$ as the outer risk measure. In this paper, both inner and outer risk measures are assumed to be coherent, which is widely used in quantitative finance and operations research; see \cite{delbaen2002coherent}, \cite{ahmed2007coherent}, \cite{jaschke2001coherent} and \cite{FLW24}. Given a set of random variable $\mathcal{X},$ a risk measure $f:\mathcal{X}\to\mathbb{R}$ is coherent if: 
\begin{itemize}
\item[(1)] $f(cX)=cf(X),$ for any $c\geq0$ and $X\in\mathcal{X};$
\item[(2)]  $f(X_1)\geq f(X_2),$ for any $X_1,X_2\in\mathcal{X}$ satisfying $X_1\geq X_2$ a.s.; 
\item[(3)]  $f(X+c) =f(X)+c$ for any $c\in\mathbb{R}$ and $X\in\mathcal{X};$
\item[(4)] $f(X_1+X_2)\leq f(X_1)+f(X_2)$  for any $X_1,X_2\in\mathcal{X}.$
\end{itemize}
 The inner risk measure $\rho_{q,\pi,t}$ characterizes randomness in costs, while the outer risk measure $\beta_{p\sim\chi}$ captures uncertainty in transition probability. The inner risk measure introduces risk-sensitivity to RL, while the outer risk measure ensures robustness against environmental uncertainties. 
In most risk-sensitive RL studies, the adopted risk measures exclusively operate as inner risk measures. In contrast, outer risk measures, which explicitly depend on the posterior of the transition probability, are theoretically meaningful only within Bayesian RL frameworks. To the best of our knowledge, \cite{wang2023bayesian} provide the only existing work that employs an outer risk measure. However, their framework adopts a risk-neutral perspective toward costs, as the inner risk measure simplifies to an expectation. 

\paragraph{Risk Sensitive Robust MDP (RSRMDP) and RL Problem.}
 Below we formalize our robust RL objective in a novel risk-sensitive robust MDP framework.
 In this paper, the total risk is formally defined based on the two risk measures defined above. For any initial state distribution $\mu_0,$ initial action distribution $\tau_0$ and $p\in\mathscr{V}_1^{\chi}$ following distribution $\chi,$ we define the total risk as
\begin{equation}
\begin{aligned}
\text{Risk}(\chi,\pi,\mu_0,\tau_0)
= &\beta_{p\sim\chi}({\rho}_{p,\pi,0}(c(S_0,A_0,S_1)+\\
&\quad\gamma\beta_{p\sim\chi}(\mathscr{\rho}_{p,\pi,1}( c(S_1,A_1,S_2) +\\
&\quad\quad\gamma\beta_{p\sim\chi}(\mathscr{\rho}_{p,\pi,2}\left( c(S_2,A_2,S_3) 
+\cdots\right)))))),
\end{aligned}\notag
\end{equation}
where $(S_t,A_t)_{t\geq 0}$ is from $X^{q,\pi,\mu_0,\tau_0}.$ 
We define the value function $
    V_{\chi,\pi}(s) = \text{Risk}\left(\chi,\pi,\delta_{s},\pi\right),
$
where $\delta_s$ and $\delta_a$ are the point mass on $s$ and $a,$ respectively. We can readily derive Proposition \ref{p0}. Also we assume that there exists a true but unknown training transition probability $\bar{q}\in \mathscr{P},$ i.e., the true training distribution of the transition probability is $\delta_{\bar{q}}.$ 
\begin{proposition}
\label{p0} 
For any $s\in\mathcal{S},$ we have $V_{\chi,\pi}(s)\leq \frac{\bar{C}}{1-\gamma}.$ 
\end{proposition}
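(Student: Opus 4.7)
The plan is to peel off the nested risk measures one layer at a time using the three defining properties of a coherent risk measure (translation invariance, monotonicity, positive homogeneity), and then pass to the limit in a finite-horizon truncation.

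First I would record a standard but essential preliminary fact: every coherent risk measure $f$ satisfies $f(c) = c$ for every constant $c \in \mathbb{R}$. Indeed, positive homogeneity with $c = 0$ gives $f(0) = 0$, and translation invariance then yields $f(c) = f(0) + c = c$. Combined with monotonicity this produces the crucial domination rule: if $X \le M$ almost surely for some constant $M$, then $f(X) \le f(M) = M$. Both $\rho_{q,\pi,t}$ and $\beta_{p \sim \chi}$ inherit this property from coherence, so the rule applies at every inner and every outer layer of the nested expression defining $\mathrm{Risk}(\chi,\pi,\mu_0,\tau_0)$.

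Next I would introduce the $N$-step truncation $V^{(N)}_{\chi,\pi}(s)$ obtained by replacing the infinite nesting after stage $N$ with $0$, and prove by backward induction that $V^{(N)}_{\chi,\pi}(s) \le \bar{C} \sum_{t=0}^{N-1} \gamma^{t}$ for all $s \in \mathcal{S}$. The base case uses $|c(S_{N-1},A_{N-1},S_N)| \le \bar{C}$ together with the domination rule applied first to $\rho_{p,\pi,N-1}$ and then to $\beta_{p \sim \chi}$, giving the innermost layer a bound of $\bar{C}$. For the inductive step, assume the truncated tail from stage $t+1$ onward is bounded by $\bar{C}\sum_{k=0}^{N-t-2} \gamma^{k}$; then $c(S_t,A_t,S_{t+1}) + \gamma \cdot (\text{tail}) \le \bar{C} + \gamma \bar{C}\sum_{k=0}^{N-t-2}\gamma^{k} = \bar{C}\sum_{k=0}^{N-t-1}\gamma^{k}$ almost surely, and applying $\rho_{p,\pi,t}$ followed by $\beta_{p \sim \chi}$ (both of which preserve the constant upper bound by the domination rule) yields the same bound one level up.

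Finally I would let $N \to \infty$. Because $\gamma \in (0,1)$ and the per-stage costs are uniformly bounded by $\bar{C}$, the difference $|V_{\chi,\pi}(s) - V^{(N)}_{\chi,\pi}(s)|$ is controlled by $\bar{C}\gamma^{N}/(1-\gamma)$ via the same peeling argument applied to the tail, so $V^{(N)}_{\chi,\pi}(s) \to V_{\chi,\pi}(s)$ and the partial geometric sums $\bar{C}\sum_{t=0}^{N-1}\gamma^{t}$ converge to $\bar{C}/(1-\gamma)$, which proves the claim. The only delicate point in this plan is ensuring that the infinite nested risk composition is well-defined on the spaces $\mathscr{V}_0$ and $\mathscr{V}^{\chi}_1$ in which $\rho$ and $\beta$ are defined, so that the truncation argument is legitimate; this is where I would expect to have to invoke the measure-theoretic setup (Ionescu–Tulcea, boundedness of costs, finiteness of $\mathcal{S}$ and $\mathcal{A}$) most carefully, but once the well-definedness is granted the monotonicity-plus-translation argument runs mechanically.
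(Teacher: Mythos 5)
Your argument is correct: the observation that a coherent risk measure preserves constants ($f(0)=0$ by positive homogeneity, hence $f(c)=c$ by translation invariance) and is monotone gives exactly the domination rule you need at both the $\sigma$-layer (via $\sigma(v,m)\le\max_{s'}v(s')$, then a convex combination over actions) and the $\beta$-layer, and the backward induction on the truncated nesting plus a geometric series yields the bound $\bar{C}/(1-\gamma)$. The paper itself offers no proof of Proposition~\ref{p0} --- it only remarks that the proposition is ``readily derived'' --- so there is nothing to compare against line by line; your route is the natural one, and the constant-preservation and monotonicity steps you rely on are precisely the facts the authors establish and use inside the proof of Lemma~\ref{lemma_1}. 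The one caveat you correctly flag (well-definedness of the infinite nesting, needed to justify the truncation limit) can be sidestepped entirely by a slightly slicker argument within the paper's own machinery: $\mathcal{J}_{\chi,\pi}$ is a $\gamma$-contraction (Lemma~\ref{lemma_1}) and maps the closed ball $\{V:\Vert V\Vert_\infty\le\bar{C}/(1-\gamma)\}$ into itself, since $c(s,a,\cdot)+\gamma V(\cdot)\le\bar{C}+\gamma\bar{C}/(1-\gamma)=\bar{C}/(1-\gamma)$ and both $\sigma$ and $\beta$ preserve constant upper bounds; the unique fixed point $V_{\chi,\pi}$ therefore lies in that ball. Either way the claim holds, so your proposal stands.
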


Next, we turn to our RL problem. In robust RL, algorithms face a trade-off between optimality and robustness. Achieving optimality in the training environment often conflicts with maintaining robustness in environments different from the training one. For instance, RL algorithms within the DRO framework typically optimize for the worst-case performance, resulting in overly conservative policies that fail to achieve the optimal policy. In contrast, ignoring the discrepancies between training and deployment environments, the RL problem is to optimize 
\begin{equation}
\label{true_RLproblem}
    \min_{\pi\in\Pi} \phi(\text{Risk}\left(\delta_{\bar{q}},\pi,\mu_0,\pi\right)),
\end{equation}
for some initial state distribution $\mu_0$ and some functional $\phi:\mathscr{F}_0\to\mathbb{R}.$ In this paper, we propose a middle ground: at each stage, we incorporate a posterior $\chi$ on the transition probability and learn for the optimal policy based on the risk introduced by uncertainty in the transition probability. This leads to solving an RL problem at each stage, given by:  \begin{equation}
\label{RLproblem}
    \min_{\pi\in\Pi} \phi(\text{Risk}\left(\chi,\pi,\mu_0,\pi\right)).
\end{equation}
Based on this approach, we present an algorithm that combines posterior updates with policy optimization (Algorithm \ref{alg:bayes-dp}). This deals with robustness against model uncertainty with limited interactions in the training environment, while also achieving the near-optimal policy when learning for a longer duration  (Theorem \ref{theorem_convergenceALL}). 
One can control the risk of model mis-specification by setting a higher error tolerance or by adopting an early-stopping strategy, or alternatively achieve the optimal policy in the training environment (i.e., the solution to Problem \eqref{true_RLproblem}), by setting a lower tolerance and training for a longer duration. In this way, a balance between optimality and robustness is attained. 
Figure \ref{framework_com} illustrates a comparison between the proposed stage-wise RL framework and the Q-learning framework, the latter including the standard Q-learning as well as the DRRL Q-learning variants proposed in \cite{liu2022distributionally} and \cite{neufeld2024robust}.
Furthermore, the proposed stage-wise RL framework is suitable for streaming-data scenarios (see, e.g., \cite{wang2023bayesian}).  
In these settings, data are not available all at once but arrive continuously, and the information available before the task starts is limited or unreliable. The agent is required to learn while operating.
Under the stage-wise RL framework used in this paper, the agent continually updates its knowledge of the transition dynamics and manages the risks arising from model uncertainty during decision making, thereby producing adaptive and stable decisions at each stage.
Our empirical results demonstrate the advantage of the proposed approach in these settings. 
Problem \eqref{RLproblem} deeply connects with other RL frameworks:
\begin{itemize}
    \item[(1)] traditional RL emerges when $\chi=\delta_{\bar{q}}$ and $\rho_{q,\pi,t}$ is expectation;
    \item[(2)] risk-sensitive RL for DRP are obtained when $\chi=\delta_{\bar{q}}$ and $\rho_{q,\pi,t}$ is a coherent risk measure;
    \item[(3)] conventional Bayesian RL emerges when $\beta_{\chi}$ and $\rho_{q,\pi,t}$ are both expectation.
\end{itemize}
Furthermore, the framework in \cite{wang2023bayesian} is covered when $\beta_{\chi}$ is Value-at-Risk (VaR) or CVaR and $\rho_{q,\pi,t}$ is expectation. Our framework considers general posterior distributions and general inner/outer risk measures, going beyond all the cases mentioned above. It offers strong adaptability in both risk-sensitivity and robustness while enhancing learning by incorporating prior knowledge.

\begin{figure}[htbp]

\centering
\resizebox{0.8\linewidth}{!}{
\begin{tikzpicture}[
    x=1.4cm, y=1.4cm,
    >=Stealth,
    stagebox/.style={draw,dashed,rounded corners,minimum height=1.1cm},
    stepbox/.style={draw,minimum height=1.2cm},
    lab/.style={font=\scriptsize,align=center}
]

\node[font=\small\bfseries] at (1.5,6.7) {Stage-wise RL framework};
\node[font=\small\bfseries] at (1.5,2.3) {Q-learning framework};

\draw[stagebox] (0,2.6) rectangle (2.8,6.4);
\draw[stagebox] (3.2,2.6) rectangle (6.2,6.4);

\node[lab] at (1.4,6.2) {stage 1};
\node[lab] at (1.4,5.3) {\shortstack{Update posterior $\chi_{(1)}$\\using observations}};
\node[lab] at (1.4,4.15) {Value Iterations\\until convergence};
\node[lab] at (1.4,3.1) {\shortstack{Optimal policy $\pi^*_{(1)}$ \\ for Problem \eqref{RLproblem}}};

\draw[->] (1.4,5.0) -- (1.4,4.5);
\draw[->] (1.4,3.9) -- (1.4,3.4);

\node[lab] at (4.7,6.2) {stage 2};
\node[lab] at (4.7,5.3) {\shortstack{Update posterior $\chi_{(2)}$\\using observations}};
\node[lab] at (4.7,4.15) {Value Iterations\\until convergence};
\node[lab] at (4.7,3.1) {\shortstack{Optimal  policy $\pi^*_{(2)}$\\ for Problem \eqref{RLproblem}}};

\draw[->] (4.7,5.0) -- (4.7,4.5);
\draw[->] (4.7,3.9) -- (4.7,3.4);

\draw[->] (2.5,5.5) -- (3.5,5.5);

\node[lab] at (6.6,4.5) {$\cdots$};


\draw[stepbox] (0.1,0.0) rectangle (1.35,2.08);
\node[lab] at (0.75,1.9) {Step 1};
\node[lab] at (0.75,1.2) {\shortstack{Update \\$Q^{(1)}$ \\for one step}};
\node[lab] at (0.75,0.34) { Policy \\ $\pi_{(1)}$};
\draw[->] (0.75,0.9) -- (0.75,0.55);

\draw[stepbox] (1.75,0.0) rectangle (3.0,2.08);
\node[lab] at (2.45,1.9) {Step $n_1$};
\node[lab] at (2.45,1.2) {\shortstack{Update \\$Q^{(n_1)}$\\for one step}};
\node[lab] at (2.45,0.34) { Policy \\ $\pi_{(n_1)}$};
\draw[->] (2.45,0.9) -- (2.45,0.55);

\draw[stepbox] (3.15,0.0) rectangle (4.50,2.08);
\node[lab] at (3.85,1.9) {Step $n_1+1$};
\node[lab] at (3.85,1.2) {\shortstack{Update \\$Q^{(n_1+1)}$\\for one step}};
\node[lab] at (3.85,0.34) { Policy \\ $\pi_{(n_1+1)}$};
\draw[->] (3.85,0.9) -- (3.85,0.55);

\draw[stepbox] (4.85,0.0) rectangle (6.20,2.08);
\node[lab] at (5.55,1.9) {Step $n_1+n_2$};
\node[lab] at (5.55,1.2) {\shortstack{Update \\$Q^{(n_1+n_2)}$\\for one step}};
\node[lab] at (5.55,0.34) { Policy \\ $\pi_{(n_1+n_2)}$};
\draw[->] (5.55,0.9) -- (5.55,0.55);

\draw[->]
  (1.0,1.08) -- (1.4,1.08)
  (1.8,1.08) -- (2.1,1.08);
\node at ($ (1.4,1.08)!0.5!(1.8,1.08) $) {\bfseries$\cdots$};

\draw[->] (2.8,1.08) -- (3.4,1.08);

\draw[->]
  (4.1,1.08) -- (4.5,1.08)
  (4.9,1.08) -- (5.1,1.08);
\node at ($ (4.5,1.08)!0.5!(4.9,1.08) $) {\bfseries$\cdots$};

\node[lab] at (6.6,1.0) {${\cdots}$};

\end{tikzpicture}
}
\caption{Comparison between stage-wise RL and Q-learning frameworks}
\label{framework_com}
\end{figure}
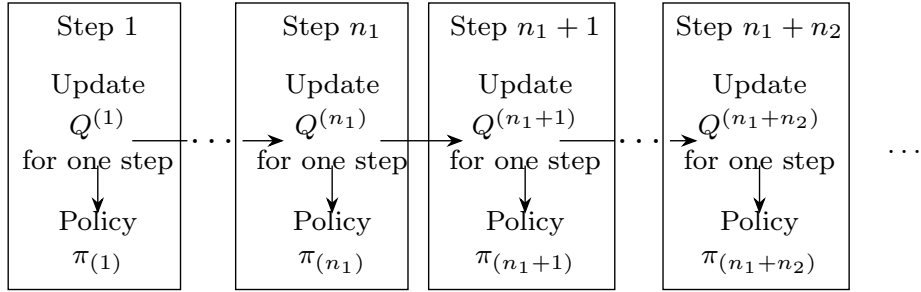

\section{Theoretical Results for RSRMDP}
\subsection{Bellman Equation}
As we have defined the value function above, we further define the Bellman operators on the value function for a fixed posterior $\chi$ as 
\begin{align}
\mathcal{J}_{\chi,\pi}V(s)
&= \beta_{p\sim\chi}\bigg( \sum_{a\in\mathcal{A}} \pi(a|s) \cdot \sigma(c(s,a,\cdot)+\gamma V(\cdot), p(\cdot|s,a)) \bigg),\notag \\
\mathcal{J}_{\chi}V(s)
&=\min_{a\in\mathcal{A}} \beta_{p\sim\chi}\bigg(  \sigma(c(s,a,\cdot)+\gamma V(\cdot), p(\cdot|s,a)) \bigg).\notag
\end{align}
Lemma \ref{lemma_1} demonstrates that both $\mathcal{J}_{\chi,\pi}$ and $\mathcal{J}_{\chi}$ are contraction mappings, a property instrumental in proving Theorem \ref{theorem_1} (the Bellman equation). The Bellman equation serves as a theoretical foundation for the MDP-based RL framework. While \cite{ruszczynski2010risk} establishes the Bellman equation for MDPs under recursive risk measures, our work extends this result by incorporating the posterior $\chi$ of the transition probability. 

\begin{lemma}
\label{lemma_1}
The Bellman operator $\mathcal{J}_{\chi,\pi}$ is uniformly contractive, i.e. \begin{equation}\begin{aligned}
      \left\Vert\mathcal{J}_{\chi,\pi} V_1-\mathcal{J}_{\chi,\pi} V_2\right\Vert_{\infty} \leq \gamma\Vert V_1-V_2\Vert_{\infty},\ \notag \left\Vert\mathcal{J}_{\boldsymbol{\chi}} V_1-\mathcal{J}_{\boldsymbol{\chi}} V_2\right\Vert_{\infty} \leq \gamma\Vert V_1-V_2\Vert_{\infty},\notag
        \end{aligned}\end{equation}
for any prior $\chi$ and policy $\pi,$ where $\left\Vert V\right\Vert_{\infty} = \max_{s\in\mathcal{S}}V(s).$
\end{lemma}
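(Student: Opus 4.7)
The plan is to exploit the standard one-Lipschitz property of coherent risk measures in the sup norm, and apply it twice (once for the inner $\sigma$, once for the outer $\beta$). Concretely, for any coherent risk measure $\rho$ on some $\mathscr{L}_p$ space, monotonicity together with translation invariance gives the elementary bound
\[
|\rho(X)-\rho(Y)| \le \|X-Y\|_\infty,
\]
since $Y-\|X-Y\|_\infty \le X \le Y+\|X-Y\|_\infty$ a.s.\ implies $\rho(Y)-\|X-Y\|_\infty \le \rho(X)\le \rho(Y)+\|X-Y\|_\infty$. This bound applies to the outer measure $\beta_{p\sim\chi}$ on $\mathscr{L}_{n_1}(\mathscr{P},\mathscr{B}(\mathscr{P}),\mathbb{P}^{\chi})$, and for each fixed $q\in\mathscr{P}$ and $(s,a)$ to the mapping $v\mapsto \sigma(v,q(\cdot\mid s,a))$ on $\mathscr{V}_0$, which inherits coherence from the one-step conditional risk measures $\rho_{q,\pi,t}$ via the stationary Markov representation.

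Given this, for each $s,a$ and $q\in\mathscr{P}$, taking $X=c(s,a,\cdot)+\gamma V_1(\cdot)$ and $Y=c(s,a,\cdot)+\gamma V_2(\cdot)$ yields $\|X-Y\|_\infty=\gamma\|V_1-V_2\|_\infty$, so
\[
\bigl|\sigma(c(s,a,\cdot)+\gamma V_1, q(\cdot\mid s,a)) - \sigma(c(s,a,\cdot)+\gamma V_2, q(\cdot\mid s,a))\bigr| \le \gamma\|V_1-V_2\|_\infty.
\]
Since $\pi(\cdot\mid s)$ is a probability distribution, the convex combination $\sum_a \pi(a\mid s)\,\sigma(\cdot)$ enjoys the same pointwise (in $q$) bound. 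Viewing the two resulting expressions as elements of $\mathscr{V}_1^{\chi}$ that differ in absolute value by at most the constant $\gamma\|V_1-V_2\|_\infty$ almost surely under $\mathbb{P}^{\chi}$, the outer Lipschitz bound applied to $\beta_{p\sim\chi}$ gives $|\mathcal{J}_{\chi,\pi}V_1(s)-\mathcal{J}_{\chi,\pi}V_2(s)|\le \gamma\|V_1-V_2\|_\infty$, and taking the supremum over $s$ proves the first inequality.

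For $\mathcal{J}_{\chi}$, the inner/outer chain yields the same per-action bound
\[
\bigl|\beta_{p\sim\chi}(\sigma(c(s,a,\cdot)+\gamma V_1,p(\cdot\mid s,a))) - \beta_{p\sim\chi}(\sigma(c(s,a,\cdot)+\gamma V_2,p(\cdot\mid s,a)))\bigr| \le \gamma\|V_1-V_2\|_\infty
\]
for every $a$, and combining with the elementary inequality $|\min_a f_1(a)-\min_a f_2(a)|\le \max_a|f_1(a)-f_2(a)|$ closes the argument.

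The only genuinely non-routine step is justifying that $\sigma(\cdot,q(\cdot\mid s,a))$ inherits monotonicity and translation invariance from the coherence of the conditional risk measures $\rho_{q,\pi,t}$; this is where I would invoke the stationary Markov risk-measure representation of \cite{ruszczynski2010risk}, which transfers the coherent axioms from $\rho_{q,\pi,t}$ to the transition mapping $\sigma$. Everything else is a direct two-layer application of the one-Lipschitz property, with the contraction factor $\gamma$ appearing solely through the scaling of $V$ inside the argument.
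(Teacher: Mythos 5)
Your proof is correct and follows essentially the same route as the paper's: a two-layer application of the sup-norm one-Lipschitz property of coherent risk measures (inner $\sigma$, then outer $\beta$), combined with the convex-combination bound over actions for $\mathcal{J}_{\chi,\pi}$ and the elementary $\min$-swap inequality for $\mathcal{J}_{\chi}$, with the transfer of the coherence axioms to $\sigma(\cdot,q(\cdot\mid s,a))$ handled exactly as the paper does by specializing the stationary Markov representation to a point-mass policy. The only cosmetic difference is that you derive the Lipschitz bound from monotonicity and translation invariance via a sandwich argument, whereas the paper first proves $|f(X_1)-f(X_2)|\le f(|X_1-X_2|)$ using subadditivity and then applies monotonicity; both are standard and equivalent here.
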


\begin{theorem}[Bellman equation]
\label{theorem_1}
There exists an optimal value $V_{\chi}^*(\cdot)$ such that for any policy $\pi$ and state $s\in\mathcal{S},$
$
  V_{\chi}^*(s)  \leq V_{\chi,\pi}(s),\notag
$
and $V_{\chi}^*$ satisfies $
  V =   \mathcal{J}_{\boldsymbol{\chi}} V.\notag
$
Moreover, an optimal policy $\pi_{\chi}^*$ exists.
\end{theorem}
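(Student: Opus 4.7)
The plan is to leverage the $\gamma$-contraction established in Lemma \ref{lemma_1} and apply Banach's fixed-point theorem to both operators. Because $\mathcal{S}$ is finite, $(\mathbb{R}^{|\mathcal{S}|},\|\cdot\|_{\infty})$ is complete, so there is a unique $V^*_\chi$ with $V^*_\chi = \mathcal{J}_\chi V^*_\chi$ and, for every Markov policy $\pi$, a unique $\widetilde V_{\chi,\pi}$ with $\widetilde V_{\chi,\pi} = \mathcal{J}_{\chi,\pi}\widetilde V_{\chi,\pi}$. What remains is then: (i) identifying the nested-risk object $V_{\chi,\pi}$ with the fixed point $\widetilde V_{\chi,\pi}$; (ii) establishing the comparison $V^*_\chi \le V_{\chi,\pi}$ for every Markov $\pi$; and (iii) producing an optimal policy $\pi^*_\chi$.

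For (i), I would truncate the infinite nested-risk recursion at horizon $n$, defining $V^{(n)}_{\chi,\pi}(s)$ as the $n$-stage nested risk with terminal value $0$. By the stationary Markov property of $\rho_{q,\pi,t}$ one gets the recursion $V^{(n+1)}_{\chi,\pi} = \mathcal{J}_{\chi,\pi} V^{(n)}_{\chi,\pi}$, so iterating Lemma \ref{lemma_1} yields $\|V^{(n)}_{\chi,\pi}-\widetilde V_{\chi,\pi}\|_{\infty}\le \gamma^n\|V^{(0)}_{\chi,\pi}-\widetilde V_{\chi,\pi}\|_{\infty}\to 0$. Simultaneously, $V^{(n)}_{\chi,\pi}\to V_{\chi,\pi}$: translation invariance and positive homogeneity of both the inner $\sigma$ and the outer $\beta_{p\sim\chi}$ allow one to pull the discount factor and any constant tail outside each layer of the nesting, while monotonicity together with Proposition \ref{p0} bounds the truncation error uniformly by $\gamma^n\bar C/(1-\gamma)$. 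Equating the two limits gives $V_{\chi,\pi}=\widetilde V_{\chi,\pi}$.

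For (ii) and (iii), define the greedy deterministic policy $\pi^*_\chi(s)\in\argmin_{a\in\mathcal A}\beta_{p\sim\chi}\bigl(\sigma(c(s,a,\cdot)+\gamma V^*_\chi(\cdot),\,p(\cdot|s,a))\bigr)$, which is well defined because $\mathcal A$ is finite. By construction $\mathcal{J}_{\chi,\pi^*_\chi}V^*_\chi=\mathcal{J}_\chi V^*_\chi=V^*_\chi$, so $V^*_\chi$ is the fixed point of $\mathcal{J}_{\chi,\pi^*_\chi}$, and hence $V^*_\chi=V_{\chi,\pi^*_\chi}$ by (i). For any deterministic Markov $\pi$, the pointwise inequality $\mathcal{J}_\chi V\le \mathcal{J}_{\chi,\pi}V$ is immediate since the left-hand side is a minimum over $a$; iterating the Bellman operators with monotonicity and passing to the fixed point then gives $V^*_\chi\le V_{\chi,\pi}$. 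To extend to a stochastic Markov $\pi$, write $\pi$ as a convex combination of deterministic policies and exploit the action-wise linear form of $\rho_{q,\pi,t}$ together with monotonicity of the coherent outer measure $\beta_{p\sim\chi}$ to reduce to the deterministic case.

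The hardest step I anticipate is (i): turning the formal infinite-depth recursion defining $V_{\chi,\pi}$ into a rigorous limit of finite-horizon truncations. Positive homogeneity is needed to commute $\beta_{p\sim\chi}$ past the discount factor $\gamma$, translation invariance is needed to absorb the terminal constant tail, and both must be applied consistently at every nesting level while keeping a \emph{uniform} bound on the truncation error along all trajectories. This is essentially a tower-type property for doubly-layered coherent risk measures and is considerably more delicate than the analogous argument for nested conditional expectations; everything else in the proof is a relatively routine consequence of the contraction in Lemma \ref{lemma_1} and finiteness of $\mathcal A$.
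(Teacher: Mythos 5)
Your overall skeleton coincides with the paper's: Lemma \ref{lemma_1} plus Banach's fixed-point theorem produces $V^*_\chi$, the one-step recursion identifies $V_{\chi,\pi}$ with the fixed point of $\mathcal{J}_{\chi,\pi}$, and the greedy deterministic policy built from $V^*_\chi$ yields $\pi^*_\chi$ with $V_{\chi,\pi^*_\chi}=V^*_\chi$. Your step (i) is in fact more careful than the paper, which simply unrolls the infinite nesting once and asserts $V_{\chi,\pi}=\mathcal{J}_{\chi,\pi}V_{\chi,\pi}$; your truncation-plus-tail-bound argument (monotonicity, translation invariance and positive homogeneity applied layer by layer, with Proposition \ref{p0} controlling the tail) is the right way to make that step rigorous.

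The genuine gap is in your reduction of the comparison $V^*_\chi\le V_{\chi,\pi}$ from deterministic to randomized $\pi$. You propose to write $\pi$ as a convex combination of deterministic policies and invoke monotonicity of $\beta_{p\sim\chi}$. What is actually needed is the one-step inequality $\mathcal{J}_\chi V(s)=\min_a\beta_{p\sim\chi}(\sigma_a)\le\beta_{p\sim\chi}\bigl(\sum_a\pi(a|s)\sigma_a\bigr)=\mathcal{J}_{\chi,\pi}V(s)$, where $\sigma_a:=\sigma(c(s,a,\cdot)+\gamma V(\cdot),p(\cdot|s,a))$ is a random variable in $p$. Monotonicity only gives $\beta\bigl(\sum_a\pi(a|s)\sigma_a\bigr)\ge\beta(\min_a\sigma_a)$, and coherence gives $\beta(\min_a\sigma_a)\le\min_a\beta(\sigma_a)$ --- the wrong direction; subadditivity and positive homogeneity provide an upper, not a lower, bound on the risk of a mixture. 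Concretely, if under $\chi$ two actions satisfy $\sigma_1=-\sigma_2=X$ with $X$ symmetric and nondegenerate and $\beta=\mathrm{CVaR}_{1/2}$, then $\beta(\tfrac12\sigma_1+\tfrac12\sigma_2)=\beta(0)=0<\min_a\beta(\sigma_a)$, so randomizing over actions strictly lowers the outer risk and the proposed reduction fails. (The paper's own proof is terse at exactly this point: it passes through $\beta_{p\sim\chi}(\min_a\sigma_a)$ and identifies this quantity with $\mathcal{J}_\chi V_{\chi,\pi}(s)$, which again uses the inequality in the unavailable direction.) To close the gap you would need either to restrict the optimality claim to deterministic policies, or to supply an additional structural argument (e.g., linearity of $\beta_{p\sim\chi}$, or a comonotonicity/rectangularity condition on $\{\sigma_a\}_{a}$ under $\chi$) that genuinely lower-bounds the outer risk of a policy mixture by the best single action.
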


\begin{corollary}
\label{coro1}
For any initial state distribution $\mu_0,$ $
        \pi_{\chi}^* = \arg\min_{\pi\in\Pi} \left\{\text{Risk}(\chi,\pi,\mu_0,\pi)\right\}$ a.s.
\end{corollary}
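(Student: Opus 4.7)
The plan is to derive Corollary 1 directly from Theorem 1 by expressing $\text{Risk}(\chi,\pi,\mu_0,\pi)$ in terms of the value function $V_{\chi,\pi}$ evaluated at a random initial state $S_0\sim\mu_0$, and then invoking the pointwise optimality of $V_\chi^*$.

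The crux is the identification
\begin{equation*}
\text{Risk}(\chi,\pi,\mu_0,\pi) = V_{\chi,\pi}(S_0)\qquad\text{with }S_0\sim\mu_0,
\end{equation*}
holding almost surely. To establish this, I would unwind the nested Risk expression from the inside out. The innermost operator $\rho_{p,\pi,0}$ is a stationary Markov risk measure with the closed-form expression $\sum_{a\in\mathcal A}\pi(a|S_0)\,\sigma(\,\cdot\,,p(\cdot|S_0,a))$, so its output is $\mathscr{F}_0$-measurable, i.e., a function of $S_0$ (and of $p$). The outer risk measure $\beta_{p\sim\chi}$ acts only on the $p$-coordinate, and therefore preserves the $S_0$-dependence. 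Iterating this pattern across all time steps yields an $\mathscr{F}_0$-measurable quantity; by the Markov property of $X^{q,\pi,\mu_0,\pi}$ and the defining identity $V_{\chi,\pi}(s)=\text{Risk}(\chi,\pi,\delta_s,\pi)$, the value of this quantity on the event $\{S_0=s\}$ agrees with $V_{\chi,\pi}(s)$ for every $s\in\mathcal S$.

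With this identification in hand, Theorem 1 immediately delivers the conclusion: $V_\chi^*(s)\leq V_{\chi,\pi}(s)$ for every $s\in\mathcal S$ and $\pi\in\Pi$, with equality at $\pi=\pi_\chi^*$. Consequently, for $\mu_0$-almost every $S_0$ and every $\pi\in\Pi$,
\begin{equation*}
\text{Risk}(\chi,\pi_\chi^*,\mu_0,\pi_\chi^*) = V_{\chi,\pi_\chi^*}(S_0) = V_\chi^*(S_0) \leq V_{\chi,\pi}(S_0) = \text{Risk}(\chi,\pi,\mu_0,\pi),
\end{equation*}
which is exactly the claimed $\arg\min$ statement, with the a.s. qualifier accounting for the $\mu_0$-randomness of $S_0$. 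The main obstacle is the identification step: the preceding exposition does not explicitly describe how Risk is assembled when $\mu_0$ is not a point mass, so one has to verify carefully that the recursive Risk structure and the recursive definition of $V_{\chi,\pi}$ coincide on each realization of $S_0$, using only that $\rho_{p,\pi,t}$ preserves $\mathscr{F}_t$-measurability and that $\beta_{p\sim\chi}$ acts solely on the transition-probability coordinate. Once this bookkeeping is complete, no additional contraction or fixed-point argument is needed.
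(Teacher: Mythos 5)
Your proposal is correct and follows essentially the same route as the paper: identify $\text{Risk}(\chi,\pi,\mu_0,\pi)$ with $V_{\chi,\pi}(s)$ on each event $\{S_0=s\}$, apply the pointwise optimality $V_{\chi,\pi^*_\chi}(s)\le V_{\chi,\pi}(s)$ from Theorem \ref{theorem_1}, and conclude almost surely since the events $\{S_0=s\}$, $s\in\mathcal{S}$, exhaust the probability space. The paper simply asserts the identification on $\{S_0=s\}$ without the measurability bookkeeping you sketch, so your additional care there is harmless elaboration rather than a divergence in method.
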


\subsection{Posterior Error Analysis }
The Bellman equation we derived is based on a particular posterior $\chi,$ which is an approach to solve Problem \eqref{RLproblem}. However, the optimal policy in the training environment corresponds to Problem \eqref{true_RLproblem}. This naturally leads us to examine how closely the objectives and optimal values of Problems \eqref{true_RLproblem} and \eqref{RLproblem} align.
In other words, we need to analyze the error in the risk quantification caused by model uncertainty.
Intuitively, the approximation quality between these problems should improve as the posterior $\chi$ becomes more accurate. To formally validate this intuitive approximation property, we first introduce Assumption \ref{ass_sigma} on the continuity of the inner risk measure $\rho$ w.r.t. transition probability, which equivalently requires the continuity of the mapping $\sigma$ w.r.t. its second argument.

\begin{assumption}
\label{ass_sigma}
 $\mathscr{\rho}_{q,\pi,t}$ is cross-section continuous w.r.t. $q,$ which is expressed by
\begin{equation}
    \left|\sigma(v,m_1)-\sigma(v,m_2)\right|\leq  B_{\sigma} \cdot \sum_{s'\in\mathcal{S}}\left|m_1(s')-m_2(s') \right|,\notag
\end{equation}
for any $m_1,m_2\in\mathscr{M},s\in\mathcal{S},v \in \mathscr{V}_0$ with $\max_{s\in\mathcal{S}}|v(s)|\leq \frac{\bar{C}}{1-\gamma},$ and some $B_{\sigma}>0.$
\end{assumption}

\begin{example}[Expectation]
\label{assum1_e}
    If $\sigma(v,m)=\sum_{s'\in\mathcal{S}}v(s')m(s'),$ we have 
    \begin{equation}
         \begin{aligned}
        \left|\sigma(v,m_1)-\sigma(v,m_2)\right| \leq \sum_{s'\in\mathcal{S}} v(s')\left|m_1(s')-m_2(s')\right|\leq \frac{\bar{C}}{1-\gamma} \sum_{s'\in\mathcal{S}}\left|m_1(s')-m_2(s') \right|.
    \end{aligned} \notag
     \end{equation}
\end{example}

\begin{example}[$\mathrm{CVaR}_{\alpha}$]
\label{assum1_cvar}
     If $\sigma(v,m)=\min_{y\in\mathbb{R}}\left\{y+\frac{1}{\alpha}\sum_{s'\in\mathcal{S}}(v(s')-y)^+m(s')\right\},$ we have
     \begin{equation}
         \begin{aligned}
        \left|\sigma(v,m_1)-\sigma(v,m_2)\right| &\leq \frac{2}{\alpha} \sup_{y\in\mathbb{R}}  \sum_{s'\in\mathcal{S}}(v(s')-y)^+|m_1(y)-m_2(y)|\\
        &\leq \frac{2\bar{C}}{\alpha(1-\gamma)} \sum_{s'\in\mathcal{S}}\left|m_1(s')-m_2(s') \right|.
    \end{aligned}\notag
     \end{equation}
\end{example}
It should be noted that in \cite{ruszczynski2010risk}'s original definition of Markov risk measures, no continuity assumption on the mapping $\sigma$ analogous to ours is imposed (even though the assumption of coherence implies $\sigma$ is continuous w.r.t. its first argument). This stems from the fact that in their framework (as with most other risk-sensitive RL literature), the transition probability is treated as fixed vectors. Assumption \ref{ass_sigma} is mild and holds for common risk measures including expectation (Example \ref{assum1_e}) and CVaR (Example \ref{assum1_cvar}). Under Assumption \ref{ass_sigma}, Theorem \ref{theorem_2} demonstrates that the bound of the value difference between the true transition probability and the posterior $\chi$ is dominated by the accuracy of $\chi,$ which is evaluated by $\max_{s\in\mathcal{S},a\in\mathcal{A}} \beta_{p\sim\chi} \left(\sum_{s\in\mathcal{S}}\left|p(s'|s,a)-\bar{q}(s'|s,a)\right|\right).$ 
\begin{theorem}
\label{theorem_2}
For any posterior $\chi$ and $\pi\in\Pi,$ 
\begin{equation}
\label{error_est}\begin{aligned}
       \left\Vert V_{\chi,\pi} - V_{\delta_{\bar{q}},\pi}\right\Vert_{\infty} \leq \frac{B_{\sigma}}{1-\gamma} \max_{s\in\mathcal{S},a\in\mathcal{A}} \beta_{p\sim\chi} \left(\sum_{s'\in\mathcal{S}}\left|p(s'|s,a)-\bar{q}(s'|s,a)\right|\right).
\end{aligned}\end{equation}
Furthermore, the conclusion still holds even if the left hand side of \eqref{error_est} is replaced by $ \Vert V^*_{\chi} - V^*_{\delta_{\bar{q}}}\Vert_{\infty}.$
\end{theorem}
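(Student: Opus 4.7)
The plan is to combine the Bellman equation (Theorem \ref{theorem_1}) with its contraction property (Lemma \ref{lemma_1}), the coherence axioms of $\beta_{p\sim\chi}$ (translation invariance, positive homogeneity, monotonicity, subadditivity), and the cross-section continuity of $\sigma$ from Assumption \ref{ass_sigma}. The key idea is that the value-difference bound decomposes into a contraction piece plus a one-step ``operator-gap'' term evaluated at a fixed deterministic value function, and Assumption \ref{ass_sigma} enters only in the latter.

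For the fixed-policy inequality, set $V_1 = V_{\chi,\pi}$ and $V_2 = V_{\delta_{\bar{q}},\pi}$. Since $\delta_{\bar{q}}$ is a point mass, $V_2$ is a deterministic function of $s$, and an application of Proposition \ref{p0} to the costs $\pm c$ gives $\|V_2\|_\infty \leq \bar{C}/(1-\gamma)$. Inserting the hybrid $\mathcal{J}_{\chi,\pi} V_2$ into the identities $V_1 = \mathcal{J}_{\chi,\pi} V_1$ and $V_2 = \mathcal{J}_{\delta_{\bar{q}},\pi} V_2$ yields
\[
|V_1(s) - V_2(s)| \leq |\mathcal{J}_{\chi,\pi} V_1(s) - \mathcal{J}_{\chi,\pi} V_2(s)| + |\mathcal{J}_{\chi,\pi} V_2(s) - \mathcal{J}_{\delta_{\bar{q}},\pi} V_2(s)|,
\]
the first summand being dominated by $\gamma\|V_1-V_2\|_\infty$ via Lemma \ref{lemma_1}. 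For the operator-gap term, translation invariance of $\beta_{p\sim\chi}$ lets me pull the deterministic quantity $\mathcal{J}_{\delta_{\bar{q}},\pi}V_2(s)$ inside the outer risk measure, so that the gap equals $\beta_{p\sim\chi}\bigl(\sum_a \pi(a|s)[\sigma(c(s,a,\cdot)+\gamma V_2, p(\cdot|s,a)) - \sigma(c(s,a,\cdot)+\gamma V_2, \bar{q}(\cdot|s,a))]\bigr)$. Assumption \ref{ass_sigma}, applied pointwise in $p$, bounds the absolute value of each bracket by $B_\sigma \sum_{s'}|p(s'|s,a) - \bar{q}(s'|s,a)|$, and combining monotonicity of $\beta$ (applied to both $\pm$ envelopes) with positive homogeneity plus subadditivity to extract the non-negative weights $\pi(a|s)$ produces
\[
|\mathcal{J}_{\chi,\pi} V_2(s) - \mathcal{J}_{\delta_{\bar{q}},\pi} V_2(s)| \leq B_\sigma \sum_{a\in\mathcal{A}} \pi(a|s)\, \beta_{p\sim\chi}\Bigl(\sum_{s'\in\mathcal{S}} |p(s'|s,a) - \bar{q}(s'|s,a)|\Bigr),
\]
which is in turn bounded by $B_\sigma \max_{s,a} \beta_{p\sim\chi}\bigl(\sum_{s'} |p(s'|s,a) - \bar{q}(s'|s,a)|\bigr)$. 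Taking the supremum over $s$ and solving the resulting contraction-type inequality gives \eqref{error_est}.

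The optimal-value variant follows by an essentially identical argument with $\mathcal{J}_\chi$ and $\mathcal{J}_{\delta_{\bar{q}}}$ replacing $\mathcal{J}_{\chi,\pi}$ and $\mathcal{J}_{\delta_{\bar{q}},\pi}$; the only new ingredient is the elementary bound $|\min_a f(a) - \min_a g(a)| \leq \max_a |f(a)-g(a)|$, which reduces the one-step gap to the per-action estimate above (without the convex combination in $\pi$). The main obstacle I anticipate is a clean handling of the absolute-value symmetrization inside $\beta_{p\sim\chi}$: coherent axioms provide monotonicity for pointwise inequalities between random variables but not directly for absolute values, so the $\pm$ envelopes must be treated separately and the identity $\beta(-Z) \geq -\beta(Z)$ (itself a consequence of subadditivity together with $\beta(0)=0$, which in turn follows from positive homogeneity) must be invoked. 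A secondary subtlety is that the extraction of $\sum_a \pi(a|s)$ from inside $\beta_{p\sim\chi}$ relies crucially on the non-negativity of $\pi(a|s)$ through positive homogeneity followed by subadditivity.
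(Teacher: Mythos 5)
Your proposal is correct and follows essentially the same route as the paper: both decompose the value difference into a $\gamma$-contraction term (Lemma \ref{lemma_1}) plus a one-step operator gap at the fixed value function $V_{\delta_{\bar q},\pi}$, control the gap via Assumption \ref{ass_sigma} together with the coherence axioms of $\beta_{p\sim\chi}$ (the paper packages the absolute-value symmetrization as $|f(X_1)-f(X_2)|\le f(|X_1-X_2|)$, which is exactly your $\pm$-envelope argument), and then solve the resulting inequality $\Vert V_1-V_2\Vert_\infty \le \gamma\Vert V_1-V_2\Vert_\infty + B_\sigma\max_{s,a}\beta_{p\sim\chi}(\cdot)$; the optimal-value variant is likewise handled in both via $|\min_a f-\min_a g|\le\max_a|f-g|$. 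The only difference is cosmetic: you apply the triangle inequality at the operator level before entering $\beta_{p\sim\chi}$, whereas the paper splits inside it.
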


\begin{corollary}
	\label{corol2} For any posterior $\chi$ and $\pi\in\Pi,$ initial state distribution $\mu_0$ and initial action distribution $\tau_0,$  $|\text{Risk}(\delta_{\bar{q}},\pi^*_{\bar{q}},\mu_0,\pi^*_{\bar{q}}) - \text{Risk}(\delta_{\bar{q}},\pi^*_{\chi},\mu_0,\pi^*_{\chi})| \leq \frac{2B_{\sigma}}{1-\gamma} \max_{s\in\mathcal{S},a\in\mathcal{A}} \beta_{p\sim\chi} \left(\sum_{s'\in\mathcal{S}}\left|p(s'|s,a)-\bar{q}(s'|s,a)\right|\right),$ almost surely.
\end{corollary}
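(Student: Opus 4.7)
The plan is to reduce the Risk gap on the LHS to a sup-norm gap between value functions under the true transition, and then bound that gap via a standard three-term telescoping argument that combines the optimality characterisation of Theorem \ref{theorem_1} with the posterior-perturbation bound of Theorem \ref{theorem_2}.

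First, I would unroll the definition of Risk at $\chi=\delta_{\bar{q}}$. Because the outer distribution is a Dirac mass, $\beta_{p\sim\delta_{\bar{q}}}$ acts as evaluation at $p=\bar{q}$, so the inner recursion collapses to the usual dynamic-risk recursion with transition $\bar{q}$. A straightforward induction in the horizon, combined with $V_{\delta_{\bar{q}},\pi}(s)=\text{Risk}(\delta_{\bar{q}},\pi,\delta_s,\pi)$, yields the linear-in-$\mu_0$ identity
$$
\text{Risk}(\delta_{\bar{q}},\pi,\mu_0,\pi)=\sum_{s\in\mathcal{S}}\mu_0(s)\,V_{\delta_{\bar{q}},\pi}(s).
$$
Consequently the LHS of the Corollary is dominated by $\bigl\|V_{\delta_{\bar{q}},\pi^*_{\bar{q}}}-V_{\delta_{\bar{q}},\pi^*_\chi}\bigr\|_\infty$, so it suffices to control this sup-norm quantity.

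Next, insert the mixed terms $V_{\chi,\pi^*_\chi}$ and $V_{\chi,\pi^*_{\bar{q}}}$ and write
$$
V_{\delta_{\bar{q}},\pi^*_\chi}(s)-V_{\delta_{\bar{q}},\pi^*_{\bar{q}}}(s)=\bigl[V_{\delta_{\bar{q}},\pi^*_\chi}(s)-V_{\chi,\pi^*_\chi}(s)\bigr]+\bigl[V_{\chi,\pi^*_\chi}(s)-V_{\chi,\pi^*_{\bar{q}}}(s)\bigr]+\bigl[V_{\chi,\pi^*_{\bar{q}}}(s)-V_{\delta_{\bar{q}},\pi^*_{\bar{q}}}(s)\bigr].
$$
The middle bracket is non-positive by Theorem \ref{theorem_1}, since $V_{\chi,\pi^*_\chi}=V^*_\chi\le V_{\chi,\pi^*_{\bar{q}}}$ pointwise. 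The first and third brackets are exactly of the form controlled by Theorem \ref{theorem_2} applied with $\pi=\pi^*_\chi$ and $\pi=\pi^*_{\bar{q}}$ respectively, each bounded in sup-norm by $\tfrac{B_\sigma}{1-\gamma}\max_{s,a}\beta_{p\sim\chi}\bigl(\sum_{s'}|p(s'|s,a)-\bar{q}(s'|s,a)|\bigr)$. Summing the three bounds produces the $\tfrac{2B_\sigma}{1-\gamma}$ prefactor claimed. In the opposite direction, Theorem \ref{theorem_1} applied at $\chi=\delta_{\bar{q}}$ gives $V_{\delta_{\bar{q}},\pi^*_{\bar{q}}}=V^*_{\delta_{\bar{q}}}\le V_{\delta_{\bar{q}},\pi^*_\chi}$, so the telescoped expression is already non-negative and the absolute value is handled.

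The main obstacle is a careful execution of the first step: one has to peel the Risk definition one layer at a time and exploit that $\beta_{p\sim\delta_{\bar{q}}}$ is evaluation at $\bar{q}$ to identify each inner conditional risk with the Bellman operator $\mathcal{J}_{\delta_{\bar{q}},\pi}$ acting on $V_{\delta_{\bar{q}},\pi}$, thereby justifying both the linearity in $\mu_0$ and the appearance of the value function. Once this reduction is secured, the remainder is a mechanical combination of Theorems \ref{theorem_1} and \ref{theorem_2}. In the Bayesian setting the posterior $\chi$ is itself a random measure, so $\pi^*_\chi$, the mixed value functions, and the right-hand bound are random objects; the deterministic chain of inequalities above holds pathwise for every realisation of $\chi$, which delivers the stated almost-sure conclusion.
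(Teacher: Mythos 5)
Your proposal is correct, and the skeleton (reduce the Risk gap to a sup-norm gap of value functions, then telescope and invoke Theorems \ref{theorem_1} and \ref{theorem_2}) matches the paper's. The decomposition differs in a way worth noting. The paper writes $V_{\delta_{\bar{q}},\pi^*_{\bar q}}-V_{\delta_{\bar{q}},\pi^*_{\chi}} = (V^*_{\delta_{\bar q}}-V^*_{\chi})+(V_{\chi,\pi^*_{\chi}}-V_{\delta_{\bar q},\pi^*_{\chi}})$, using that $V^*_{\chi}=V_{\chi,\pi^*_{\chi}}$ so the middle terms cancel exactly; this requires both halves of Theorem \ref{theorem_2}, namely the policy-fixed bound and the ``Furthermore'' clause bounding $\Vert V^*_{\chi}-V^*_{\delta_{\bar q}}\Vert_{\infty}$. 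Your three-term telescoping through $V_{\chi,\pi^*_{\chi}}$ and $V_{\chi,\pi^*_{\bar q}}$ instead uses only the policy-fixed half of Theorem \ref{theorem_2} (applied with $\pi=\pi^*_{\chi}$ and $\pi=\pi^*_{\bar q}$) and disposes of the middle bracket by the sign argument $V_{\chi,\pi^*_{\chi}}\le V_{\chi,\pi^*_{\bar q}}$ from Theorem \ref{theorem_1}; combined with $V_{\delta_{\bar q},\pi^*_{\bar q}}\le V_{\delta_{\bar q},\pi^*_{\chi}}$ this also resolves the absolute value cleanly, something the paper handles by taking absolute values termwise. Your route is marginally more economical in its dependencies and equally tight (both give the factor $2$). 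One small interpretive discrepancy: you assert the identity $\text{Risk}(\delta_{\bar q},\pi,\mu_0,\pi)=\sum_{s}\mu_0(s)V_{\delta_{\bar q},\pi}(s)$, whereas the paper treats $\text{Risk}$ as an $\mathscr{F}_0$-measurable random variable equal to $V_{\delta_{\bar q},\pi}(s)$ on the event $\{S_0=s\}$ --- this is precisely why the statement carries the ``almost surely'' qualifier. Your final bound via $\Vert V_{\delta_{\bar q},\pi^*_{\bar q}}-V_{\delta_{\bar q},\pi^*_{\chi}}\Vert_{\infty}$ dominates either reading, so nothing breaks, but the pathwise remark at the end of your argument should be tied to the randomness of $S_0$ rather than (only) to the randomness of $\chi$.
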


\section{Bayesian Dynamic Programming}
In the last section, the posterior $\chi$ is treated as fixed and we derive the Bellman equation under a given posterior  $\chi$ along with its error control. 
In this section, we consider a model-free algorithm combining Bellman-equation-based iterative learning with adaptive posterior updates. The method adopts a Bayesian learning framework, where the knowledge of the transition dynamics is updated as training progresses. At each stage, the algorithm optimizes a double-layered risk objective based on the current uncertainty in the transition probabilities, thus providing robustness within limited interactions. Over a long term, the posterior distribution converges to the true transition kernel in the training environment, ultimately achieving near-optimality in the training environment.

We start by showing how we update the posterior using Bayes' theorem. From now on, we assume that the prior $\chi$ has a positive p.d.f. $f_{\chi}$ w.r.t. Lebesgue measure $\nu$ on $\mathscr{P}.$ For a prior $\chi,$ when we have observed $x_{t:T}=(s_t,a_t,s_{t+1},\ldots,s_{T-1},a_{T-1},s_{T})$ from environment $X^{\bar{q},\pi,{\mu}_0,\pi}$ for some $\pi$ and $\mu_0,$ we can calculate the p.d.f. of the posterior $\chi|x_{t:T}$ using Bayes' theorem as
\begin{equation}\begin{aligned}
 f_{\chi|x_{t:T}}(p) 
     &\propto f_{\chi}(p)\prod_{\tau=t}^{T-1}\left.p\left(s_{{\tau}+1}\right|s_{\tau},a_{\tau}\right).
 \end{aligned}\notag\end{equation}

We propose a Bayesian dynamic programming (Bayesian DP) process. In this proposed process, we first randomly initialize $\chi_{(0)},$ $\hat{\pi}^{*}_{{(0)}}$ and $s_{(0)}.$ At the beginning of stage $u\ (u\geq 1),$ there are posterior $\chi_{(u-1)},$ policy $\hat{\pi}^{*}_{{(u-1)}}$ and the last state $s_{(u-1)}.$ We get observations containing $\Delta_{(u)}$ actions under $\mathbb{P}^{\bar{q},\hat{\pi}^{*}_{{(u-1)}},\delta_{s_{(u-1)}},\hat{\pi}^{*}_{{(u-1)}}},$ i.e., executing policy $\hat{\pi}^{*}_{{(u-1)}}$ for $\Delta_{(u)}$ steps starting from state $s_{(u-1)},$ and we denote the observations by $x_{(u)}$. Then we update the posterior $\chi_{(u)} = \chi_{(u-1)}|x_{(u)}$ and then iteratively execute
\begin{equation}\begin{aligned}
    &\hat{V}_{(u)}^0(s) = \hat{V}^{*}_{{(u-1)}}(s),\\
    &\hat{Q}_{(u)}^k(s,a) = \hat{\mathcal{J}}_{{\chi}_{(u)},\delta_a} \hat{V}^{k-1}_{{(u)}}(s),k\geq 1,\\
    &\hat{V}^{k}_{{(u)}}(s) =   \min_{a\in\mathcal{A}}\hat{Q}_{(u)}^k(s,a),k\geq 1,
\end{aligned}\notag\end{equation}
until $\left\Vert\hat{V}^{k_u-1}_{{(u)}}-\hat{V}^{k_u}_{{(u)}}\right\Vert_{\infty}<\theta$ for some $k_u\geq 1,$ where the tolerance $\theta>0$ is fixed. Then let $\hat{V}^{*}_{{(u)}}=\hat{V}^{k_u}_{{(u)}},$ and $\hat{\pi}^{*}_{{(u)}} (a|s) = 1-(1-\frac{1}{|\mathcal{A}|})\epsilon_{(u)}$ if $a=\arg\min_{a\in\mathcal{A}}\hat{Q}_{(u)}^{{k}_{u}}(s,a)$ and $\frac{\epsilon_{(u)}}{|\mathcal{A}|}$ otherwise. This Bayesian DP algorithm is summarized in Algorithm \ref{alg:bayes-dp}.

\begin{algorithm}[htb] \caption{Bayesian Dynamic Programming (Bayesian DP)} \label{alg:bayes-dp} \begin{algorithmic}[1] \STATE \textbf{Input:} initial posterior $\chi_{(0)}$, initial policy $\hat{\pi}^{*}_{(0)}$, initial state $s_{(0)}$; tolerance $\theta>0$; stage lengths $\{\Delta_{(u)}\}$; exploration schedule $\{\epsilon_{(u)}\};$ number of stages $L.$ \STATE \textbf{Initialize:} set $u \gets 1$, set $\hat{V}^{*}_{(0)}$ arbitrarily (or from prior). \FOR{stage $u=1,2,\ldots,L$} \STATE \textbf{Rollout/observe:} starting from $s_{(u-1)}$, execute $\hat{\pi}^{*}_{(u-1)}$ for $\Delta_{(u)}$ steps under $\mathbb{P}^{\bar{q},\,\hat{\pi}^{*}_{(u-1)},\,\delta_{s_{(u-1)}},\,\hat{\pi}^{*}_{(u-1)}}$; collect observations $x_{(u)}$ and last state $s_{(u)}$. \STATE \textbf{Posterior update:} $\chi_{(u)} \gets \chi_{(u-1)} \mid x_{(u)}$. \STATE \textbf{Value iteration initialize:} $\hat{V}^{0}_{(u)}(s) \gets \hat{V}^{*}_{(u-1)}(s)$; $k \gets 1$. \REPEAT \STATE \textbf{Q-update:} $\displaystyle \hat{Q}^{k}_{(u)}(s,a) \gets \hat{\mathcal{J}}_{\chi_{(u)},\,\delta_a}\big(\hat{V}^{\,k-1}_{(u)}\big)(s)$ \ \ for all $s,a$ using Algorithm \ref{alg:estimate-J}. \STATE \textbf{V-update:} $\displaystyle \hat{V}^{k}_{(u)}(s) \gets \min_{a\in\mathcal{A}} \hat{Q}^{k}_{(u)}(s,a)$ \ \ for all $s$. \STATE $k \gets k+1$. \UNTIL{$\left\Vert \hat{V}^{k-1}_{(u)} - \hat{V}^{k-2}_{(u)} \right\Vert_{\infty} < \theta$} \STATE Set $\hat{V}^{*}_{(u)} \gets \hat{V}^{k-1}_{(u)}$ \ (let $k_u \gets k-1$). \STATE \textbf{Policy update ($\epsilon$-greedy):} \STATE For each $s$, let $a^{*}(s) \in \arg\min_{a\in\mathcal{A}} \hat{Q}^{{k}_{u}}_{(u)}(s,a)$. \STATE Define \[ \hat{\pi}^{*}_{(u)}(a\mid s) = \begin{cases} 1-\bigl(1-\tfrac{1}{|\mathcal{A}|}\bigr)\epsilon_{(u)}, & \text{if } a=a^{*}(s),\\[4pt] \tfrac{\epsilon_{(u)}}{|\mathcal{A}|}, & \text{otherwise.} \end{cases} \] \STATE \textbf{Proceed:} $u \gets u+1$ and repeat. \ENDFOR \STATE \textbf{Output:} updated value $\hat{V}^{*}_{(u-1)}$ and policy $\hat{\pi}^{*}_{(u-1)}$ at termination. \end{algorithmic} \end{algorithm}

A key distinction between the framework proposed in this paper and \cite{wang2023bayesian} lies in our adoption of an $\epsilon$-greedy policy scheme. This is because our work focuses on on-policy RL tasks, where exploratory behaviors must be incorporated to prevent the agent from converging to suboptimal local policies and to discover potentially superior policies. More discussion about this can be found in \cite{sutton1998reinforcement}.

\subsection{Estimator for Bellman Operator}

A key step in the Bayesian DP process is to estimate Bellman operator $\mathcal{J}_{\chi,\delta_a}$ properly. In most existing works, the construction of the estimator is based on unbiasedness, i.e.,  $\mathbb{E} \hat{\mathcal{J}}_{\chi,\delta_a}V(s) = \mathcal{J}_{\chi,\delta_a}V(s).$ In conventional RL algorithms without risk-sensitivity, an unbiased estimator is easy to obtain since the Bellman operator is actually an expectation operator which is linear. Some examples are Q-learning (\cite{watkins1989learning}) and traditional Bayesian RL (\cite{rieder1975bayesian} and \cite{strens2000bayesian}). For the case where the optimization objective is a non-linear risk measure, \cite{wang2023bayesian} provide estimators for VaR and CVaR. These estimators are based on empirical distribution quantiles. However, their method does not have applicability for other coherent risk measures. In this paper, we propose estimators for the Bellman operators addressing a wide range of coherent risk measures (both $\rho$ and $\beta$). Our approach is based on the the concept of risk envelope using the Monte Carlo simulation and the convex optimization. 
Our estimator achieves strong consistency, surpassing conventional requirements of unbiasedness or asymptotic unbiasedness. 

To begin with, we introduce the risk envelope of a risk measure following \cite{delbaen2002coherent}. Given a set of random variable $\mathcal{X}$ w.r.t. probability measure $\mathbb{P},$ for any coherent risk measure $f:\mathcal{X}\to \mathbb{R},$ it holds that
$
    f(X) = \sup_{\mu\in\mathcal{C}_f}\mathbb{E}^{\mu} X=\sup_{\mu\in\mathcal{C}_f}\int X\mu(X)\mathrm{d}\mathbb{P},\ \forall X\in\mathcal{X}.
$
Here, $\mathcal{C}_f$ is called the risk envelope of the risk measure $f.$
 
\begin{assumption}[Risk envelope of $\rho$]
\label{AssRho}
The risk envelope of $\rho_{p,\pi,t}$ can be written as 
\begin{equation}
\mathcal{U}(m) = \left\{ 
\xi \in \mathscr{Y}_0 \,\middle|\, 
\begin{array}{l}
\sum_{s'} \xi(s') m(s') = 1,\quad \xi(s') \geq 0,\ \forall s'\in\mathcal{S}, \\
\xi(s') + f_{s'}(h,m) = 0,\ \forall s'\in\mathcal{S}, \\
g_i(h,m) \leq 0,\ \forall i \in \mathcal{I}
\end{array}
\right\}\notag
\end{equation}
and thus $
    \rho_{p,\pi,t}(v) = \max_{\xi \in\mathcal{U}(p(\cdot|S_t,A_t))} \sum_{s'\in\mathcal{S}}\xi(s')p(s'|S_t,A_t)v(s'),
    $ for any $p\in\mathscr{P},t\in\mathcal{T},\pi\in\Pi,$ and $v\in\mathscr{V}_0$ measurable w.r.t. $\mathscr{F}_{t+1}.$ 
Here $f_{\omega}$ are affine w.r.t. $h$ and $g_i$ are convex w.r.t. $h,$ and $\mathcal{I}$ is finite. 
\end{assumption}

\begin{assumption}[Risk envelope of $\beta$]
\label{AssBeta}
 The risk envelope of $\beta_{p\sim\chi}$ can be written as \begin{equation}
\mathcal{V}(\chi) = \left\{
\mu \in \mathscr{Y}_1 \,\middle|\,
\begin{array}{l}
\int_{\mathscr{P}} \mu(p) \, \mathrm{d}F_\chi(p) = 1,\quad \mu(p) \geq 0,\ \forall p \in \mathscr{P}, \\
w_k(\mu(p)) \leq 0,\ \forall p \in \mathscr{P},\ \forall k \in \mathcal{K}, \\
\int_{\mathscr{P}} g_e(\mu(p)) \, \mathrm{d}F_\chi(p) \leq 0,\ \forall e \in \mathcal{E}
\end{array}
\right\}\notag
\end{equation}
and thus for any $v\in \mathscr{V}_1,$ $
    \beta_{p\sim\chi}(v) = \max_{\mu\in\mathcal{V}(\chi)} \int_{\mathscr{P}} \mu(p) v(p) \mathrm{d}F_{\chi}(p).
$ Here, $w_k$ and $g_e$ are convex w.r.t. $\mu,$ and $\mathcal{K},$ $\mathcal{E}$ are finite. 

\end{assumption}

There are several remarks on Assumptions \ref{AssRho} and \ref{AssBeta}. First, it is worth noting that $\rho$ is based on a discrete probability distribution while $\beta$ is based on a continuous probability measure. Second, 
Assumption \ref{AssRho} is a more generalized form of the assumption used by \cite{tamar2016sequential} and \cite{coache2024reinforcement}. As noted by \cite{tamar2016sequential},``\textit{all coherent risk measures we are aware of in the literature are already captured by [that] risk envelope}." Our assumption is strictly broader, and in particular accommodates additional examples such as the semi-deviation risk measure. Third, while the assumption used by \cite{tamar2016sequential} and \cite{coache2024reinforcement} is limited to the discrete setting, Assumption \ref{AssBeta} can be regarded as its natural extension to continuous probability spaces. Such an extension is necessary because, although the MDP is finite and discrete, the posterior distribution of the transition probability is typically continuous, which also poses challenges for estimating the Bellman operator. Our assumption provides a reasonable extension, and it also includes the widely used risk measure $\mathrm{CVaR}$, which is also an important example in their work.

 Now we show how we estimate $\mathcal{J}_{\chi,\delta_a}V(s)$ for all $(s,a)\in\mathcal{S}\times\mathcal{A}$ and the algorithm is summarized in Algorithm \ref{alg:estimate-J}. We sample $p_1,p_2,\ldots,p_N$ independently from distribution $F_{\chi},$ and denote $(\mathbb{P}^{\chi})^{\infty}$ by  $\mathbb{P}^{\text{sample}}.$ For each $p_i$ and $(s,a)\in\mathcal{S}\times\mathcal{A},$ we first calculate $\sigma(c(s,a,\cdot)+\gamma V(\cdot),p_i(\cdot|s,a))$ by solving the convex optimization problem: \begin{equation}
\label{eq:sigma_opt}
\min_{h} \sum_{s'\in\mathcal{S}} f_{s'}(h,p) p(s') \big(c(s,a,s') + \gamma V(s')\big)
\quad \text{s.t.} 
\begin{cases}
\sum_{s'\in\mathcal{S}} f_{s'}(h,p)p(s') + 1 = 0, \\
f_{s'}(h,p) \leq 0, \quad \forall s'\in \mathcal{S}, \\
g_i(h,p) \leq 0, \quad \forall i \in \mathcal{I}.
\end{cases}\notag
\end{equation} Then we estimate $\mathcal{J}_{\chi,\delta_a}V(s)$ by solving the convex optimization problem: 
\begin{equation}
\label{opt_goal}
\min_{\hat{\mu}} -\frac{1}{N} \sum_{i=1}^N \hat{\mu}(p_i) \cdot \sigma\big(c(s,a,\cdot) + \gamma V(\cdot),\, p_i(\cdot|s,a)\big)
\quad \text{s.t.}
\begin{cases}
\frac{1}{N} \sum_{i=1}^N \hat{\mu}(p_i) = 1, \\
\hat{\mu}(p_i) \geq 0, \quad \forall\, i, \\
w_k(\hat{\mu}(p_i)) \leq 0, \quad \forall\, i,\, k \in \mathcal{K}, \\
\frac{1}{N} \sum_{i=1}^N g_e(\hat{\mu}(p_i)) \leq 0, \quad \forall\, e \in \mathcal{E}.
\end{cases}
\end{equation}
\begin{algorithm}[htb]
  \caption{Estimating $\mathcal{J}_{\chi,\delta_a}V(s)$ via Sampling and Convex Programs}
  \label{alg:estimate-J}
  \begin{algorithmic}[1]
    \STATE \textbf{Input:} posterior distribution function $F_{\chi}$, value function $V$, sample size $N$, constraint functions $f_{s'}(\cdot,\cdot)$, $g_i(\cdot,\cdot)$ for $i\in\mathcal{I}$, $w_k(\cdot)$ for $k\in\mathcal{K}$, $g_e(\cdot)$ for $e\in\mathcal{E}$.
    \STATE \textbf{Output:} estimates $\widehat{\mathcal{J}}_{\chi,\delta_a}V(s)$ for all $(s,a)\in\mathcal{S}\times\mathcal{A}$.

    \STATE \textbf{Sampling:} draw $p_1,\ldots,p_N \overset{\text{i.i.d.}}{\sim} F_{\chi}.$

    \FOR{each $(s,a)\in\mathcal{S}\times\mathcal{A}$} \label{line:loop-sa}
      \STATE \textbf{Inner convex programs:}
      \FOR{$i=1$ to $N$}
        \STATE Solve the convex optimization
        \[
          \sigma_i(s,a) \;:=\;
          \min_{h}\ \sum_{s'\in\mathcal{S}} f_{s'}(h,p_i)\,p_i(s'|s,a)\,\big(c(s,a,s')+\gamma V(s')\big)
        \]
        \[
        \text{s.t.}\quad
        \sum_{s'\in\mathcal{S}} f_{s'}(h,p_i)\,p_i(s'|s,a) + 1 = 0,\quad
        f_{s'}(h,p_i)\le 0\ \forall s'\in\mathcal{S},\quad
        g_j(h,p_i)\le 0,\quad \forall j\in\mathcal{I}.
        \]
        \STATE Store $\sigma_i(s,a)$.
      \ENDFOR

      \STATE \textbf{Outer convex program:}
      \STATE Solve for weights $\hat{\mu}^{\,s,a}(p_i)\ge 0$:
      \[
        \widehat{\mathcal{J}}_{\chi,\delta_a}V(s)
        \ :=\ \min_{\{\hat{\mu}(p_i)\}_{i=1}^N}\;
        -\frac{1}{N}\sum_{i=1}^N \hat{\mu}(p_i)\cdot \sigma_i(s,a)
      \]
      \[
        \text{s.t.}\quad
        \frac{1}{N}\sum_{i=1}^N \hat{\mu}(p_i)=1,\quad
        w_k(\hat{\mu}(p_i))\le 0\ \ \forall i,\ k\in\mathcal{K},\quad
        \frac{1}{N}\sum_{i=1}^N g_e(\hat{\mu}(p_i))\le 0\ \ \forall e\in\mathcal{E}.
      \]
      \STATE Let the optimal objective value be the estimate $\widehat{\mathcal{J}}_{\chi,\delta_a}V(s)$ for this $(s,a)$; record the optimizer $\hat{\mu}^{\,s,a}$ if needed.
    \ENDFOR

  \end{algorithmic}
\end{algorithm}

To establish the strong consistency of the estimator, we introduce the tool of equal-measure partition, whose existence is guaranteed by results in semi-discrete optimal transport. Based on this tool, we derive Lemma \ref{important_lem} and Proposition \ref{important_prop}, which together are used to prove the strong consistency of the estimator (Theorem \ref{theorem_convergenceBE}).  
We consider an $N$-partition of $\mathscr{P}$ based on the first $N$ samples. The $i$-th region is defined as  
$
D_i=\Bigl\{p\in\mathscr{P}:\ \Vert p-p_i\Vert-w_i \leq \Vert p-p_j\Vert-w_j,\ \forall j \Bigr\},
$
where $(w_1,w_2,\ldots,w_N)\in\mathbb{R}^N$ satisfies $\mathbb{P}^{\chi}(D_i)=\tfrac{1}{N}.$ The existence of such $(w_1,\ldots,w_N)$ is guaranteed by results on the Laguerre tessellation (see Theorem 1 in \cite{geiss2013optimally} and Theorem 2.31 in \cite{dieci2024solving}). This construction corresponds precisely to the dual problem of a semi-discrete optimal transport, i.e., transporting $\mathbb{P}^{\chi}$ to a discrete uniform distribution; see \cite{merigot2021optimal}.  
Furthermore, since we choose $\Vert\cdot\Vert$ instead of $\Vert\cdot\Vert_2$ as the cost function in our Laguerre tessellation, we have $p_i\in D_i$ (see Lemma 2.10 in \cite{dieci2024solving}). Given $\hat{\mu}_N(p_i)\in\mathbb{R}$ for $1\leq i \leq N$, we define  
$
\tilde{\mu}_N(p) = \sum_{i=1}^{N} \hat{\mu}_N(p_i)\,\mathds{1}_{D_i}(p).
$ 
It then follows that $\tilde{\mu}_N\in\mathscr{Y}_1^{\chi}.$

\begin{lemma}
\label{important_lem}
As $N$ goes infinity, $\max_{1\leq i\leq N}\text{diam}(D_i)\to0,$  $\mathbb{P}^{\text{sample}}$-almost surely.
\end{lemma}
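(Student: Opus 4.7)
The plan is to combine two ingredients: (i) almost-sure denseness of the samples in $\mathscr{P}$ as $N\to\infty$, and (ii) control of cell diameters by the sample covering radius via the semi-discrete optimal-transport structure underlying the Laguerre tessellation.

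\emph{Step 1 (Denseness of samples).} Since $\mathscr{P}$ is compact (a product of simplices) and $f_\chi$ is a strictly positive density with respect to Lebesgue measure, every open ball in $\mathscr{P}$ carries strictly positive $\chi$-measure. Fix $\epsilon>0$ and cover $\mathscr{P}$ by finitely many balls of radius $\epsilon$, each with $\chi$-mass at least some $\delta_\epsilon>0$. The probability that some ball of the cover contains no sample among the first $N$ draws is at most $M_\epsilon(1-\delta_\epsilon)^N$, summable in $N$. By Borel--Cantelli, $\mathbb{P}^{\text{sample}}$-a.s.\ every such ball eventually contains a sample; letting $\epsilon$ range over a countable sequence tending to $0$ yields
\[
r_N := \sup_{p\in\mathscr{P}}\min_{1\leq i\leq N}\|p-p_i\|\ \longrightarrow\ 0\qquad \mathbb{P}^{\text{sample}}\text{-a.s.}
\]

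\emph{Step 2 (Cell diameters shrink).} Substituting $p_i\in D_i$ into the defining inequality of $D_j$ and vice versa yields the basic weight bound $|w_i-w_j|\leq\|p_i-p_j\|$. For any $p\in D_i$, taking $p_j$ to be the sample nearest to $p$, the cell-defining inequality gives $\|p-p_i\|\leq r_N+(w_i-w_j)$. To bound the residual $w_i-w_j$ effectively, one exploits the equal-mass constraint $\chi(D_i)=1/N$: the pair $(\{D_i\},\{w_i\})$ is precisely the solution of the semi-discrete OT problem from $\chi$ to $\mu_N:=\tfrac{1}{N}\sum_i\delta_{p_i}$ with cost $\|\cdot\|$, and the $w_i$ are Kantorovich potentials, as in the cited results of Geiss et al.\ and Dieci--Omarov. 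Since $\mu_N\to\chi$ weakly a.s.\ on the compact set $\mathscr{P}$, the $W_1$-transport cost $\sum_i\int_{D_i}\|p-p_i\|\,d\chi\to 0$ a.s.; combined with $f_\chi$ bounded away from $0$ and $\infty$ on $\mathscr{P}$, the stability theory for semi-discrete OT upgrades this averaged bound to the uniform one $\max_i\sup_{p\in D_i}\|p-p_i\|\to 0$. Since $p_i\in D_i$, a triangle inequality then gives $\max_i\mathrm{diam}(D_i)\leq 2\max_i\sup_{p\in D_i}\|p-p_i\|\to 0$ a.s.

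\emph{Main obstacle.} The principal difficulty lies in Step 2. Using the basic weight bound together with the triangle inequality is tautological: plugging $|w_i-w_j|\leq\|p_i-p_j\|$ into $\|p-p_i\|\leq\|p-p_j\|+(w_i-w_j)$ with $p_j$ the nearest sample to $p$ reduces to the trivial $\|p-p_i\|\leq\|p-p_i\|+2r_N$. Going beyond this requires genuinely exploiting the equal-mass constraint --- i.e.\ the OT-optimality of the weights --- and invoking uniform stability estimates for the Kantorovich potentials along the convergence $\mu_N\to\chi$. This is the deepest step and the one most dependent on the Laguerre-tessellation / semi-discrete-OT literature cited in the excerpt.
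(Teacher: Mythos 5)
The paper states Lemma \ref{important_lem} without providing any proof in the appendix, so there is no argument of the authors' to compare yours against; I can only assess your proposal on its own terms.

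Your Step 1 is correct: compactness of $\mathscr{P}$, positivity of $f_{\chi}$ on the finitely many balls of a cover, and Borel--Cantelli give $r_N\to 0$ almost surely, and the reduction $\max_i\mathrm{diam}(D_i)\le 2\max_i\sup_{p\in D_i}\Vert p-p_i\Vert$ via $p_i\in D_i$ is also fine. The genuine gap is exactly where you locate it, in Step 2, and the resolution you sketch does not close it. The quantity $\sum_i\int_{D_i}\Vert p-p_i\Vert\,\mathrm{d}F_\chi(p)=W_1(\chi,\mu_N)$ does tend to $0$ almost surely, but only at rate roughly $N^{-1/d}$ with $d=\dim\mathscr{P}$, whereas each cell carries mass exactly $1/N$; a single cell of diameter $\ge\delta$ contributes at most $\mathrm{diam}(\mathscr{P})/N$ to the transport cost, which is negligible compared with $W_1(\chi,\mu_N)$. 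Hence the averaged bound is perfectly compatible with the persistence of one (or many) long, thin cells, and no amount of control on $f_\chi$ repairs this: the obstruction is a mismatch of scales, not of densities. (Boundedness of $f_\chi$ away from $0$ and $\infty$ is in any case an assumption the paper does not make.) The sentence ``the stability theory for semi-discrete OT upgrades this averaged bound to the uniform one'' is therefore carrying the entire burden of the proof, and it cannot be invoked as a black box here: for the cost $\Vert x-y\Vert$ (as opposed to the squared cost) the Kantorovich potentials are severely non-unique, and the uniform stability results you would need are precisely what is in question. A complete argument would have to show directly that the specific weights $(w_i)$ of the equal-measure tessellation become asymptotically constant (so that $\Vert p-p_i\Vert\le r_N+\max_{i,j}|w_i-w_j|$ closes the loop), or else exploit quantitatively a geometric property of these cells (for instance, each $D_i$ is star-shaped about $p_i$ for this cost) together with the equal-mass constraint. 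Neither is carried out, so Step 2 is a precise restatement of the difficulty rather than a proof of the lemma.
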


\begin{proposition}
\label{important_prop}
 Define  \begin{equation}\begin{aligned}
     \tilde{\mathcal{V}}_N (\chi)=  \Bigg\{\mu\in\mathscr{Y}_1:&{\mu}(p) = \sum_{i=1}^{N} \hat{\mu}(p_i)\mathds 1_{D_i},\\
     &\frac{1}{N}\sum_{i=1}^N\hat{\mu}(p_i)=1,\hat{\mu}(p_i)\geq 0,\forall 1\leq i\leq N,\\
    &w_k(\hat{\mu}(p_i))\leq 0,\forall 1\leq i\leq N,k\in\mathcal{K},\\
    &\frac{1}{N}\sum_{i=1}^N g_{e}(\hat{\mu}(p_i)) \leq 0,\forall e\in\mathcal{E}
    \Bigg\},
 \end{aligned}\end{equation}
 and the following conclusions hold $\mathbb{P}^{\text{sample}}$-almost surely:
\begin{itemize}

    \item[(1)] $\tilde{\mathcal{V}}_N (\chi)\subset{\mathcal{V}} (\chi),$ for any $N\geq 1.$ Therefore, $\tilde{\mathcal{V}}_N$ are uniformly $\Vert\cdot\Vert_{m_1}$-bounded, i.e.,
    \begin{equation}
        \sup_{N\geq 1}\sup_{\mu\in\tilde{\mathcal{V}}_N (\chi)} \Vert{\mu}\Vert_{m_1}< \infty;
    \end{equation}
     \item[(2)] Denote by $W_N$ the optimal value of  \eqref{opt_goal}. Then for any $(s,a)\in\mathcal{S}\times\mathcal{A},$ \begin{equation}\begin{aligned}
    \lim_{N\to\infty} \Bigg|W_N -\max_{\tilde{\mu}_N\in\tilde{\mathcal{V}}_N(\chi)}\int_{\mathscr{P}}\tilde{\mu}_N(p)\cdot\sigma(c(s,a,\cdot)+\gamma V(\cdot),p(\cdot|s,a))\d F_{\chi}(p)\Bigg| = 0;
    \end{aligned}\end{equation} 
     \item[(3)] $\tilde{\mathcal{V}}_N (\chi)\xrightarrow{\Gamma}\mathcal{V}(\chi)$ in $\mathscr{L}_{m_1} $ weak topology, i.e., for any $\mu\in\mathcal{V}(\chi),$ there exists a sequence $\left\{\tilde{\mu}_{N}\right\}^{\infty}_{N=1}$ with $\tilde{\mu}_N\in\tilde{\mathcal{V}}_N(\chi)$ such that $\tilde{\mu}_N \rightharpoonup\mu$ in $\mathscr{L}_{m_1} $ weak topology, and for any sequence $\left\{\tilde{\mu}_{N}\right\}^{\infty}_{N=1}$ with $\tilde{\mu}_N\in\tilde{\mathcal{V}}_N(\chi),$ there exists a subsequence $\left\{\tilde{\mu}_{N_k}\right\}^{\infty}_{k=1}$ such that $\tilde{\mu}_{N_k} \rightharpoonup\mu$ in $\mathscr{L}_{m_1} $ weak topology for some $\mu\in\mathcal{V}(\chi).$

    \end{itemize}
\end{proposition}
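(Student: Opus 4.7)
The plan is to handle the three assertions sequentially, exploiting the Laguerre tessellation identity $\mathbb{P}^{\chi}(D_i)=1/N$ and Lemma \ref{important_lem} throughout.

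For (1), the inclusion is established by directly verifying the defining constraints of $\mathcal{V}(\chi)$ for any $\mu=\sum_{i=1}^N \hat{\mu}(p_i)\mathds{1}_{D_i}\in\tilde{\mathcal{V}}_N(\chi)$. The key identity $\mathbb{P}^{\chi}(D_i)=1/N$ gives $\int_{\mathscr{P}}\mu\,\mathrm{d}F_{\chi}=\frac{1}{N}\sum_{i}\hat{\mu}(p_i)=1$ and $\int_{\mathscr{P}}g_e(\mu)\,\mathrm{d}F_{\chi}=\frac{1}{N}\sum_i g_e(\hat{\mu}(p_i))\le 0$; the non-negativity and the pointwise constraints $w_k(\mu(p))=w_k(\hat{\mu}(p_i))\le 0$ for $p\in D_i$ follow immediately. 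The uniform $\Vert\cdot\Vert_{m_1}$ bound is then inherited from $\mathcal{V}(\chi)$ itself, which is $\sigma(\mathscr{Y}_1^{\chi},\mathscr{V}_1^{\chi})$-compact (hence norm-bounded) as the dual representation of the coherent risk measure $\beta_{p\sim\chi}:\mathscr{V}_1^{\chi}\to\mathbb{R}$.

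For (2), I would rewrite the integral objective as $\int_{\mathscr{P}}\tilde{\mu}_N(p)\sigma(c+\gamma V,p(\cdot|s,a))\,\mathrm{d}F_{\chi}(p)=\sum_{i=1}^N\hat{\mu}(p_i)\int_{D_i}\sigma(c+\gamma V,p(\cdot|s,a))\,\mathrm{d}F_{\chi}(p)$ and compare it with the Monte Carlo objective $\frac{1}{N}\sum_i \hat{\mu}(p_i)\sigma(c+\gamma V,p_i(\cdot|s,a))$. Since $p_i\in D_i$ and $\mathbb{P}^{\chi}(D_i)=1/N$, the difference is
\[
\Bigl|\sum_i \hat{\mu}(p_i)\int_{D_i}\!\bigl[\sigma(c+\gamma V,p(\cdot|s,a))-\sigma(c+\gamma V,p_i(\cdot|s,a))\bigr]\,\mathrm{d}F_{\chi}(p)\Bigr|,
\]
which, by Assumption \ref{ass_sigma} and $\max_i\mathrm{diam}(D_i)\to 0$ (Lemma \ref{important_lem}), is bounded by $B_\sigma|\mathcal{S}|\cdot\max_i\mathrm{diam}(D_i)\cdot\frac{1}{N}\sum_i\hat{\mu}(p_i)=B_\sigma|\mathcal{S}|\cdot\max_i\mathrm{diam}(D_i)$, uniformly over feasible $\hat{\mu}$. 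Since the feasible set for $(\hat{\mu}(p_i))$ in \eqref{opt_goal} is identical to that parameterizing $\tilde{\mathcal{V}}_N(\chi)$, the two optimal values differ by a vanishing quantity, yielding the claim.

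For (3), the recovery direction uses the conditional expectation projection: given $\mu\in\mathcal{V}(\chi)$, set $\hat{\mu}(p_i):=N\int_{D_i}\mu\,\mathrm{d}F_{\chi}$, so that $\tilde{\mu}_N=\mathbb{E}^{\chi}[\mu\,|\,\sigma(D_1,\ldots,D_N)]$. Feasibility in $\tilde{\mathcal{V}}_N(\chi)$ follows from Jensen's inequality applied to the convex $w_k$ and $g_e$. Strong $\mathscr{L}_{m_1}$ convergence $\tilde{\mu}_N\to\mu$ is obtained by first proving it for continuous bounded $\mu$ via the modulus of continuity combined with Lemma \ref{important_lem}, then extending to general $\mu\in\mathscr{Y}_1^{\chi}$ using density of continuous functions and the contraction property of conditional expectation on $\mathscr{L}_{m_1}$; strong convergence implies weak convergence. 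The compactness direction proceeds as follows: by part (1), any sequence $\tilde{\mu}_N\in\tilde{\mathcal{V}}_N(\chi)$ is $\Vert\cdot\Vert_{m_1}$-bounded, so by reflexivity of $\mathscr{L}_{m_1}$ (since $m_1\in(1,\infty)$) and Banach--Alaoglu, a subsequence $\tilde{\mu}_{N_k}\rightharpoonup\mu$. The linear constraints $\int\mu\,\mathrm{d}F_{\chi}=1$ and $\mu\geq 0$ a.e. pass to the limit by testing against $\mathbf{1}$ and indicators $\mathds{1}_A$.

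\textbf{Expected main obstacle.} The delicate step is verifying the pointwise constraint $w_k(\mu(p))\le 0$ a.e. under only weak convergence. My plan is to apply Mazur's lemma to pass to convex combinations $\mu_k=\sum_{j\ge k}\lambda_{k,j}\tilde{\mu}_{N_j}$ that converge strongly (and along a subsequence, a.e.) to $\mu$; convexity of $w_k$ gives $w_k(\mu_k)\le \sum_j\lambda_{k,j}w_k(\tilde{\mu}_{N_j})\le 0$ pointwise, and continuity of $w_k$ then yields $w_k(\mu)\le 0$ a.e. The integral constraint $\int g_e(\mu)\,\mathrm{d}F_{\chi}\le 0$ is handled analogously via Mazur and Fatou (using lower semicontinuity of the convex $g_e$). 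The dependence of $\{D_i\}$ on the sample realization is handled by working $\mathbb{P}^{\text{sample}}$-almost surely, conditioning on the event of Lemma \ref{important_lem}.
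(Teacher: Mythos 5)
Your proposal is correct and follows essentially the same route as the paper's proof: part (1) by direct constraint verification using $\mathbb{P}^{\chi}(D_i)=1/N$, part (2) by comparing the Monte Carlo objective with the piecewise-constant integral and invoking Assumption \ref{ass_sigma} together with Lemma \ref{important_lem}, and part (3) by the conditional-expectation construction $\hat{\mu}(p_i)=N\int_{D_i}\mu\,\mathrm{d}F_{\chi}$ with Jensen's inequality for feasibility plus reflexivity for the compactness direction. The only differences are in how details are discharged --- you prove strong convergence of the projections via density of continuous functions and establish weak closedness of $\mathcal{V}(\chi)$ explicitly via Mazur's lemma, whereas the paper invokes the Lebesgue Differentiation Theorem and the weak sequential compactness of the convex, bounded set $\mathcal{V}(\chi)$ directly --- which are refinements of the same argument rather than a different approach.
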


\begin{theorem}[Strong Consistency of Bellman estimator]

\label{theorem_convergenceBE}
With probability $1,$ we have
\begin{equation}
        \lim_{N\to\infty} \max_{(s,a)\in\mathcal{S}\times\mathcal{A}}\left|\hat{\mathcal{J}}_{\chi,\delta_a}V(s) - \mathcal{J}_{\chi,\delta_a}V(s)\right| = 0,\notag
    \end{equation}
holds uniformly for any value function $V$ with $\Vert V\Vert_{\infty}\leq \frac{\bar{C}}{1-\gamma}.$ 
\end{theorem}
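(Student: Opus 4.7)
The plan is to combine the risk-envelope representation of $\mathcal{J}_{\chi,\delta_a}V(s)$ from Assumption \ref{AssBeta} with the $\Gamma$-convergence established in Proposition \ref{important_prop}, and then upgrade pointwise convergence to uniform convergence in $V$ via Lipschitz continuity. I would work throughout on the full-measure event $\Omega^{*}$ on which the conclusions of Lemma \ref{important_lem} and Proposition \ref{important_prop} all hold; since $\Omega^{*}$ is measurable only with respect to the sample sequence $(p_i)_{i\geq 1}$, it does not depend on $V$ or $(s,a)$.

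For fixed $V,s,a$, write $G(p):=\sigma(c(s,a,\cdot)+\gamma V(\cdot),p(\cdot|s,a))$, so $\mathcal{J}_{\chi,\delta_a}V(s)=\max_{\mu\in\mathcal{V}(\chi)}\int_{\mathscr{P}}\mu(p)G(p)\,\d F_\chi(p)$. Coherence of $\sigma$ forces $|G(p)|\leq \bar{C}/(1-\gamma)$ uniformly in $(s,a,V,p)$; hence $G\in\mathscr{L}_{n_1}(\mathbb{P}^\chi)$, and $\mu\mapsto \int\mu G\,\d F_\chi$ is continuous on the $\mathscr{L}_{m_1}$ weak topology. By Proposition \ref{important_prop}(2) it then suffices to show
\begin{equation*}
\max_{\tilde{\mu}\in\tilde{\mathcal{V}}_N(\chi)}\int \tilde{\mu}\,G\,\d F_\chi \;\longrightarrow\; \max_{\mu\in\mathcal{V}(\chi)}\int \mu\,G\,\d F_\chi .
\end{equation*}
For the liminf direction, I take an optimizer $\mu^{*}\in\mathcal{V}(\chi)$ and use the recovery-sequence half of Proposition \ref{important_prop}(3) to get $\tilde{\mu}_N\in\tilde{\mathcal{V}}_N(\chi)$ with $\tilde{\mu}_N\rightharpoonup\mu^{*}$, whence $\int \tilde{\mu}_N G\,\d F_\chi\to\int\mu^{*}G\,\d F_\chi$. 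For the limsup, I pick optimizers $\tilde{\mu}_N^{*}$ (attained by convexity and the uniform boundedness in Proposition \ref{important_prop}(1)), extract along any subsequence realizing the limsup a further weakly convergent sub-subsequence $\tilde{\mu}_{N_k}^{*}\rightharpoonup\mu^{**}$ with $\mu^{**}\in\mathcal{V}(\chi)$, and pass to the limit to obtain $\int \tilde{\mu}_{N_k}^{*}G\,\d F_\chi\to\int\mu^{**}G\,\d F_\chi\leq\max_\mu\int\mu G\,\d F_\chi$. Combining the two bounds gives pointwise convergence of $\hat{\mathcal{J}}_{\chi,\delta_a}V(s)$ to $\mathcal{J}_{\chi,\delta_a}V(s)$ for each fixed triple $(s,a,V)$.

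To upgrade to uniformity in $V$, I would exploit the fact that coherence of $\sigma$ implies $|\sigma(v_1,m)-\sigma(v_2,m)|\leq\|v_1-v_2\|_\infty$, so that both $V\mapsto \mathcal{J}_{\chi,\delta_a}V(s)$ and $V\mapsto \hat{\mathcal{J}}_{\chi,\delta_a}V(s)$ are $\gamma$-Lipschitz in $\|\cdot\|_\infty$ (the normalization $\frac{1}{N}\sum_i\hat{\mu}(p_i)=1$ and the analogous $\int\mu\,\d F_\chi=1$ supply the needed total-mass bounds). Covering the compact ball $\{V\in\mathbb{R}^{|\mathcal{S}|}:\|V\|_\infty\leq \bar{C}/(1-\gamma)\}$ by a finite $\varepsilon$-net and combining pointwise convergence on the net with the uniform $\gamma$-Lipschitz estimate yields uniform convergence in $V$; uniformity in $(s,a)$ is automatic since $\mathcal{S}\times\mathcal{A}$ is finite.

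The main obstacle is the limsup step: turning the $\Gamma$-convergence of feasible sets together with the uniform $\mathscr{L}_{m_1}$-boundedness into convergence of the maximum values of the linear functional requires carefully extracting a weakly convergent subsequence of optimizers and certifying that its limit lies in $\mathcal{V}(\chi)$—this is precisely where the full strength of Proposition \ref{important_prop}(3) (both the recovery sequence and the sequential pre-compactness with feasible limits) is indispensable. Once this is handled, the remaining ingredients—uniform boundedness of $G$, Lipschitz continuity in $V$, and the $\varepsilon$-net argument—are routine.
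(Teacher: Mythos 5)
Your proposal is correct and follows the same skeleton as the paper's proof: reduce via Proposition \ref{important_prop}(2) to comparing $\max_{\mu\in\tilde{\mathcal{V}}_N(\chi)}\int\mu G\,\d F_\chi$ with $\max_{\mu\in\mathcal{V}(\chi)}\int\mu G\,\d F_\chi$, obtain the $\liminf$ bound from the recovery-sequence half of Proposition \ref{important_prop}(3), and obtain the $\limsup$ bound from feasibility of the discrete optimizers. Two localized differences are worth noting. First, for the $\limsup$ you extract a weakly convergent subsequence of optimizers and certify that its limit lies in $\mathcal{V}(\chi)$; the paper does this more directly, observing that Proposition \ref{important_prop}(1) already gives $\mu_N^{*}\in\tilde{\mathcal{V}}_N(\chi)\subset\mathcal{V}(\chi)$, so $\int\mu_N^{*}G\,\d F_\chi\leq\max_{\mu\in\mathcal{V}(\chi)}\int\mu G\,\d F_\chi$ holds for every $N$ without any compactness argument — your claim that the sequential pre-compactness half of Proposition \ref{important_prop}(3) is ``indispensable'' here overstates matters, though your route is still valid. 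Second, for uniformity in $V$ the paper appeals to a Banach--Steinhaus argument (appended to the proof of Proposition \ref{important_prop}) asserting that the weak convergences are uniform over the ball $\Vert V\Vert_{\infty}\leq\bar{C}/(1-\gamma)$; your finite $\varepsilon$-net combined with the $\gamma$-Lipschitz dependence of both $\mathcal{J}_{\chi,\delta_a}$ and $\hat{\mathcal{J}}_{\chi,\delta_a}$ on $V$ (which follows from coherence of $\sigma$ and the normalization $\tfrac{1}{N}\sum_i\hat{\mu}(p_i)=1$) is more elementary and arguably more transparent, since it only requires pointwise almost-sure convergence at finitely many net points on an event that does not depend on $V$. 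Both routes establish the theorem.
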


\subsection{Convergence Analysis}
Our theoretical results are established within an on-policy learning framework with $\epsilon$-greedy exploration, which presents a distinct contrast to the offline learning in \cite{wang2023bayesian}. In their work, the data is generated by an externally specified policy that cannot be improved by the agent. Consequently, their method only guarantees convergence for state-action pairs visited infinitely often, while global convergence depends on the exploratory properties of the predetermined policy. The only requirement in this paper is the irreducibility of the state space $\mathcal{S}$ (i.e., any state $s\in\mathcal{S}$ is reachable from any other state with positive probability).
\begin{assumption}
\label{ass_4}
The state space $\mathcal{S}$ is irreducible, i.e. for any $s,s'\in\mathcal{S},$ there exist $s_1,s_2,\ldots,s_n\in\mathcal{S}$ and $a_0,a_1,a_2,\ldots,a_n\in\mathcal{A}$ such that
$
   \bar{q}(s_1|s,a_0)\bar{q}(s_2|s_1,a_1)\cdots \bar{q}(s_n|s_{n-1},a_{n-1})\bar{q}(s'|s_n,a_n)>0.\notag
$
\end{assumption} Building upon this assumption, we subsequently establish the convergence of the posterior distribution in Lemma \ref{theorem_posterior} and the convergence of the whole algorithm in Theorem \ref{theorem_convergenceALL}.
\begin{lemma}[Convergence of posterior]
\label{theorem_posterior}
  If $\inf_{u\geq 1}\epsilon_{(u)} >0 ,$  $\chi_{(u)}\to {\delta_{\bar{q}}}$ almost surely as $u\to\infty.$
\end{lemma}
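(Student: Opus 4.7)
The plan is to combine three ingredients: (i) an exploration-driven infinite-visitation argument using the uniformly $\epsilon$-greedy rollout policy together with Assumption \ref{ass_4}; (ii) a strong law for empirical transition frequencies along the subsequence of visits to each state-action pair; and (iii) classical Bayesian posterior consistency for a finite-dimensional identifiable model with a strictly positive prior. The target conclusion $\chi_{(u)}\to\delta_{\bar{q}}$ a.s. is interpreted as weak convergence, so it suffices to prove that for every open neighborhood $U$ of $\bar{q}$ in $\mathscr{P}$, $\chi_{(u)}(U^c)\to 0$ almost surely along the training trajectory.

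First I would establish that each pair $(s,a)$ is visited infinitely often almost surely. Since $\epsilon_{(u)}\geq \epsilon_0>0$ for all $u$, the rollout policy always assigns probability at least $\epsilon_0/|\mathcal{A}|$ to every action at every state. Combined with Assumption \ref{ass_4}, there exist constants $n\in\mathbb{N}$ and $\delta>0$, independent of the stage $u$ and the history, such that from any state the probability of visiting any target $(s',a')$ within the next $n$ steps is at least $\delta$. Applying a conditional second Borel--Cantelli argument to the concatenated trajectory yields $T_u(s,a)\to\infty$ a.s., where $T_u(s,a)$ is the visit count through stage $u$. Conditional on the visit times to $(s,a)$, the successor states are i.i.d. with law $\bar{q}(\cdot|s,a)$ by the Markov property, so the SLLN gives
\begin{equation}
\hat{q}_u(s'|s,a):=\frac{N_u(s,a,s')}{T_u(s,a)}\longrightarrow \bar{q}(s'|s,a)\quad\text{a.s.},\notag
\end{equation}
where $N_u(s,a,s')$ counts the observed transitions $(s,a)\mapsto s'$.

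Next I would iterate Bayes' rule to obtain
\begin{equation}
f_{\chi_{(u)}}(p)\;\propto\; f_{\chi_{(0)}}(p)\prod_{s,a,s'} p(s'|s,a)^{N_u(s,a,s')},\notag
\end{equation}
so the log-posterior equals, up to a normalizing constant,
\begin{equation}
\log f_{\chi_{(0)}}(p)+\sum_{s,a} T_u(s,a)\sum_{s'}\hat{q}_u(s'|s,a)\log p(s'|s,a).\notag
\end{equation}
Because the likelihood factorizes across $(s,a)$ as independent multinomials, and because Gibbs' inequality shows that $p(\cdot|s,a)\mapsto \sum_{s'}\bar{q}(s'|s,a)\log p(s'|s,a)$ is uniquely maximized at $\bar{q}(\cdot|s,a)$ with strict concavity, the population-limit log-posterior (using $\hat{q}_u\to\bar{q}$) has a unique strict maximum at $p=\bar{q}$. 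Since $T_u(s,a)\to\infty$ for every $(s,a)$ and $f_{\chi_{(0)}}$ is strictly positive on $\mathscr{P}$ (hence positive in every neighborhood of $\bar{q}$), a Laplace-type posterior-concentration argument, equivalently Schwartz's consistency theorem specialised to this finite-dimensional compact parameter space, yields $\chi_{(u)}(U)\to 1$ a.s. for every open neighborhood $U$ of $\bar{q}$. This is exactly weak convergence to $\delta_{\bar{q}}$.

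The main obstacle is the visitation step. Although each individual action is $\epsilon$-greedy, the policies $\hat{\pi}^{*}_{(u)}$ change across stages and depend on the entire past, so the process is time-inhomogeneous and one cannot directly invoke a stationary Markov-chain recurrence theorem. The argument therefore has to be phrased as a uniform lower bound on $n$-step hitting probabilities that is independent of the stage and of the history, after which conditional Borel--Cantelli delivers the almost-sure infinite visitation needed to feed into the SLLN and posterior-consistency steps.
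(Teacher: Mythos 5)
Your proposal is correct and follows essentially the same route as the paper's proof: a uniform lower bound on action probabilities from $\inf_u\epsilon_{(u)}>0$ combined with Assumption \ref{ass_4} to get almost-sure infinite visitation of every state--action pair, then the strong law for the empirical transition counts, then the Bayes factorization of the posterior density and a Gibbs-inequality/Laplace-type concentration argument at the unique maximizer $\bar{q}$. The only differences are cosmetic: you normalize counts by the per-pair visit numbers $T_u(s,a)$ and phrase the conclusion as $\chi_{(u)}(U)\to 1$ for neighborhoods $U$ of $\bar{q}$ (which is slightly cleaner than the paper's pointwise statement about the density), but the substance is the same.
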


\begin{theorem}[Convergence of Bayesian DP]
\label{theorem_convergenceALL} 
$
 \limsup_{u\to\infty,N\to\infty} \Vert \hat{V}^{*}_{{(u)}}-V^*_{\delta_{\bar{q}}}\Vert_{\infty}\leq \frac{\theta}{1-\gamma}\notag
 $ almost surely.
\end{theorem}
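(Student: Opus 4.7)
The plan is to decompose by the triangle inequality
\[
\Vert\hat{V}^{*}_{(u)} - V^{*}_{\delta_{\bar{q}}}\Vert_{\infty}
\;\leq\; \Vert\hat{V}^{*}_{(u)} - V^{*}_{\chi_{(u)}}\Vert_{\infty}
\;+\; \Vert V^{*}_{\chi_{(u)}} - V^{*}_{\delta_{\bar{q}}}\Vert_{\infty},
\]
and control the two pieces separately. The posterior-approximation term (the second) will vanish as $u \to \infty$ by Lemma \ref{theorem_posterior} combined with the continuity estimate of Theorem \ref{theorem_2}, while the algorithmic term (the first) will split further into a stopping-rule residual of $\gamma\theta/(1-\gamma)$ plus a Monte Carlo term that vanishes as $N\to\infty$ thanks to Theorem \ref{theorem_convergenceBE}.

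For the posterior term, the $\epsilon$-greedy schedule with $\inf_u \epsilon_{(u)} > 0$ activates Lemma \ref{theorem_posterior}, giving $\chi_{(u)} \to \delta_{\bar{q}}$ almost surely. Since $\beta_{p\sim\delta_{\bar{q}}}$ collapses to evaluation at $\bar{q}$, the right-hand side of Theorem \ref{theorem_2} tends to zero almost surely, yielding $\Vert V^{*}_{\chi_{(u)}} - V^{*}_{\delta_{\bar{q}}}\Vert_{\infty} \to 0$ almost surely.

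For the algorithmic term, I would first verify that the estimated operator $\hat{\mathcal{J}}_{\chi_{(u)}}$ is itself a $\gamma$-contraction in $\Vert\cdot\Vert_{\infty}$. Its representation as a min over actions of a max over non-negative, normalized weights---inherited from the envelope $\mathcal{U}$ in Assumption \ref{AssRho} and the sample-constrained envelope $\tilde{\mathcal{V}}_N$ in Proposition \ref{important_prop}---allows the proof of Lemma \ref{lemma_1} to carry over verbatim. Let $\hat{V}^{**}_{(u)}$ denote its unique fixed point. The standard a posteriori Banach contraction estimate combined with the termination rule $\Vert\hat{V}^{k_u}_{(u)} - \hat{V}^{k_u-1}_{(u)}\Vert_{\infty} < \theta$ yields $\Vert\hat{V}^{*}_{(u)} - \hat{V}^{**}_{(u)}\Vert_{\infty} \leq \gamma\theta/(1-\gamma)$. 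Defining
\[
\epsilon_{N,u} := \sup_{\Vert V\Vert_{\infty}\leq \bar{C}/(1-\gamma)} \max_{(s,a)\in\mathcal{S}\times\mathcal{A}} \bigl|\hat{\mathcal{J}}_{\chi_{(u)},\delta_a} V(s) - \mathcal{J}_{\chi_{(u)},\delta_a} V(s)\bigr|,
\]
the usual comparison between fixed points of two $\gamma$-contractions differing uniformly by $\epsilon_{N,u}$ gives $\Vert\hat{V}^{**}_{(u)} - V^{*}_{\chi_{(u)}}\Vert_{\infty} \leq \epsilon_{N,u}/(1-\gamma)$; and by Theorem \ref{theorem_convergenceBE}, $\epsilon_{N,u} \to 0$ almost surely as $N \to \infty$ for each $u$.

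Combining the three pieces and letting first $N\to\infty$ and then $u\to\infty$ almost surely gives the stated bound $\gamma\theta/(1-\gamma) \leq \theta/(1-\gamma)$. The principal difficulty is handling the joint limit: since $\chi_{(u)}$ is random and depends on the entire trajectory history, the almost-sure convergence in Theorem \ref{theorem_convergenceBE}---stated there for a fixed posterior---must be pieced together on a single probability-one event across the countable collection of stages, exploiting that each stage draws an independent Monte Carlo batch conditionally on $\chi_{(u)}$; this is what legitimizes the joint lim-sup interpretation of the theorem.
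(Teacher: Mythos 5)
Your proposal is correct and follows essentially the same route as the paper's proof: the same triangle-inequality decomposition into a posterior-approximation term (handled via Lemma \ref{theorem_posterior} and Theorem \ref{theorem_2}) and an algorithmic term (handled via the termination rule, a fixed-point comparison for the estimated operator, and Theorem \ref{theorem_convergenceBE}). Your two refinements---verifying the $\gamma$-contraction of $\hat{\mathcal{J}}_{\chi_{(u)}}$ directly from its min--max structure rather than inferring it for large $N$, and obtaining the slightly sharper residual $\gamma\theta/(1-\gamma)$---are sound but do not change the substance of the argument.
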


\subsection{Complexity Analysis for a Dirichlet Posterior and CVaR}
\label{Complexity}
Although we consider general prior and posterior distributions, the Dirichlet distribution remains a pivotal example, as widely adopted in most existing Bayesian RL works (see  \cite{strens2000bayesian}, \cite{poupart2006analytic}, \cite{asmuth2012bayesian}, \cite{osband2013more}, and \cite{wang2023bayesian}). 
The Dirichlet distribution is a continuous probability distribution defined on the $(K-1)$-dimensional simplex, parameterized by a $K$-dimensional vector $\alpha = (\alpha_1, \alpha_2, \dots, \alpha_K)$, where $\alpha_i > 0$. If a random vector $y = (y_1, y_2, \dots, y_K)$ follows a Dirichlet distribution with parameter $\alpha,$ the probability density function is defined as:
$
f(y) = \frac{\Gamma\left(\sum_{i=1}^K \alpha_i\right)} {\prod_{i=1}^K \Gamma(\alpha_i)} \prod_{i=1}^K y_i^{\alpha_i - 1},
$
where $y_i \geq 0,$  $\sum_{i=1}^K y_i = 1$ and $\Gamma(\cdot)$ denotes the Gamma function. 

For any $n_0,m_0 \in (1,\infty)$ with $\frac1{n_0}+\frac1{m_0}=1,$ we have
$
\mathcal{L}_{n_0}(\mathcal{S},2^{\mathcal{S}},\mathbb{P}_0)= \mathcal{L}_{m_0}(\mathcal{S},2^{\mathcal{S}},\mathbb{P}_0)=  \mathcal{L}_{\infty}(\mathcal{S},2^{\mathcal{S}},\mathbb{P}_0).
$
Thus, $\mathscr{M}$ is exactly the $(K-1)$-dimensional simplex. For fixed $s\in\mathcal{S}$ and $a \in \mathcal{A},$ we consider the transition probability $p\left(s,a\right)=(p\left.\left(s^1\right|s,a\right),p\left.\left(s^2\right|s,a\right),\ldots,p\left.\left(s^K\right|s,a\right))$ as a random vector supported on  the $K$-dimensional simplex and place a Dirichlet prior with parameter $\alpha\left(s,a\right)=\left(\alpha\left(s^1|s,a\right),\alpha\left(s^2|s,a\right),\ldots,\alpha\left(s^K|s,a\right)\right)$ on it. Moreover, we assume the prior for each $(s,a)$ is independent; we stack all $\alpha\left(s,a\right)$ into vector $\alpha.$ Then the prior $\chi$ can be represented as 
$
    p\sim\chi\sim \text{D}({\alpha}\left(s^1,a^1\right))\otimes \text{D}\left(\alpha(s^1,a^2\right))\otimes\cdots\otimes \text{D}({\alpha}\left(s^K,a^B\right)),
$
which we denote as $p\sim  \text{D}(\otimes \alpha)$ for short.  

When we observed $x_{t:T}=(s_t,a_t,s_{t+1},\ldots,s_{T-1},a_{T-1},s_T),$
based on prior $\chi = D(\otimes {\alpha}),$ we calculate the posterior by Bayes' formula as 
\begin{equation}\begin{aligned}
      f_{\chi|x_{t:T}}({p})
      &\propto   f_{\chi_t}({p}|{x}_{t:T})f({x}_{t:T}|{p})\\
     & \propto \prod_{s,a}\prod_{i=1}^K p\left.\left(s^i\right|s,a\right)^{\alpha(s^i|s,a)+m_{t:T}(s,a,s^i) - 1},
 \end{aligned}\end{equation}
which implies that the posterior for $p\left(s,a\right)$ follows a Dirichlet distribution with parameter $
{\alpha}\left(s,a\right)+{m}_{t:T}\left(s,a\right),
 $ where ${m}_{t:T}\left(s,a\right)=\left(m_{t:T}(s,a,s^1),\ldots,m_{t:T}(s,a,s^K)\right)$ and $m_{t:T}(s,a,s')$ denotes the number of occurrences of the state–action–next-state tuple $(s,a,s').$ We stack all ${m}_{t:T}\left(s,a\right)$ into the vector ${m}_{t:T}$ and we have $
     \chi|{x_{t:T}}\sim  D(\otimes ({\alpha}+{m}_{t:T})).
$ In Bayes DP RL process, given a prior Dirichlet parameter $\alpha_{(0)},$ we have $\chi_{(u)}\sim\text{D}(\alpha_{(u-1)}),$ where $\alpha_{(u)} = \alpha_{(u-1)}+m_{(u)}.$

From now on, we assume both the prior and the posterior on the transition probability follow a Dirichlet distribution. The prior Dirichlet parameters satisfy that $\bar{A}_0=\max_{s,a} \sum_{s'}\alpha(s'|s,a)$ remains bounded as $|\mathcal{S}|$ and $|\mathcal{A}|$ grow. The outer risk measures $\beta_{p \sim \chi}$ are chosen as $\mathrm{CVaR}_{\alpha_2}$, and the inner risk measures $\rho_{p,\pi,t}$ are chosen as $\mathrm{CVaR}_{\alpha_1}$. We also make an additional assumption that the costs are positive, i.e., $c(s,a,s')>0$ for any $(s,a,s')\in\mathcal{S}\times\mathcal{A}\times\mathcal{S}.$ Furthermore, we assume that the training environment provides a sufficient exploration coverage, as stated in Assumption \ref{policy_assum}, which guarantees that every state–action pair is visited with at least a polynomially small probability.
Building upon the preceding analysis of convergence and the additional assumptions, we now turn to the complexity analysis of the proposed algorithm. Specifically, we examine both its sample complexity and computational complexity.
\begin{assumption}
\label{policy_assum}
We assume that $\bar{q}$ satisfies the following coverage property: there exists a constant $T_0>0$ such that, for any stage-wise $\epsilon$-greedy policy $\pi$, any initial state distribution $\mu_0$, and all $(s,a)\in\mathcal{S}\times\mathcal{A}$, we have $
        \mathbb{P}^{\bar{q},\pi,\mu_0,\pi}(S_t=s,A_t = a) \geq \mu_{\min}>0,\ \forall t\geq T_0, \notag
    $ where $ 
        \mu_{\min}^{-1} = \mathcal{O}\left(|\mathcal{A}|^{\xi}|\mathcal{S}|^{\eta}\right), $ for some $\xi,\eta>0.$
\end{assumption}

\subsubsection{Sample Complexity}

In this subsection, we establish explicit bounds on the number of samples required
for the proposed learning procedure to achieve a prescribed tolerance. 
We quantify how large the total number of collected transitions $T$ must be in order to guarantee that the posterior optimal policy $\pi^*_{\chi_T}$ 
achieves achieves a risk value close to that of the oracle policy $\pi^*_{\bar{q}}$ in the training environment (Theorem \ref{thm:sample_complexity}). 

\begin{theorem}[Sample Complexity Bound]
\label{thm:sample_complexity}
It is sufficient that the total number of samples $T$ satisfies
\begin{equation}
T \;\ge\; T_0 +
\max\Bigg\{
\frac{16\bar{A}_0\bar{C}}{\mu_{\min}(1-\gamma)^2\alpha_1\alpha_2\theta}
,\,
\frac{128\,|\mathcal{S}|\,\bar{C}^2}
{\mu_{\min}(1-\gamma)^4\alpha_1^2\alpha_2^2\theta^2},\,
\frac{8}{\mu_{\min}}\ln\!\left(\frac{2|\mathcal{S}||\mathcal{A}|}{\delta}\right)
\Bigg\},
\notag
\end{equation} to guarantee that
$
\big|
\mathrm{Risk}(\delta_{\bar{q}},\pi^*_{\bar{q}},\mu_0,\pi^*_{\bar{q}})
-\mathrm{Risk}(\delta_{\bar{q}},\pi^*_{\chi_T},\mu_0,\pi^*_{\chi_T})
\big|
\leq \theta
\notag
$ with probability $1-\delta,$
\end{theorem}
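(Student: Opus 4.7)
The plan is to chain three estimates: (i) the posterior error bound already established in Corollary \ref{corol2}, (ii) a moment bound that converts the outer $\mathrm{CVaR}_{\alpha_2}$ into a posterior expectation and exploits Dirichlet concentration, and (iii) a trajectory-level Chernoff argument that translates the required visitation counts $N_T(s,a)$ into a bound on $T$ via Assumption \ref{policy_assum}.

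First, I would apply Corollary \ref{corol2} with $\chi=\chi_T$ to get
\[
\bigl|\mathrm{Risk}(\delta_{\bar{q}},\pi^*_{\bar{q}},\mu_0,\pi^*_{\bar{q}})-\mathrm{Risk}(\delta_{\bar{q}},\pi^*_{\chi_T},\mu_0,\pi^*_{\chi_T})\bigr|
\leq \frac{2B_\sigma}{1-\gamma}\max_{s,a}\mathrm{CVaR}_{\alpha_2}\!\left(\sum_{s'}|p(s'|s,a)-\bar{q}(s'|s,a)|\right),
\]
and plug in $B_\sigma=\tfrac{2\bar{C}}{\alpha_1(1-\gamma)}$ from Example \ref{assum1_cvar}. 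Because the inner bracket is nonnegative, $\mathrm{CVaR}_{\alpha_2}(Y)\leq\tfrac{1}{\alpha_2}\mathbb{E}[Y]$ reduces matters to bounding $\mathbb{E}_{p\sim\chi_T}\!\sum_{s'}|p(s'|s,a)-\bar{q}(s'|s,a)|$ uniformly in $(s,a)$, and the task becomes to drive this expectation below $\tfrac{\alpha_1\alpha_2(1-\gamma)^2\theta}{4\bar{C}}$.

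Second, I would perform a bias-variance split. Writing $\bar{p}_T(\cdot|s,a)$ for the posterior mean of the Dirichlet $D(\alpha_{(0)}(s,a)+m_T(s,a))$ and using $\mathbb{E}|p-\bar{q}|\leq|\bar{p}_T-\bar{q}|+\sqrt{\mathrm{Var}(p)}$, I obtain
\[
\sum_{s'}\mathbb{E}_{p\sim\chi_T}|p(s'|s,a)-\bar{q}(s'|s,a)|\;\leq\;\sum_{s'}|\bar{p}_T(s'|s,a)-\bar{q}(s'|s,a)|+\sqrt{\frac{|\mathcal{S}|}{\bar{A}_0(s,a)+N_T(s,a)+1}},
\]
where the last bound comes from Cauchy-Schwarz applied to $\sum_{s'}\sqrt{\bar{p}_T(1-\bar{p}_T)/(\bar{A}_0+N_T+1)}$ using $\sum_{s'}\bar{p}_T(1-\bar{p}_T)\leq 1$. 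The bias term decomposes as $\frac{\alpha_{(0)}-\bar{A}_0\bar{q}}{\bar{A}_0+N_T}+\frac{m_T-N_T\bar{q}}{\bar{A}_0+N_T}$; the prior contribution is deterministically at most $\frac{2\bar{A}_0}{N_T(s,a)}$, while the empirical residual is handled by a standard Hoeffding/DKW concentration for the empirical frequencies (absorbed into the $\sqrt{|\mathcal{S}|/N_T}$ scaling). Equating the prior term to half the tolerance budget gives $N_T(s,a)\geq\frac{8\bar{A}_0\bar{C}}{\alpha_1\alpha_2(1-\gamma)^2\theta}$, and equating the variance term gives $N_T(s,a)\geq\frac{64|\mathcal{S}|\bar{C}^2}{\alpha_1^2\alpha_2^2(1-\gamma)^4\theta^2}$.

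Third, I would transfer the required $N_T(s,a)$ to a bound on $T$. Under Assumption \ref{policy_assum}, for $t\geq T_0$ each visit indicator $\mathds{1}\{S_t=s,A_t=a\}$ has conditional mean at least $\mu_{\min}$, so a Chernoff bound together with a union over $|\mathcal{S}||\mathcal{A}|$ pairs yields $N_T(s,a)\geq\tfrac{1}{2}\mu_{\min}(T-T_0)$ simultaneously with probability $1-\delta$ as soon as $T-T_0\geq\frac{8}{\mu_{\min}}\ln(2|\mathcal{S}||\mathcal{A}|/\delta)$. Substituting this lower bound into the two requirements from the previous step (each multiplied by the factor $2/\mu_{\min}$) and taking the maximum with the visitation threshold produces exactly the stated three-way bound on $T$.

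The main obstacle I expect is coupling the Bayesian randomness of $p\sim\chi_T$ with the frequentist randomness of the rollout-driven count $m_T$: the bias term $|\bar{p}_T-\bar{q}|$ must be controlled on the same high-probability event on which the Chernoff visitation bound holds, which forces the union bound to carry both the empirical-frequency and the visitation concentration simultaneously. The second delicate point is keeping only $\sqrt{|\mathcal{S}|}$ rather than $|\mathcal{S}|$ in the variance contribution, which is what permits the sample complexity to scale linearly rather than quadratically in $|\mathcal{S}|$; this hinges on the Cauchy-Schwarz step applied to the per-coordinate Dirichlet standard deviations combined with the identity $\sum_{s'}\bar{p}_T(1-\bar{p}_T)\leq 1$.
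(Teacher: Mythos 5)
Your proposal follows essentially the same route as the paper's proof: Corollary \ref{corol2} combined with the bound $\mathrm{CVaR}_{\alpha_2}(X)\le \tfrac{1}{\alpha_2}\mathbb{E}X$ reduces everything to controlling $\max_{s,a}\mathbb{E}_{p\sim\chi_T}\|p(\cdot|s,a)-\bar{q}(\cdot|s,a)\|_1$, which the paper also splits into a posterior-concentration term (Dirichlet variance plus Cauchy--Schwarz, giving $\sqrt{|\mathcal{S}|/N_{s,a}}$), a prior-contamination term of order $\bar{A}_0/N_{s,a}$, and an empirical-frequency deviation term, before converting visitation counts to $T$ via Chernoff plus a union bound under Assumption \ref{policy_assum}. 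The only blemishes are constant-bookkeeping slips (your stated intermediate tolerance $\tfrac{\alpha_1\alpha_2(1-\gamma)^2\theta}{4\bar{C}}$ is off by a factor of $2$ from the per-term thresholds $\tfrac{8\bar{A}_0\bar{C}}{\alpha_1\alpha_2(1-\gamma)^2\theta}$ and $\tfrac{64|\mathcal{S}|\bar{C}^2}{\alpha_1^2\alpha_2^2(1-\gamma)^4\theta^2}$ that you then write down, although those thresholds themselves match the theorem), which do not affect the structure or validity of the argument.
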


\begin{corollary}[Asymptotic Sample Complexity]
\label{cor:sample_complexity_order}
The sufficient number of samples scales as
\begin{equation}
T = \mathcal{O}\!\left(|\mathcal{S}|^{\xi}\,|\mathcal{A}|^{\eta} \cdot
\frac{|\mathcal{S}|+\ln\left(\frac{|\mathcal{A}||\mathcal{S}|}{\delta}\right)}
{(1-\gamma)^4\,\theta^2} 
\right).
\notag
\end{equation}
\end{corollary}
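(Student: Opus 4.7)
The plan is to deduce the corollary directly from Theorem \ref{thm:sample_complexity} by substituting the polynomial upper bound $\mu_{\min}^{-1}=\mathcal{O}(|\mathcal{A}|^{\xi}|\mathcal{S}|^{\eta})$ supplied by Assumption \ref{policy_assum}, absorbing all parameters that do not appear in the claimed scaling (namely $\bar{C}$, $\bar{A}_0$, $\alpha_1$, $\alpha_2$, and the burn-in time $T_0$) into the $\mathcal{O}$ constant, and then identifying which of the three terms in the maximum dominates. The proof is not conceptually deep; it is an order-of-magnitude simplification of the explicit non-asymptotic bound.

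First I would examine the three candidates in the $\max$ separately. After substituting the bound on $\mu_{\min}^{-1}$, the first term scales as $\mathcal{O}\!\left(|\mathcal{A}|^{\xi}|\mathcal{S}|^{\eta}/((1-\gamma)^2\theta)\right)$, the second as $\mathcal{O}\!\left(|\mathcal{A}|^{\xi}|\mathcal{S}|^{\eta+1}/((1-\gamma)^4\theta^2)\right)$, and the third as $\mathcal{O}\!\left(|\mathcal{A}|^{\xi}|\mathcal{S}|^{\eta}\ln(|\mathcal{S}||\mathcal{A}|/\delta)\right)$. In the standard regime $\gamma\in(0,1)$ and $\theta\in(0,1]$, the factors $(1-\gamma)^{-1}$ and $\theta^{-1}$ are at least one, so the first term is strictly dominated by the second, and the constant $T_0$ can also be absorbed. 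Only the second and third terms therefore survive in the big-$\mathcal{O}$ expression.

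Next I would factor out the common $\mu_{\min}^{-1}=\mathcal{O}(|\mathcal{A}|^{\xi}|\mathcal{S}|^{\eta})$ and combine the two remaining contributions. The second term contributes the factor $|\mathcal{S}|/((1-\gamma)^4\theta^2)$, while the third contributes $\ln(|\mathcal{S}||\mathcal{A}|/\delta)$. Since $(1-\gamma)^{-4}\theta^{-2}\geq 1$, the sum of these factors is bounded above, up to absolute constants, by $(|\mathcal{S}|+\ln(|\mathcal{S}||\mathcal{A}|/\delta))/((1-\gamma)^4\theta^2)$, yielding exactly the claimed expression (up to the cosmetic interchange of the roles of $\xi$ and $\eta$ in the labelling of exponents).

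The argument is essentially algebraic bookkeeping rather than a substantive computation, so I do not anticipate a genuine obstacle. The only point that deserves care is verifying which of the three candidates inside the maximum is asymptotically dominant: the second term is the unique one carrying both $(1-\gamma)^{-4}$ and $\theta^{-2}$ together with an extra $|\mathcal{S}|$ factor, so it strictly dominates the first, whereas the third is retained precisely because it is the only term contributing the logarithmic dependence on $\delta$. Once these observations are made, the corollary follows by collecting the factors.
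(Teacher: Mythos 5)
Your proposal is correct and follows essentially the same route as the paper, which simply substitutes $\mu_{\min}^{-1}=\mathcal{O}(|\mathcal{A}|^{\xi}|\mathcal{S}|^{\eta})$ from Assumption \ref{policy_assum} into Theorem \ref{thm:sample_complexity} and absorbs the remaining constants; your more careful identification of the dominant term in the maximum (and the observation that the $\xi$/$\eta$ labels are swapped between the assumption and the corollary) only makes the bookkeeping explicit.
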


Corollary \ref{cor:sample_complexity_order} follows by substituting $\mu_{\min}^{-1} = \mathcal{O}(|\mathcal{A}|^{\eta}|\mathcal{S}|^{\xi})$ from Assumption~\ref{policy_assum}.
The result shows that the sample complexity of the proposed algorithm scales polynomially with all relevant problem parameters, including the number of states, actions, error tolerance and the discount factor, while depending only logarithmically on the confidence level~$1/\delta$.
Consequently, this provides a provable performance guarantee of the proposed algorithm.

\subsubsection{Computational Complexity}

We analyze the computational complexity of Algorithm \ref{alg:bayes-dp} under the below implementment of $\Delta_{(u)}$.
We implement the algorithm in a sweep-based manner: every newly observed state–action pair 
$(s,a)$ defines a new stage and is subsequently treated as ``known'' until the sweep completes.
When all pairs are ``known'', the current sweep is completed, and the next sweep begins following the same procedure.
We denote by $U_L$ the initial stage of the $L$-th sweep. The algorithm terminates when, within a single sweep, all stages (corresponding to all state–action pairs) converge within one step of value iteration.
This mechanism has been widely adopted in Bayesian RL algorithms, such as in \cite{brafman2002r} and \cite{asmuth2012bayesian}. In this case, we first provide a perturbation bound on the optimal value function induced by updates of Dirichlet parameters (Proposition \ref{lemma_DPcom}), then derive the number of value iterations needed to ensure convergence in each sweep (Proposition \ref{sweep_bound}), and further establish a global iteration bound (Theorem \ref{c1}). Next, we analyze the per-iteration computational cost, particularly under the CVaR risk measures (both inner and outer), where we present a closed-form solution for the optimization problems under CVaR (Proposition \ref{cvar_closed}) together with the corresponding computational complexity (Corollary \ref{c2}). All in all, these results characterize our algorithm’s overall computational complexity.

\begin{proposition}
\label{lemma_DPcom}
If ${m}$ is the sum of $\Delta $ one-hot vectors, then 
\begin{equation}
   \left\Vert V^*_{\text D(\otimes({\alpha}+{m}))} - V^*_{\text D(\otimes{\alpha})}\right\Vert_{\infty} \leq \frac{4 \bar{C} }{\alpha_1\alpha_2}\cdot\frac{|\mathcal{S}|^2|\mathcal{A}|}{(1-\gamma)^2}\ln\left(1+\frac{\Delta }{|\mathcal{S}||\mathcal{A}|O_{\alpha}}\right),
\end{equation}
with the confidence 
$
    O_{\alpha} = \min_{s,a} \sum_{s'\in\mathcal{S}} \alpha(s'|s,a).
$
\end{proposition}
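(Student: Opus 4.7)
The approach is to decompose the global Dirichlet perturbation into $\Delta$ single-observation updates and control each one by combining the Bellman contraction of Lemma~\ref{lemma_1} with an explicit Gamma-coupling between the two Dirichlet marginals. Enumerate the one-hot vectors as $e_{(s_t^*,a_t^*,s_t'^*)}$ for $t=1,\ldots,\Delta$, let $\chi_t$ denote the product-Dirichlet prior after the first $t$ increments (so that $\chi_0=\mathrm{D}(\otimes\alpha)$ and $\chi_\Delta=\mathrm{D}(\otimes(\alpha+m))$), and apply the triangle inequality
\begin{equation*}
\Vert V^*_{\chi_\Delta}-V^*_{\chi_0}\Vert_\infty \;\le\;\sum_{t=1}^{\Delta}\Vert V^*_{\chi_t}-V^*_{\chi_{t-1}}\Vert_\infty .
\end{equation*}
A standard perturbation-of-fixed-points argument based on Lemma~\ref{lemma_1} then yields $\Vert V^*_{\chi_t}-V^*_{\chi_{t-1}}\Vert_\infty\le(1-\gamma)^{-1}\Vert\mathcal{J}_{\chi_t}V^*_{\chi_{t-1}}-\mathcal{J}_{\chi_{t-1}}V^*_{\chi_{t-1}}\Vert_\infty$, reducing the task to a one-step Bellman-operator comparison.

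Since $\chi_t$ and $\chi_{t-1}$ differ only in the Dirichlet marginal of $p(\cdot|s_t^*,a_t^*)$ and $\sigma(c(s,a,\cdot)+\gamma V(\cdot),p(\cdot|s,a))$ depends on $p$ only through that marginal, every $\beta$-term with $(s,a)\neq(s_t^*,a_t^*)$ is unchanged; hence $\mathcal{J}_{\chi_t}V(s)=\mathcal{J}_{\chi_{t-1}}V(s)$ for all $s\neq s_t^*$, and the sup-norm collapses to $|\beta_{\chi_t}(\phi)-\beta_{\chi_{t-1}}(\phi)|$ with $\phi(p)=\sigma(c(s_t^*,a_t^*,\cdot)+\gamma V^*_{\chi_{t-1}}(\cdot),p(\cdot|s_t^*,a_t^*))$. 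Representing $\mathrm{CVaR}_{\alpha_2}$ through its risk envelope of densities bounded by $1/\alpha_2$ and combining with Assumption~\ref{ass_sigma}/Example~\ref{assum1_cvar} yields, under any coupling of $p_t\sim\chi_t$ and $p_{t-1}\sim\chi_{t-1}$,
\begin{equation*}
|\beta_{\chi_t}(\phi)-\beta_{\chi_{t-1}}(\phi)| \;\le\; \frac{2\bar{C}}{\alpha_1\alpha_2(1-\gamma)}\,\mathbb{E}\sum_{s'\in\mathcal{S}}|p_t(s')-p_{t-1}(s')| .
\end{equation*}

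To evaluate this expected $L^1$ distance I construct the Gamma-coupling: draw independent $W_{s'}\sim\mathrm{Gamma}(\alpha_{(t-1)}(s'|s_t^*,a_t^*),1)$ and an independent $U\sim\mathrm{Gamma}(1,1)$, and set $p_{t-1}(s')=W_{s'}/S$ and $p_t(s')=(W_{s'}+U\mathds{1}_{s'=s_t'^*})/(S+U)$ with $S=\sum_{s'}W_{s'}$. A direct algebraic expansion gives $\sum_{s'}|p_t(s')-p_{t-1}(s')|=2U(S-W_{s_t'^*})/[S(S+U)]$; using $U/(S+U)\sim\mathrm{Beta}(1,A_t)$ with $A_t=\sum_{s'}\alpha_{(t-1)}(s'|s_t^*,a_t^*)$, together with the independence of the Dirichlet vector $(W_{s'}/S)_{s'}$ from $S$, factorizes the expectation into $2\cdot(1+A_t)^{-1}\cdot(1-\alpha_{s_t'^*}/A_t)\le 2/(A_t+1)$. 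Writing $A_t=A_{s_t^*,a_t^*}^{(0)}+\kappa_{t-1}(s_t^*,a_t^*)$ with $\kappa_{t-1}(s,a)$ the number of earlier increments at $(s,a)$, and regrouping the telescoped sum by $(s,a)$, one is left with the harmonic-type series $\sum_{(s,a)}\sum_{j=1}^{\Delta(s,a)}(A_{s,a}^{(0)}+j)^{-1}$.

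Finally, the standard bound $\sum_{j=1}^n(A+j)^{-1}\le\ln(1+n/A)$ together with $A_{s,a}^{(0)}\ge O_\alpha$ dominates this series by $\sum_{(s,a)}\ln(1+\Delta(s,a)/O_\alpha)$, and Jensen's inequality applied to the concave map $x\mapsto\ln(1+x/O_\alpha)$ further bounds it by $|\mathcal{S}||\mathcal{A}|\ln(1+\Delta/(|\mathcal{S}||\mathcal{A}|O_\alpha))$, delivering the claimed estimate up to the $|\mathcal{S}|$-factor in the prefactor, which is absorbed by passing from $\sum_{s'}|p_t(s')-p_{t-1}(s')|$ to the cruder $|\mathcal{S}|\max_{s'}|p_t(s')-p_{t-1}(s')|$ before taking expectation. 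The main obstacle is the coupling comparison above: one has to check that the maximum-density representation of $\mathrm{CVaR}_{\alpha_2}$ really yields a coupling-wise Lipschitz bound \emph{uniform} in the value function embedded inside $\phi$, and then to produce a coupling whose expected $L^1$-mass scales as $1/A_t$; the Gamma-additivity together with the Beta-fraction $U/(S+U)\sim\mathrm{Beta}(1,A_t)$ is what makes this clean, after which the telescoping and Jensen/harmonic estimates are routine.
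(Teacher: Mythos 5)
Your proof is correct and rests on the same core ingredients as the paper's argument: the $(1-\gamma)^{-1}$ fixed-point perturbation bound derived from the contraction in Lemma~\ref{lemma_1}, the CVaR envelope/Lipschitz reduction (via Lemma~\ref{cvar_lemma} and Example~\ref{assum1_cvar}) to an expected $\ell^1$ distance between coupled Dirichlet marginals, the Gamma-additivity coupling giving a cost of $2/(A_t+1)$ per one-hot increment, the harmonic-sum-to-logarithm estimate, and the concluding Jensen step over $(s,a)$. The only structural difference is where the telescoping happens: you apply the triangle inequality to the $\Delta$ intermediate optimal value functions $V^*_{\chi_t}$ and perform a one-increment Bellman comparison at each step (exploiting that the two operators agree off $(s_t^*,a_t^*)$), whereas the paper makes a single value-function comparison and telescopes inside its Wasserstein bound (Lemma~\ref{Dirich_lemma2}), accumulating all increments of the Dirichlet parameter at once. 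The two organizations are interchangeable because the per-increment cost is identical, and both correctly regroup the sum by $(s,a)$ before invoking concavity of the logarithm. A minor bonus of your route is that it naturally yields the prefactor $|\mathcal{S}||\mathcal{A}|$ rather than the stated $|\mathcal{S}|^2|\mathcal{A}|$ (the paper's extra $|\mathcal{S}|$ enters as a loosening when it passes from the maximum over $(s,a)$ to the sum), so the claimed inequality follows a fortiori and your closing device of re-inserting an $|\mathcal{S}|\max_{s'}$ factor to match the stated constant is unnecessary.
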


\begin{corollary}[Iteration Bound per Stage]
\label{corol_DPcom}
In the $u$-th value iteration, the convergence is guaranteed once the number of iterations satisfies 
\[
k^{(u)} \;=\; \left\lceil \frac{1}{\ln\!\left(\tfrac{1}{\gamma}\right)} 
\ln\!\left(1+ \frac{4 \bar{C} }{\alpha_1\alpha_2}\cdot\frac{|\mathcal{S}|^2|\mathcal{A}|}{\theta (1 - \gamma)^2} 
\ln\!\left(1 + \frac{\Delta_{(u)}}{|\mathcal{S}||\mathcal{A}| O_{u}} \right) \right)\right\rceil,
\]
where $O_u$ denotes the confidence associated with the Dirichlet parameters at stage $u.$
\end{corollary}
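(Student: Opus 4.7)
The plan is to combine the perturbation bound of Proposition \ref{lemma_DPcom} with the $\gamma$-contraction of the optimal Bellman operator $\mathcal{J}_{\chi_{(u)}}$ established in Lemma \ref{lemma_1}, then solve the resulting geometric decay inequality for the smallest $k$ that meets the stopping criterion $\|\hat{V}^{k}_{(u)}-\hat{V}^{k-1}_{(u)}\|_{\infty}<\theta$.

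First I would set up the initialization error. Stage $u$ of Algorithm \ref{alg:bayes-dp} starts with $\hat{V}^{0}_{(u)}=\hat{V}^{*}_{(u-1)}$ and iterates $\mathcal{J}_{\chi_{(u)}}$, whose unique fixed point is $V^{*}_{\chi_{(u)}}$. Treating $\hat{V}^{*}_{(u-1)}$ as the exact fixed point of $\mathcal{J}_{\chi_{(u-1)}}$ (I return to the slack below), Proposition \ref{lemma_DPcom} applied to the Dirichlet update $\alpha_{(u)}=\alpha_{(u-1)}+m_{(u)}$ with $\sum m_{(u)}=\Delta_{(u)}$ yields
\[
\|\hat{V}^{0}_{(u)}-V^{*}_{\chi_{(u)}}\|_{\infty}=\|V^{*}_{\chi_{(u-1)}}-V^{*}_{\chi_{(u)}}\|_{\infty}\leq \frac{4\bar{C}}{\alpha_{1}\alpha_{2}}\cdot\frac{|\mathcal{S}|^{2}|\mathcal{A}|}{(1-\gamma)^{2}}\ln\!\left(1+\frac{\Delta_{(u)}}{|\mathcal{S}||\mathcal{A}|O_{u}}\right)=:B_{u}.
\]

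Next I would apply the contraction from Lemma \ref{lemma_1} inductively to obtain $\|\hat{V}^{k}_{(u)}-V^{*}_{\chi_{(u)}}\|_{\infty}\leq \gamma^{k}B_{u}$, and then bound the stopping increment by the triangle inequality:
\[
\|\hat{V}^{k}_{(u)}-\hat{V}^{k-1}_{(u)}\|_{\infty}\leq \|\hat{V}^{k}_{(u)}-V^{*}_{\chi_{(u)}}\|_{\infty}+\|V^{*}_{\chi_{(u)}}-\hat{V}^{k-1}_{(u)}\|_{\infty}\leq (\gamma^{k}+\gamma^{k-1})B_{u}\leq 2\gamma^{k-1}B_{u}.
\]
Requiring the right-hand side to be at most $\theta$ gives $\gamma^{k-1}\leq \theta/(2B_{u})$, i.e.\ $k\geq 1+\ln(2B_{u}/\theta)/\ln(1/\gamma)$. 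Since $\ln(1+x)\geq \ln x$ for $x>0$ and the ceiling adds at most one, the displayed bound
\[
k^{(u)} = \left\lceil \frac{1}{\ln(1/\gamma)}\ln\!\left(1+\frac{4\bar{C}}{\alpha_{1}\alpha_{2}}\cdot\frac{|\mathcal{S}|^{2}|\mathcal{A}|}{\theta(1-\gamma)^{2}}\ln\!\left(1+\frac{\Delta_{(u)}}{|\mathcal{S}||\mathcal{A}|O_{u}}\right)\right)\right\rceil
\]
is sufficient, with the constant factor $2$ and the additive $+1$ absorbed into the outer $\ln(1+\cdot)$.

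The main obstacle is that the identification $\hat{V}^{*}_{(u-1)}=V^{*}_{\chi_{(u-1)}}$ is not exact; the previous stage only terminates when its own increment is below $\theta$, so a standard contraction-based estimate gives $\|\hat{V}^{*}_{(u-1)}-V^{*}_{\chi_{(u-1)}}\|_{\infty}\leq \theta/(1-\gamma)$. I would propagate this inductively across stages and simply add the term $\theta/(1-\gamma)$ to $B_{u}$; because $B_{u}$ already contains a $(1-\gamma)^{-2}$ factor, this extra term is strictly lower order and is harmlessly absorbed into the outer $\ln(1+\cdot)$, leaving the stated iteration count unchanged. The estimation error from Algorithm \ref{alg:estimate-J} is handled analogously via the strong consistency result of Theorem \ref{theorem_convergenceBE}, so that for sufficiently large Monte Carlo sample size the estimated Bellman operator inherits the same contraction property up to a negligible perturbation.
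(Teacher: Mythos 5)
Your overall strategy (contraction plus the perturbation bound of Proposition \ref{lemma_DPcom}) is the right one, but there are two concrete points where your bookkeeping diverges from what the corollary actually asserts, and the second one is a genuine error. First, the paper's notion of ``convergence'' in this corollary is $\|V^{(k)}_u-V^{*}_{\mathrm{D}(\otimes(\alpha+m))}\|_\infty\le\theta$ (distance to the new fixed point), not the successive-iterate stopping test $\|\hat V^{k}-\hat V^{k-1}\|_\infty<\theta$ that you target. The paper writes $\|V^{(k)}_u-V^{*}_{\mathrm{new}}\|\le\gamma^{k}\big(\|V^{(0)}_u-V^{*}_{\mathrm{old}}\|+\|V^{*}_{\mathrm{old}}-V^{*}_{\mathrm{new}}\|\big)\le\gamma^{k}(\theta+B_u)$, and setting $\gamma^{k}(\theta+B_u)\le\theta$ gives exactly $k\ge\ln(1+B_u/\theta)/\ln(1/\gamma)$. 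In other words, the ``$1+$'' inside the outer logarithm is produced structurally by the $\theta$ of initialization error carried over from the previous stage; it is not a slot for absorbing constants. You instead set that initialization error to zero and later try to reintroduce it as a negligible correction, which inverts its actual role in the proof.

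Second, your final step does not close. Your criterion yields $k\ge 1+\ln(2B_u/\theta)/\ln(1/\gamma)$, and you claim the factor $2$ and the additive $1$ are ``absorbed into the outer $\ln(1+\cdot)$'' because $\ln(1+x)\ge\ln x$. But to conclude that the stated $k^{(u)}=\lceil\ln(1+B_u/\theta)/\ln(1/\gamma)\rceil$ suffices for your criterion you would need $\ln(1+B_u/\theta)\ge\ln(2B_u/\theta)$ (even after spending the ceiling's $+1$ on your additive $1$), i.e.\ $B_u\le\theta$. In the relevant regime $B_u>\theta$ this fails, so your argument only establishes the stated bound up to constants, not the formula in the corollary — which matters here because Proposition \ref{sweep_bound} and Theorem \ref{c1} sum these $k^{(u)}$ exactly. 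To repair it, adopt the paper's target $\|V^{(k)}_u-V^{*}_{\mathrm{new}}\|_\infty\le\theta$, keep the $\theta$ initialization slack explicitly, and solve $\gamma^{k}(\theta+B_u)\le\theta$; the stated expression then drops out with no constants to absorb.
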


Proposition \ref{lemma_DPcom} provides a bound scaling with $(1-\gamma)^{-2}$, which seems less favorable compared to the more commonly used and straightforward bound of $\frac{2\bar{C}}{1-\gamma}$ typically scaling with $(1-\gamma)^{-1}$. However, we emphasize that our results provide an iteration count that decays with respect to the stage number $u$. This decay is necessary for determining the number of active stages and is used to estimate the global iteration count of the algorithm over multiple stages (Theorem \ref{c1}). This additional $(1-\gamma)^{-1}$ factor also arises from handling the double-layered CVaR structure, and it affects the order of the global iteration bound as well. An interesting but open question is whether this additional factor can be eliminated without increasing the order of the active stages, that is, whether it is truly necessary. We leave this for future research.

\begin{proposition}[Iteration bound in a full sweep]
\label{sweep_bound}
In the $L$-th full sweep — that is, after all state–action pairs $(s,a)$ have been traversed once in stage range $[U_L, U_{L+1}]$ — the cumulative number of iterations required for convergence satisfies
\begin{equation}
    \sum_{u=U_L}^{U_{L+1}} k^{(u)} \leq  {|\mathcal{A}||\mathcal{S}|} \left(
\frac{1}{\ln\!\left(\tfrac{1}{\gamma}\right)}\ln\!\left( 1+\frac{4 \bar{C} }{\alpha_1\alpha_2}\cdot\frac{|\mathcal{S}|^2|\mathcal{A}|}{\theta (1 - \gamma)^2} 
\ln\!\left(1 + \frac{\sum_{u=U_L}^{U_{L+1}}\Delta_{(u)}}{|\mathcal{S}|^2|\mathcal{A}|^2 (O_{0}+L)} \right) \right)+1\right).
\end{equation}
\end{proposition}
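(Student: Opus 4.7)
The plan is to take the per-stage iteration bound from Corollary \ref{corol_DPcom}, sum over the stages in the $L$-th sweep, and aggregate via Jensen's inequality applied to a composition of two logarithms. Writing $\Phi(\Delta,O) := \frac{1}{\ln(1/\gamma)} \ln\!\left(1+\frac{4\bar{C}}{\alpha_1\alpha_2}\cdot\frac{|\mathcal{S}|^2|\mathcal{A}|}{\theta(1-\gamma)^2}\ln\!\left(1+\frac{\Delta}{|\mathcal{S}||\mathcal{A}|O}\right)\right)$, Corollary \ref{corol_DPcom} gives $k^{(u)} \leq \Phi(\Delta_{(u)},O_u) + 1$ (the $+1$ absorbing the ceiling). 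By the sweep construction, there are at most $|\mathcal{S}||\mathcal{A}|$ stages in $[U_L,U_{L+1}]$ (one per distinct $(s,a)$ becoming ``new''), so the aggregated ceiling terms contribute at most $|\mathcal{S}||\mathcal{A}|$ and account for the bare $+1$ inside the outer parenthesis of the target bound after factoring out $|\mathcal{S}||\mathcal{A}|$.

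Next I would uniformize the stage-dependent confidence $O_u$. Each completed sweep increments every Dirichlet pseudocount $\sum_{s'}\alpha(s'|s,a)$ by exactly one, so at any stage in the $L$-th sweep every $(s,a)$ already has pseudocount at least $O_0 + L$, giving $O_u \geq O_0 + L$ throughout $[U_L,U_{L+1}]$. Since $\Phi(\Delta,O)$ is decreasing in $O$, I may replace $O_u$ by this uniform lower bound $O_0+L$ without weakening the inequality.

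The main step is Jensen's inequality applied to $f(\Delta) := \ln\!\left(1 + c_1 \ln(1+c_2\Delta)\right)$ with $c_1 = \frac{4\bar{C}}{\alpha_1\alpha_2}\cdot\frac{|\mathcal{S}|^2|\mathcal{A}|}{\theta(1-\gamma)^2}$ and $c_2 = \frac{1}{|\mathcal{S}||\mathcal{A}|(O_0+L)}$. Concavity of $f$ follows because $\Delta\mapsto \ln(1+c_2\Delta)$ is concave and $x\mapsto \ln(1+c_1 x)$ is concave and increasing; moreover $f(0) = 0$. For the $N \leq |\mathcal{S}||\mathcal{A}|$ terms in the sweep, Jensen's inequality yields
\[
\sum_{u=U_L}^{U_{L+1}} f(\Delta_{(u)}) \;\leq\; N\, f\!\left(\tfrac{1}{N}\sum_{u=U_L}^{U_{L+1}} \Delta_{(u)}\right) \;\leq\; |\mathcal{S}||\mathcal{A}|\, f\!\left(\tfrac{1}{|\mathcal{S}||\mathcal{A}|}\sum_{u=U_L}^{U_{L+1}} \Delta_{(u)}\right),
\]
where the second inequality uses the monotonicity of the perspective map $t\mapsto t\,f(x/t)$ in $t$, valid for concave $f$ with $f(0)=0$ (which follows from the tangent-line estimate $f(y)\geq y f'(y)$). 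Expanding $f$ with the chosen $c_1,c_2$ produces precisely the double logarithm in the statement, with innermost denominator $|\mathcal{S}|^2|\mathcal{A}|^2(O_0+L)$ because $c_2\cdot\frac{1}{|\mathcal{S}||\mathcal{A}|} = \frac{1}{|\mathcal{S}|^2|\mathcal{A}|^2(O_0+L)}$. Dividing by $\ln(1/\gamma)$ and adding the accumulated $|\mathcal{S}||\mathcal{A}|$ from the ceilings gives the claim.

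The main obstacle is the concavity/perspective step: one must verify that the iterated logarithm $f$ is concave in $\Delta$ and vanishes at zero in order to apply the $t\,f(x/t)$ monotonicity, and then reconcile the two factors of $|\mathcal{S}||\mathcal{A}|$ — one arising from Jensen's averaging, the other from bounding the number of stages by $|\mathcal{S}||\mathcal{A}|$ — so that they combine cleanly into the $|\mathcal{S}|^2|\mathcal{A}|^2$ factor inside the innermost logarithm. The bookkeeping for $O_u \geq O_0 + L$ is routine given the sweep semantics but should be stated explicitly to justify pulling the confidence out of the summation.
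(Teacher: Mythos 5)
Your proposal is correct and follows essentially the same route as the paper: sum the per-stage bound from Corollary \ref{corol_DPcom}, use the monotonicity of the confidence ($O_u \geq O_0+L$ within the $L$-th sweep), and aggregate via Jensen's inequality with the stage count bounded by $|\mathcal{S}||\mathcal{A}|$. The only cosmetic difference is that you apply Jensen once to the composite concave function $\Delta\mapsto\ln(1+c_1\ln(1+c_2\Delta))$ together with the perspective-map monotonicity, whereas the paper applies Jensen separately to the outer and inner logarithms; both yield the same bound.
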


\begin{theorem}[Global iteration bound]
\label{c1}
The number of active stages is at rate $\mathcal{O}_{p}\Big(|\mathcal{S}|^{\xi+1}|\mathcal{A}|^{\eta}\cdot\frac{1}{\theta(1-\gamma)^3}\cdot\ln(|\mathcal{S}||\mathcal{A}|)\Big).$ The total number of value iterations is at rate 
\begin{equation}
{\mathcal{O}}_{p}\left(\frac{|\mathcal{S}|^{\xi+1}|\mathcal{A}|^{\eta}}{\theta(1-\gamma)^4}\cdot\ln\left(\frac{|\mathcal{S}|^{\xi+2}|\mathcal{A}|^{\eta+1}}{\theta(1-\gamma)^3}\ln\left(|\mathcal{S}||\mathcal{A}|\right)\right)\right) = \widetilde{\mathcal{O}}_{p}\left(\frac{|\mathcal{S}|^{\xi+1}|\mathcal{A}|^{\eta}}{\theta(1-\gamma)^4}\right).\notag
\end{equation}
Here, $\mathcal{O}_p$ denotes the order in probability, i.e., a probabilistic bound.
\end{theorem}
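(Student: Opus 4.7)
The plan is to chain together the per-stage iteration bound from Corollary \ref{corol_DPcom}, the per-sweep bound from Proposition \ref{sweep_bound}, and the coverage property in Assumption \ref{policy_assum}, translating the latter into a high-probability lower bound on the minimum visit count $O_u$ and a high-probability upper bound on the stage length $\Delta_{(u)}$ via concentration inequalities. I first determine, in terms of $O_u$ and $\Delta_{(u)}$, when a stage becomes \emph{inactive} (i.e.\ $k^{(u)}=1$); this pins down the total number of samples (and hence sweeps) needed for the termination condition to kick in, from which the stage count follows. Then I sum per-sweep iteration counts up to termination to obtain the total-iteration bound, with an extra factor of $(1-\gamma)^{-1}$ arising from $1/\ln(1/\gamma)=\Theta(1/(1-\gamma))$.

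\textbf{Bounding the number of active stages.} By Corollary \ref{corol_DPcom}, $k^{(u)}=1$ precisely when
\[
\frac{4\bar{C}\,|\mathcal{S}|^{2}|\mathcal{A}|}{\alpha_1\alpha_2\,\theta\,(1-\gamma)^{2}}\,\ln\!\left(1+\frac{\Delta_{(u)}}{|\mathcal{S}||\mathcal{A}|\,O_u}\right)\;\leq\;\frac{1}{\gamma}-1.
\]
Since $1/\gamma-1=\Theta(1-\gamma)$ and $\ln(1+x)\le x$, this reduces to a cleaner sufficient condition of the form $O_u\gtrsim \bar{C}\,|\mathcal{S}|\,\Delta_{(u)}/[(1-\gamma)^{3}\theta\,\alpha_1\alpha_2]$. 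Next, I would use Assumption \ref{policy_assum}: a Chernoff bound combined with a union bound over the $|\mathcal{S}||\mathcal{A}|$ pairs shows that, for $T_u\ge T_0$, every pair has been visited at least $\tfrac12\mu_{\min}T_u$ times with probability $1-o(1)$, so $O_u\ge \tfrac12\mu_{\min}T_u$; the same coverage lower bound, applied via a standard geometric/coupon-collector argument, yields $\Delta_{(u)}=\mathcal{O}_{p}(\mu_{\min}^{-1}\ln(|\mathcal{S}||\mathcal{A}|))$. Substituting these two bounds into the inactivity condition and solving for the total sample count shows that termination occurs once $T_u=\widetilde{\mathcal{O}}_{p}\!\left(|\mathcal{S}|\mu_{\min}^{-2}/[\theta(1-\gamma)^{3}]\right)$, whence the number of active stages is of order $\mu_{\min}\,T_u=\widetilde{\mathcal{O}}_{p}\!\left(|\mathcal{S}|\mu_{\min}^{-1}/[\theta(1-\gamma)^{3}]\right)$. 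Plugging $\mu_{\min}^{-1}=\mathcal{O}(|\mathcal{S}|^{\xi}|\mathcal{A}|^{\eta})$ recovers the stated rate.

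\textbf{Bounding the total iterations and the main obstacle.} For the total iteration count, I would apply Proposition \ref{sweep_bound} to each of the $L^{*}$ sweeps executed before termination. The per-sweep bound contributes $|\mathcal{S}||\mathcal{A}|\cdot(k^{(u)}_{\max}+1)$ iterations, and the factor $1/\ln(1/\gamma)=\Theta(1/(1-\gamma))$ inside $k^{(u)}_{\max}$ is what generates the additional $(1-\gamma)^{-1}$ compared to the active-stage rate; the $\widetilde{\mathcal{O}}_{p}$ notation in the statement then absorbs the slowly-varying log-of-log terms coming from the nested logarithms in Proposition \ref{sweep_bound}. The main obstacle, in my view, is not any single algebraic step but executing the concentration argument \emph{uniformly in $u$}: one needs a lower bound on $O_u$ and an upper bound on $\Delta_{(u)}$ that hold simultaneously across all stages in all sweeps up to the \emph{random} termination index $L^{*}$. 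This requires an Azuma/Hoeffding-type martingale concentration (to handle the dependence induced by the $\epsilon$-greedy policy under Assumption \ref{policy_assum}) combined with a union bound over the $|\mathcal{S}||\mathcal{A}|$ pairs and a stopping-time argument; the logarithmic slack in $\widetilde{\mathcal{O}}_{p}$ is precisely what accommodates this union bound.
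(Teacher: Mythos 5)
Your plan follows essentially the same route as the paper's proof: derive the inactivity condition from the per-stage/per-sweep iteration bounds, use a coupon-collector concentration argument under Assumption \ref{policy_assum} to control the sweep lengths, solve for the terminating sweep index to count active stages, and then sum the per-sweep bounds. Two execution points differ, and both favor the paper's version. First, you propose a Chernoff-plus-union-bound argument to lower-bound $O_u$ by $\tfrac12\mu_{\min}T_u$; the paper does not need any concentration here, because the sweep-based implementation guarantees deterministically that every state--action pair's Dirichlet mass increases by at least one per completed sweep, so $O_{U_L}\ge O_0+L$. Concentration is only required for the sweep lengths $\sum_{u=U_L}^{U_{L+1}}\Delta_{(u)}$, and the "uniformity over the random termination index" obstacle you correctly identify is handled in the paper by Lemma \ref{bianli}, which is exactly a geometric tail bound union-bounded over all $|\mathcal{S}||\mathcal{A}|\cdot L$ round events, yielding the $\ln(|\mathcal{S}||\mathcal{A}|L^*)$ factor that appears inside the final rate. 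Second, your "number of sweeps times max per-sweep count" bound would pick up an extra multiplicative $\ln(YZ)$ factor relative to the paper, which instead converts the sum over sweeps into an integral and invokes Lemma \ref{jianjin} to show $\int\ln\bigl(1+Y\ln(1+Z/x)\bigr)\,\mathrm{d}x=\mathcal{O}(YZ)$; your cruder bound still lands inside the $\widetilde{\mathcal{O}}_{p}$ statement but would not reproduce the sharper first display with a single logarithmic factor. Also, your identification of the active-stage count with $\mu_{\min}T_u$ is loose --- the paper counts it directly as $|\mathcal{S}||\mathcal{A}|L^*$ since each sweep contains exactly one stage per state--action pair --- though the resulting order is the same.
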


Next, we turn to the computational complexity of estimating the Bellman operator within each value iteration. In this part, we focus on the setting where both the inner and outer risk measures are chosen as $\mathrm{CVaR}$, consistent with the example presented in the introduction. The advantage of $\mathrm{CVaR}$ lies not only in its interpretability and rationality, but also in the fact that the corresponding optimization problem admits a closed-form solution, as established in Proposition \ref{cvar_closed}.

\begin{proposition}
    \label{cvar_closed}
The solution of the following problem
\begin{equation}
        \min_{h} \sum_{s'\in\mathcal{S}}  h(s')p(s') \big(c(s,a,s') + \gamma V(s')\big)
\quad \text{s.t.} 
\begin{cases}
\sum_{s'\in\mathcal{S}} h(s')p(s') + 1=0, \\
-\frac{1}{\alpha}-h(s') \leq 0,
\end{cases}\notag
    \end{equation}
is $h(s')=\frac{1}{\alpha} $ if  $c(s,a,s') + \gamma V(s')>\lambda$ and $h(s')=0$ if  $c(s,a,s') + \gamma V(s')\leq\lambda,$ where $\lambda$ is the $1-\alpha$ quantile of  $c(s,a,\cdot) + \gamma V(\cdot).$
\end{proposition}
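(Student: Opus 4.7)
The plan is to treat the problem as a linear program in $h$ with one affine equality constraint and box-type inequality constraints, and to apply KKT conditions to pin down the optimal solution as a threshold rule indexed by the Lagrange multiplier of the equality, which I will identify as the $(1-\alpha)$-quantile of $c(s,a,\cdot)+\gamma V(\cdot)$ under $p(\cdot\mid s,a)$.

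First I would form the Lagrangian
\begin{equation*}
L(h,\lambda,\mu) = \sum_{s'\in\mathcal{S}} h(s')\,p(s')\bigl(c(s,a,s')+\gamma V(s')\bigr) + \lambda\Bigl(\sum_{s'\in\mathcal{S}} h(s')p(s')+1\Bigr) + \sum_{s'\in\mathcal{S}}\mu(s')\Bigl(-\tfrac{1}{\alpha}-h(s')\Bigr),
\end{equation*}
with $\lambda\in\mathbb{R}$ and $\mu(s')\geq 0$. Since all constraints are affine, Slater's condition holds on the nonempty feasible set, so strong duality and the KKT conditions are necessary and sufficient for optimality. The stationarity equation $\partial L/\partial h(s')=0$ yields $\mu(s') = p(s')\bigl(c(s,a,s')+\gamma V(s')+\lambda\bigr)$, while complementary slackness forces $\mu(s')\bigl(h(s')+1/\alpha\bigr)=0$.

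Next I would read off the threshold structure. Whenever $c(s,a,s')+\gamma V(s')$ strictly exceeds $-\lambda$, stationarity gives $\mu(s')>0$ and complementary slackness pins $h(s')$ to one extreme of its admissible box; whenever it lies strictly below $-\lambda$, one must have $\mu(s')=0$ and $h(s')$ sits at the complementary extreme dictated by the box. Substituting this dichotomy into the equality constraint determines $\lambda$: the $p$-mass carried by the active set must be exactly $\alpha$, so $-\lambda$ coincides with the $(1-\alpha)$-quantile of $c(s,a,\cdot)+\gamma V(\cdot)$ under $p(\cdot\mid s,a)$. Renaming the threshold as in the statement then recovers the claimed piecewise-constant form of $h$, and plugging $h$ back in verifies primal feasibility and the envelope representation of $\mathrm{CVaR}_\alpha$.

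The main obstacle is the discreteness of $\mathcal{S}$, which requires a careful treatment of ties at the quantile: the level set $\{s':c(s,a,s')+\gamma V(s')=\lambda\}$ may carry strictly positive $p$-probability, in which case no sharp threshold yields an exact split. At such boundary states, the KKT system leaves $h(s')$ free, so I would assign a suitable fractional value on those states to meet $\sum_{s'} h(s')p(s')+1=0$ exactly; a direct check then confirms that this choice preserves primal feasibility, dual feasibility, and complementary slackness, and by strong duality the resulting $h$ attains the optimum.
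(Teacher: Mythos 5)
The paper itself supplies no proof of Proposition \ref{cvar_closed} in the appendix, so there is nothing to compare your argument against; it must stand on its own. Your choice of method (treat the problem as an LP, invoke strong duality for affine constraints, and read the threshold structure off the KKT system) is the natural and standard one, and your remark about ties at the quantile is a genuinely relevant point that the proposition's statement glosses over. However, as executed the argument has a concrete gap. Your Lagrangian contains multipliers only for the equality and for the single inequality $-\tfrac{1}{\alpha}-h(s')\le 0$, so each $h(s')$ ranges over the half-line $[-\tfrac{1}{\alpha},\infty)$, not a box: there is no ``complementary extreme'' for complementary slackness to pin $h(s')$ to when $\mu(s')=0$. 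Worse, your stationarity condition $\mu(s')=p(s')\bigl(c(s,a,s')+\gamma V(s')+\lambda\bigr)$ together with dual feasibility $\mu(s')\ge 0$ forces $c(s,a,s')+\gamma V(s')\ge-\lambda$ for \emph{every} $s'$, i.e.\ $-\lambda\le\min_{s'}\bigl(c(s,a,s')+\gamma V(s')\bigr)$, which is incompatible with $-\lambda$ being a nontrivial $(1-\alpha)$-quantile. The KKT system you wrote down therefore has no solution of the claimed threshold form. What is missing is the constraint $h(s')\le 0$ (the specialization of $f_{s'}(h,p)\le 0$ from Assumption \ref{AssRho} and Algorithm \ref{alg:estimate-J}, which has evidently been dropped from the proposition's display); with its multiplier $\nu(s')\ge 0$ the stationarity condition becomes $p(s')\bigl(c+\gamma V+\lambda\bigr)-\mu(s')+\nu(s')=0$ and the bang--bang dichotomy you want actually follows.

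Second, your closing claim that ``plugging $h$ back in verifies primal feasibility'' is asserted but not performed, and performing it exposes a sign problem you should have flagged: the stated solution $h(s')=+\tfrac{1}{\alpha}$ on $\{c+\gamma V>\lambda\}$ and $0$ elsewhere gives $\sum_{s'}h(s')p(s')=+\tfrac{1}{\alpha}\mathbb{P}(c+\gamma V>\lambda)\approx+1$, which violates the equality constraint $\sum_{s'}h(s')p(s')=-1$. The solution consistent with the constraint (and with the risk-envelope identity $\xi=-h$) is $h(s')=-\tfrac{1}{\alpha}$ on the upper tail. Moreover, one can check directly that for the problem exactly as displayed --- lower bound only --- the true minimizer is \emph{not} the threshold rule at all: starting from the (sign-corrected) threshold solution, shifting mass from a state with $c+\gamma V=\lambda$ down to $-\tfrac{1}{\alpha}$ while compensating on the state of minimal $c+\gamma V$ strictly decreases the objective. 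So a correct proof must first repair the constraint set and the sign, and only then does the KKT bookkeeping close; your proposal reproduces the statement's conclusion by implicitly assuming the two-sided box rather than deriving it.
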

\begin{corollary}
\label{c2}
Each value iteration incurs a computational complexity of
$$
\mathcal{O}\!\left(|\mathcal{A}|\cdot N\cdot|\mathcal{S}|\cdot\big(|\mathcal{S}|\log|\mathcal{S}|+\log N\big)\right).
$$
\end{corollary}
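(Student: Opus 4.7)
The plan is to leverage the closed-form reduction of Proposition \ref{cvar_closed} at both layers of Algorithm \ref{alg:estimate-J}—the inner $\mathrm{CVaR}_{\alpha_1}$ and the outer $\mathrm{CVaR}_{\alpha_2}$—so that every convex subproblem collapses into a quantile calculation plus a tail average, after which the overall cost is obtained by straightforward operation counting across the nested loops over $(s,a)$ and the Monte Carlo index.

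First, I fix $(s,a)\in\mathcal{S}\times\mathcal{A}$ and analyse one inner solve for a given sample $p_i$. Building the array $\{c(s,a,s')+\gamma V(s')\}_{s'\in\mathcal{S}}$ takes $O(|\mathcal{S}|)$ time. Under $\rho=\mathrm{CVaR}_{\alpha_1}$, Proposition \ref{cvar_closed} characterises $\sigma_i(s,a)$ as a weighted tail average determined by the $(1-\alpha_1)$-quantile of this array with respect to the probability vector $p_i(\cdot|s,a)$; it suffices to sort the $|\mathcal{S}|$ values in $O(|\mathcal{S}|\log|\mathcal{S}|)$ time, locate the quantile by accumulating the sorted weights in $O(|\mathcal{S}|)$, and form the tail average in a single additional pass. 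Hence one inner solve costs $O(|\mathcal{S}|\log|\mathcal{S}|)$, and iterating over the $N$ samples yields $O(N|\mathcal{S}|\log|\mathcal{S}|)$ per $(s,a)$.

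Next, I address the outer program. Under $\beta=\mathrm{CVaR}_{\alpha_2}$, its risk envelope in Assumption \ref{AssBeta} reduces to $0\leq\hat\mu(p_i)\leq 1/\alpha_2$ together with $\frac{1}{N}\sum_i\hat\mu(p_i)=1$ and no $g_e$ constraints, so the problem has exactly the structure of Proposition \ref{cvar_closed} once the continuous distribution $F_\chi$ is replaced by the empirical measure $\frac{1}{N}\sum_i\delta_{p_i}$ supplied by the sampling step of Algorithm \ref{alg:estimate-J}. The optimum is therefore the empirical $\mathrm{CVaR}_{\alpha_2}$ of the scalars $\sigma_1(s,a),\ldots,\sigma_N(s,a)$, computable in $O(N\log N)$ via one sort and a tail average. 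Adding the inner and outer work gives $O\bigl(N(|\mathcal{S}|\log|\mathcal{S}|+\log N)\bigr)$ per $(s,a)$, and multiplying by the $|\mathcal{S}|\,|\mathcal{A}|$ state–action pairs swept in a single value iteration delivers the claimed bound; the trailing V-update $\min_a\hat Q^k(s,a)$ contributes only an $O(|\mathcal{S}||\mathcal{A}|)$ additive term, which is absorbed. The main conceptual obstacle is really only this second step—namely verifying that the outer program, whose feasible set is originally described by integrals over the continuous space $\mathscr{P}$, reduces to a finite empirical $\mathrm{CVaR}$ after the Monte Carlo discretisation; once that is observed, the rest is routine counting and requires no further machinery.
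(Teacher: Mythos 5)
Your proposal is correct and follows essentially the same route as the paper: Proposition \ref{cvar_closed} reduces each inner solve to a sort-and-tail-average costing $\mathcal{O}(|\mathcal{S}|\log|\mathcal{S}|)$, the outer program becomes an empirical $\mathrm{CVaR}$ over the $N$ scalars $\sigma_i(s,a)$ costing $\mathcal{O}(N\log N)$, and summing over the $|\mathcal{S}||\mathcal{A}|$ pairs gives the stated bound. Your explicit verification that the outer risk envelope collapses to the empirical $\mathrm{CVaR}$ after Monte Carlo discretisation is a welcome detail that the paper states only implicitly.
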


According to Proposition \ref{cvar_closed},
when solving the optimization problem corresponding to CVaR, the main computational cost arises from computing the quantile 
$\lambda$, which is equivalent to a sorting operation. Therefore, the computational complexities of the inner and outer optimizations are 
$\mathcal{O}(|\mathcal{S}|\log|\mathcal{S}|)$ and $\mathcal{O}(N\log N)$, respectively. Consequently, Corollary \ref{c2} follows directly, establishing the total computational complexity of each value iteration.

\section{Synthetic Experiments}

 \subsection{Problem Descriptions}
\textbf{1. Coin Toss:} 
We consider a benchmark coin-toss game that has been also adopted in the robust RL literature (see, e.g., \cite{neufeld2024robust,wang2023bayesian}). 
At each time step, the agent observes the outcomes of 10 independent coin tosses, each resulting in either heads (encoded as 1) or tails (encoded as 0). 
The state variable $S_t \in \{0,1,\ldots,10\}$ represents the total number of heads observed at time $t$. 
The agent can choose one of three possible actions $A := \{-1,0,1\}$, corresponding to betting that the next sum of heads will be smaller ($A_t=-1$), abstaining from betting ($A_t=0$), or betting that it will be larger ($A_t=1$) than the current sum. 
The cost is defined as
$
c(x,a,x') = -a \mathbf{1}_{\{x < x'\}} + a \mathbf{1}_{\{x > x'\}} + |a| \mathbf{1}_{\{x = x'\}},\notag
$
so that the agent earns one dollar for a correct prediction, loses one dollar for an incorrect prediction, and receives zero payoff when abstaining. 
The transition distribution in the training environment is modeled as a binomial law with parameters $n=10$ and $p=0.6$, that is,
\begin{equation}
\bar{q}(s,a,s') = \binom{10}{s'} 0.6^{s'} 0.4^{10-s'}.\notag
\end{equation}
Additionally, the discount factor $\gamma$ is chosen as $0.9.$

\textbf{2. Inventory Management:} 
We further consider an inventory management problem that has been widely studied in the robust RL literature (see, e.g., \cite{liu2022distributionally,neufeld2024robust,wang2023bayesian}). Our specific setup is more closely related to that in \cite{liu2022distributionally}, but here the action represents the \emph{target inventory position} the agent aims to reach. At the beginning of period $t$, the agent observes the previous period's excess-demand quantity 
$S_t \in \{-n, \ldots, 0, \ldots, n\}$, where the current on-hand 
inventory is given by $(S_t)^+$. The agent then selects a target inventory level $A_t \in \{0,1,\ldots,n\}$, 
and the actual order quantity is given by $(A_t - (S_t)^+)^+$. Each unit ordered incurs an ordering cost of $k$. At the end of the period, a random demand $D_t\in\{0,1,\ldots,n\}$ is realized and results in an excess-demand quantity $S_{t+1} = (S_t)^++(A_t - (S_t)^+)^+-D_t.$ 
Each unit of positive (long) excess-demand incurs a holding cost of $h$.
 Each unit of the negative (short) excess-demand incurs an unmet-demand penalty of $p$. Formally, the cost function is defined as
$
c(s,a,s') = \big(k \mathds{1} \left\{(a-s^+)^+>0\right\} + h (s')^+ + p (s')^- \big).
$
In the training environment, the demand $D_t$ follows a uniform distribution on $\left\{0,1,\ldots,n\right\},$ i.e.,
\begin{equation}
\bar{q}(s'|s,a)
= \mathbb{P}\big( s^+ + (a-s^+)^{+} - D_t = s' \big), \ D_t\sim U(\left\{0,1,\ldots,n\right\}),\notag
\end{equation}
which is consistent with \cite{liu2022distributionally}.  
We use $n=10$, $k=3$, $h=1$, $p=2$, $\gamma=0.9$ and an initial inventory $S_0=0$.

\subsection{Results}

\subsubsection{Risk-sensitivity analysis and oracle optimal policies}

To illustrate the connection between inner risk measures and risk sensitivity, we present the oracle optimal policies induced by different inner-risk specifications under the training transition probabilities.
 In particular, we demonstrate and analyze the oracle optimal policies $\pi^*_{\delta_{\bar{q}}}$ associated with various choices of the inner risk measure $\rho$. Tables \ref{table2} and \ref{table1} respectively present the optimal policies under different inner risk measures for both problems when using the true transition probabilities in the training environment. A key insight emerges: more conservative inner risk measures yield increasingly cautious policies.

\begin{table}[h!]
\centering
\caption{Oracle Optimal Policies in Coin Toss}
\label{table2}
\begin{tabular}{
    >{\centering\arraybackslash}m{2.5cm}
    >{\centering\arraybackslash}m{1.8cm}
    >{\centering\arraybackslash}m{1.8cm}
    >{\centering\arraybackslash}m{1.5cm}
    >{\centering\arraybackslash}m{1.5cm}
    >{\centering\arraybackslash}m{1.5cm}
    >{\centering\arraybackslash}m{1.8cm}
}
\toprule
    & $0\leq s\leq 1$ & $2\leq s\leq 4$& $s=5$& $s=6$&$s=7$&$8\leq s\leq 10$\\
\midrule
Expectation& $1$& $1$& $1$& $0$& $-1$& $-1$\\
$\mathrm{CVaR}_{0.5}$& $1$& $1$& $0$& $0$&$0$ & $-1$\\
$\mathrm{CVaR}_{0.2}$& $1$& $0$& $0$& $0$& $0$& $-1$\\
\bottomrule
\end{tabular}
\end{table}

\begin{table}[h!]
\centering
\caption{Oracle Optimal Policies in Inventory Management}
\label{table1}
\begin{tabular}{
    >{\centering\arraybackslash}m{2.5cm}
    >{\centering\arraybackslash}m{2.5cm}
    >{\centering\arraybackslash}m{1.5cm}
    >{\centering\arraybackslash}m{1.5cm}
    >{\centering\arraybackslash}m{1.5cm}
    >{\centering\arraybackslash}m{1.5cm}
    >{\centering\arraybackslash}m{1.5cm}
}
\toprule
    & $-10\leq s\leq 1$&$s=2$& $s=3$& $s=4$&$s=5$ &$s\geq5$\\
\midrule
Expectation& $8$& $8$& $s$& $s$& $s$&$s$\\
$\mathrm{CVaR}_{0.5}$& $8$& $8$& $8$& $8$& $s$&$s$\\
$\mathrm{CVaR}_{0.2}$& $8$& $8$& $8$& $8$& $8$&$s$\\
KL-DRRL& $7$& $7$& $7$& $7$& $s$&$s$\\
Wass-DRRL& $8$& $7$& $8$& $s$& $s$&$s$\\
\bottomrule
\end{tabular}
\end{table}

In the Coin Toss problem, the optimal risk-neutral policy in the training environment uses $s = 6$ as the threshold: when the number of heads is greater than $6$, the agent guesses fewer (i.e., chooses action $-1$); when it is less than $6$, it guesses more (i.e., chooses action $1$); and when it is exactly $6$, it chooses not to guess (i.e., chooses action $0$). As the inner risk measure becomes more conservative, the frequency of abstaining from guessing increases. Under $\mathrm{CVaR}_{0.2},$ the agent will only make a guess when there is less than one head or more than eight heads. This is because in certain intermediate cases, although the expected return of making a guess is positive, it also introduces a probability of incurring a loss, which increases the tail risk. Under a CVaR-based preference, such losses tend to be avoided. In the Inventory Management problem, 
the optimal risk-neutral policy in the training environment is to replenish up to $8$ units when the inventory level is less than or equal to $2$, and to keep the inventory unchanged when it is above $2$. This is consistent with the results reported in \cite{liu2022distributionally} and \cite{neufeld2024robust}. As the inner risk measure becomes more conservative, the replenishment threshold increases in order to avoid the tail risk caused by unmet demand. Under $\mathrm{CVaR}_{0.2},$ replenishment begins once the inventory drops to $5$ or below. 
We also compare our results in Table \ref{table1} with the oracle optimal policies of the two DRRL methods, which are reported in \cite{liu2022distributionally} and \cite{neufeld2024robust}. KL-DRRL refers to the DRRL method based on KL-divergence ambiguity sets (\cite{liu2022distributionally}), while Wass-DRRL refers to the DRRL method based on Wasserstein ambiguity sets (\cite{neufeld2024robust}).
 An important distinction is that changes in the inner risk measure only affect the replenishment threshold but not the replenishment quantity, whereas the two DRRL methods behave differently. This is because in the training environment, once replenishment occurs, the target inventory of $8$ is optimal under both risk-neutral and tail-risk perspectives, and only the fixed cost needs to be considered. In contrast, DRRL accounts for model mis-specification rather than tail risk, and when the transition probability change, the optimal inventory level changes accordingly. This also illustrates the difference between risk sensitivity and robustness.

\subsubsection{Algorithm Convergence}

Next, we evaluate the convergence of the proposed Bayesian DP algorithm in the two problems. We conduct experimental studies for both cases of the inner risk measure, namely 
$
\rho=\mathrm{Mean}$ (Mean preference) and 
$
\rho=\mathrm{CVaR}_{0.5}$ (CVaR preference).
Under both preference settings, we conduct experiments by selecting the outer risk measure as either $\mathrm{Mean}$ or $\mathrm{CVaR}_{0.6}$
. Under the Mean preference, we compare our method with KL-DRRL, Wass-DRRL, and traditional Q-learning; under the CVaR preference, we compare it with iterated CVaR RL (see \cite{du2022provably}).
We evaluate performance using the average of the value function—computed under the stationary distribution—corresponding to the updated policy at each stage. All the results reflect the performance of each model within a single episode, but to mitigate randomness, the results for each model are averaged over 50 independent runs. Each model interacts with the training environment for 2000 steps, with every 100 steps forming a stage. Our model performs multiple iterations at the beginning or end of each stage, while the other models perform one iteration at every step. The results are shown in Figure \ref{fig:Converge}.

\begin{figure}[htb]
    \centering

    \begin{subfigure}{0.45\textwidth}
        \centering
        \includegraphics[width=\textwidth]{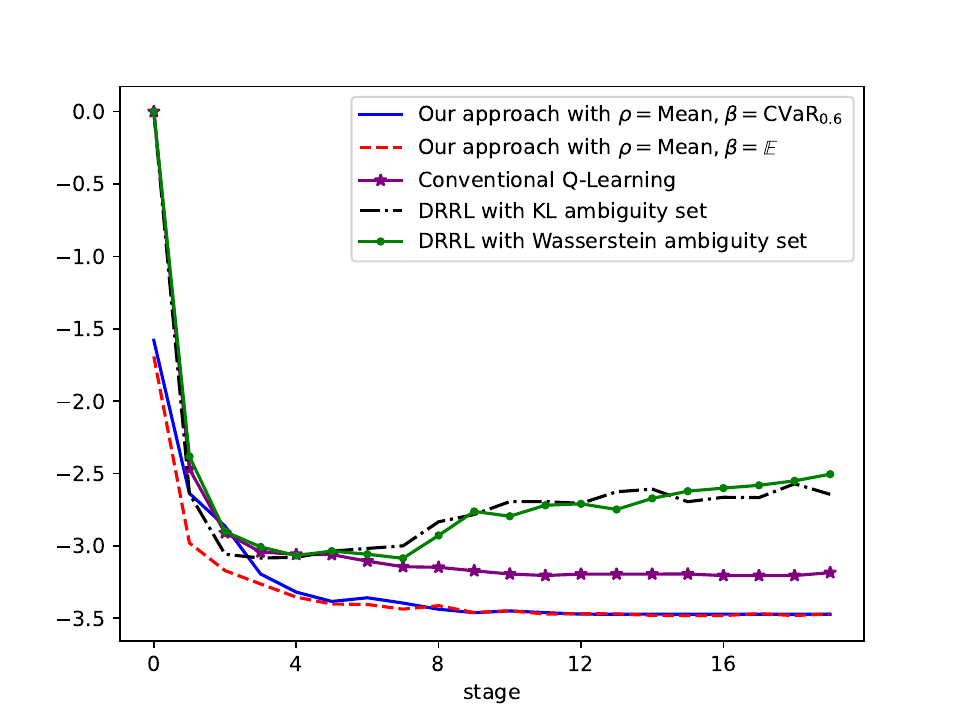}
        \caption{Coin Toss\\
        with Mean preference}
    \end{subfigure}
    \hspace{0.05\textwidth}
    \begin{subfigure}{0.45\textwidth}
        \centering
        \includegraphics[width=\textwidth]{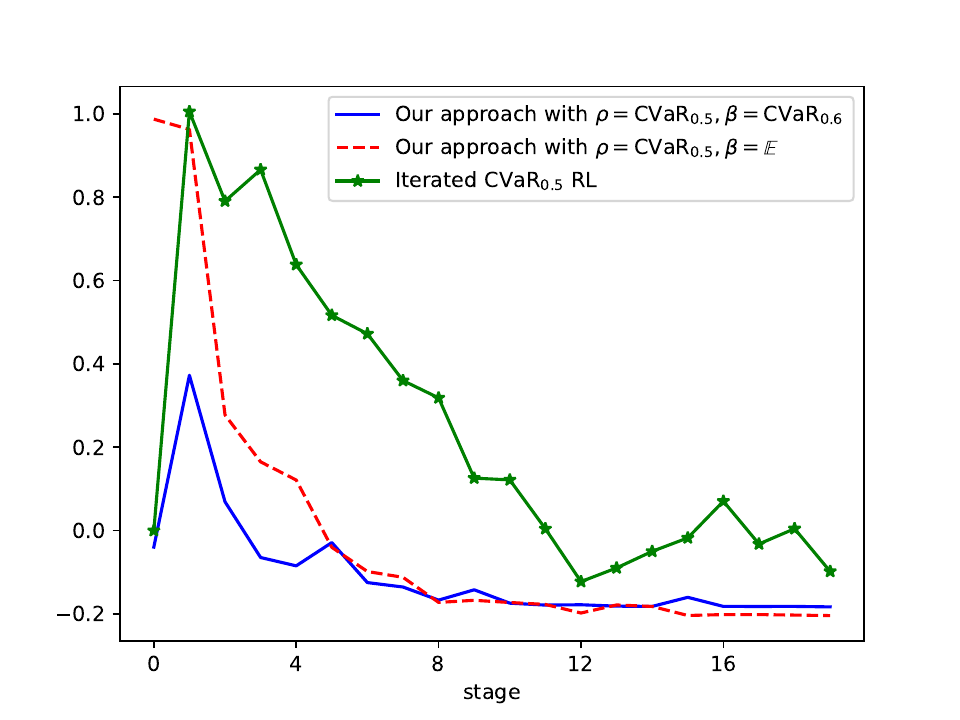}
        \caption{Coin Toss\\
        with CVaR preference}
    \end{subfigure}

    \vspace{0.05\textwidth}

    \begin{subfigure}{0.45\textwidth}
        \centering
        \includegraphics[width=\textwidth]{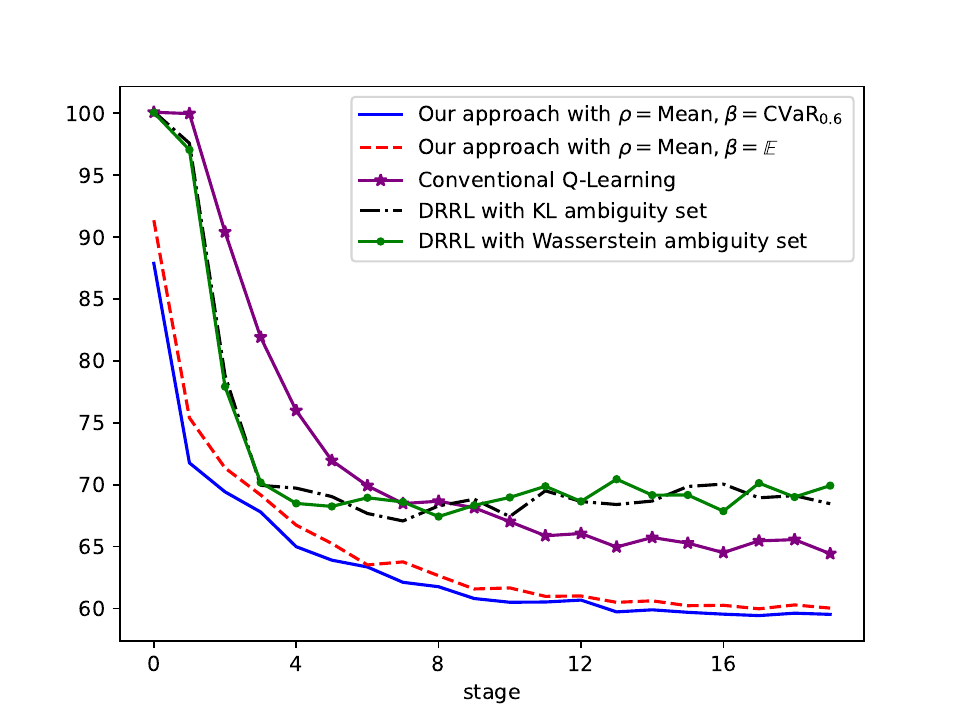}
        \caption{Inventory Management\\
        with Mean preference}
    \end{subfigure}
    \hspace{0.05\textwidth}
    \begin{subfigure}{0.45\textwidth}
        \centering
        \includegraphics[width=\textwidth]{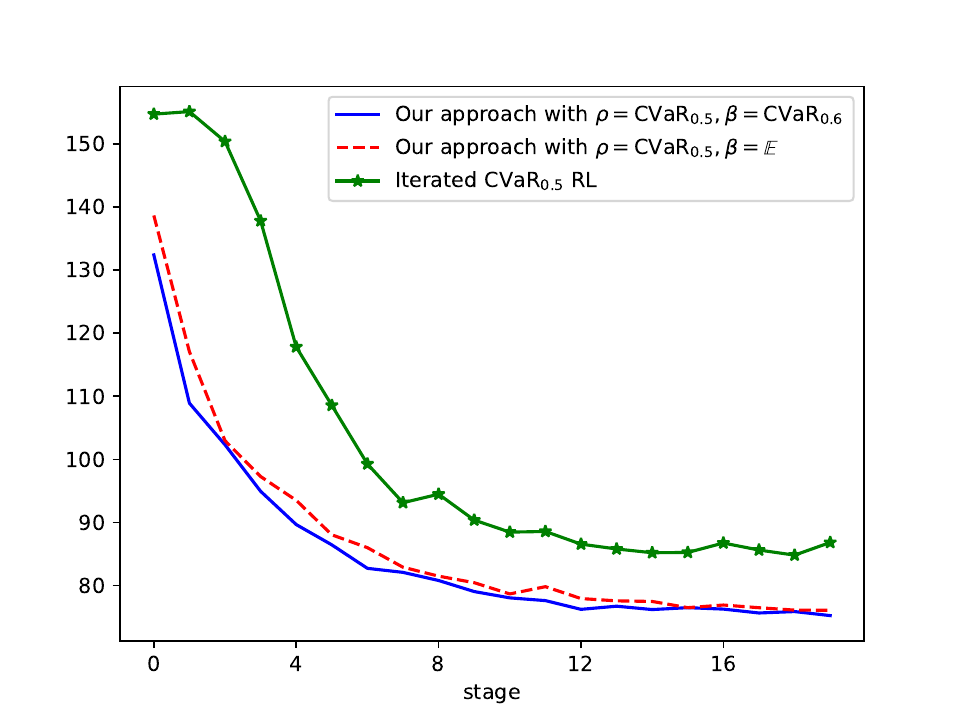}
        \caption{Inventory Management\\
        with CVaR preference}
    \end{subfigure}

    \caption{Oracle value across training stages}
    \label{fig:Converge}
\end{figure}

First, we observe that our method converges to the optimal policy in the training environment across both problems and under both preference settings (i.e., the two types of inner risk measures). This is consistent with Theorem \ref{theorem_convergenceALL}. In sharp contrast, the two DRRL models fail to converge to the optimal policy in the training environment because their objective is to minimize the worst-case cost, which prevents them from achieving optimality under the nominal model. This highlights the trade-off between optimality and robustness.
Second, the choice of the outer risk measure does not affect the final convergence outcome, nor does it have a noticeable impact on the convergence speed. Our analysis in subsection \ref{Complexity} shows that replacing the expectation with CVaR mainly increases the sample complexity by a factor associated with $(1-\gamma)^{-1},$ and the experimental results are consistent with this observation. Third, Bayesian DP exhibits a clear advantage in convergence compared with traditional Q-learning or iterative CVaR RL. Furthermore, its performance does not depend on the choice of a learning rate; the only parameter that needs to be specified is the prior over transition probabilities, which is typically interpretable. In our experiments, we adopt the simplest Dirichlet prior, where for any $(s,a,s')$, the initial parameters are given by
$\alpha(s'|s,a) = \frac{1}{|\mathcal{S}|}$
representing complete ignorance of the model. If additional knowledge were incorporated into the prior, the convergence behavior could be further improved. We illustrate this point with an example. Figure \ref{prior} demonstrates the advantage of using a more informative prior in the Coin Toss problem, where the Prior 1 is a Dirichlet prior satisfying $\alpha(s'|s,a) = \frac{1}{|\mathcal{S}|}$ and Prior 2 is a Dirichlet prior satisfying $\alpha(s' \mid s,a) \propto\binom{10}{s'} \left(\frac{2}{3}\right)^{s'} (\frac{1}{3})^{10-s'}$.

 \begin{figure}
     \centering
     \label{prior}
     \includegraphics[width=0.6\linewidth]{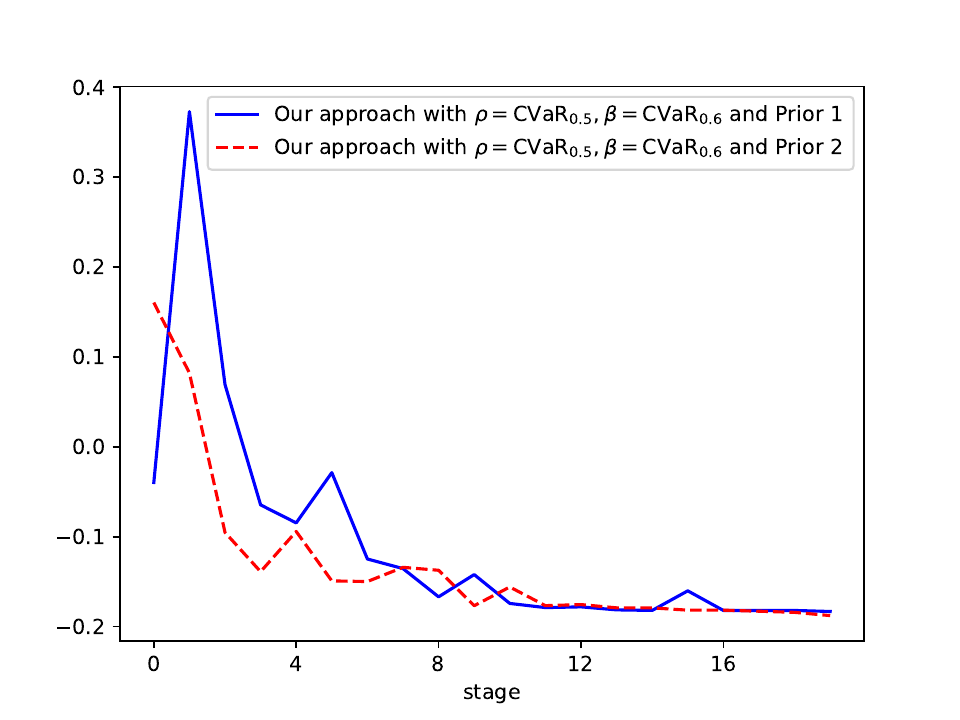}
     \caption{Different prior in Coin Toss with CVaR preference}
     \label{prior}
 \end{figure}

\subsubsection{Robustness comparison}

Below we visually demonstrate the robustness advantages brought by the introduction of outer risk measures. To compare the performance of robustness, we evaluate the trained models in a series of deployment environments whose transition probabilities are perturbed versions of those in the training environment. For the Coin Toss problem, we perturb the probability of each coin obtaining heads, changing it from the training environment value of $0.6$ to values ranging from $0.3$ to $0.9$ with increments of $0.1$. For the Inventory Management problem, we apply exponential tilting to the distribution of the demand $D$, that is,
$
\mathbb{P}(D = i) \propto \exp\!\left(\theta\left(i - \frac{n}{2}\right)\right)\ (0\leq i\leq n).
$
We vary $\theta$ from $-5$ to $5$ in increments of $1$. 
We evaluate performance using the worst value across all deployment environments, where the value is still computed as the stationary-distribution–weighted average over the states. We test the policy obtained at each training stage in the deployment environments in order to examine how model robustness evolves throughout the training process. The results are also averaged across $50$ independent runs.

\begin{figure}[htb]

    \centering
    \begin{subfigure}{0.32\textwidth}
        \centering
        \includegraphics[width=\textwidth]{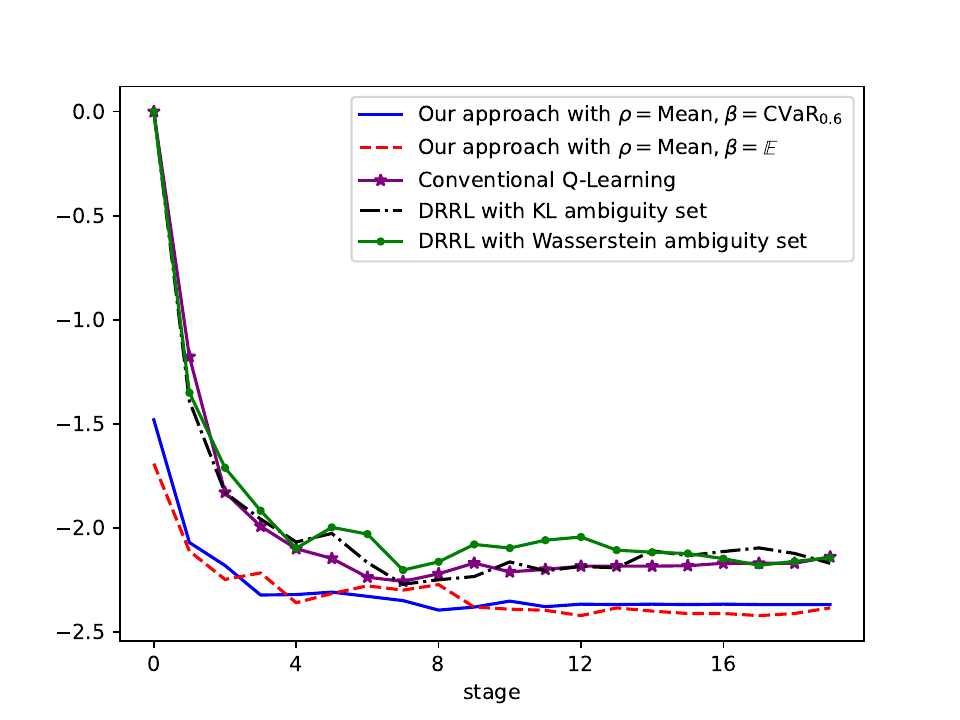}
        \caption{Mean preference\\ with $p\in\left\{0.5,0,6,0.7\right\}$}
    \end{subfigure}
    \hfill
    \begin{subfigure}{0.32\textwidth}
        \centering
        \includegraphics[width=\textwidth]{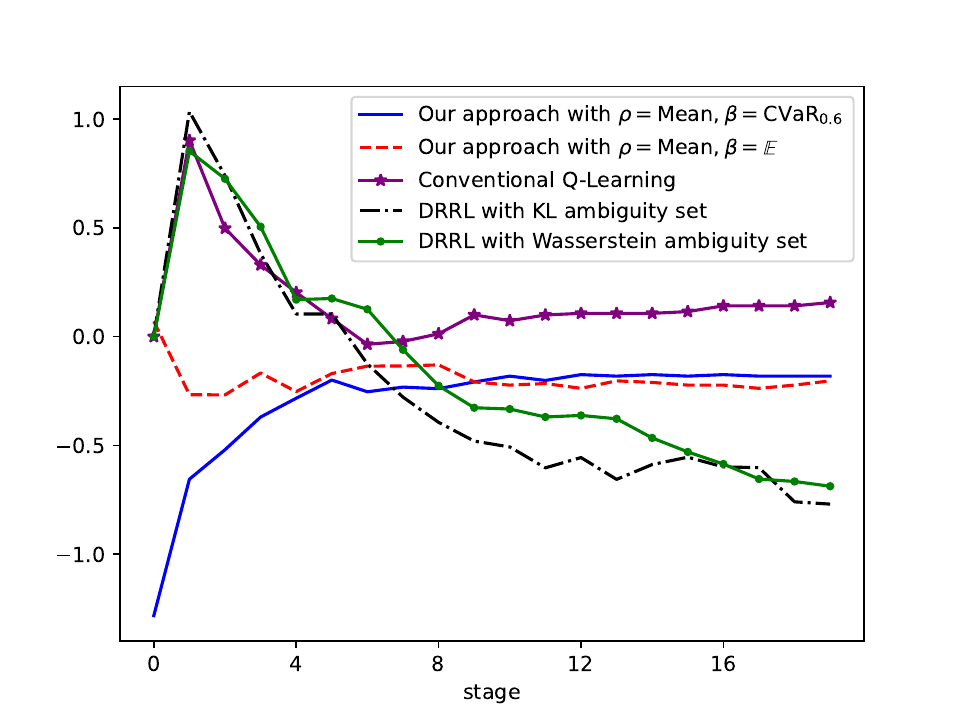}
        \caption{Mean preference\\ with $p\in\left\{0.4,\ldots,0.8\right\}$}
    \end{subfigure}
    \hfill
    \begin{subfigure}{0.32\textwidth}
        \centering
        \includegraphics[width=\textwidth]{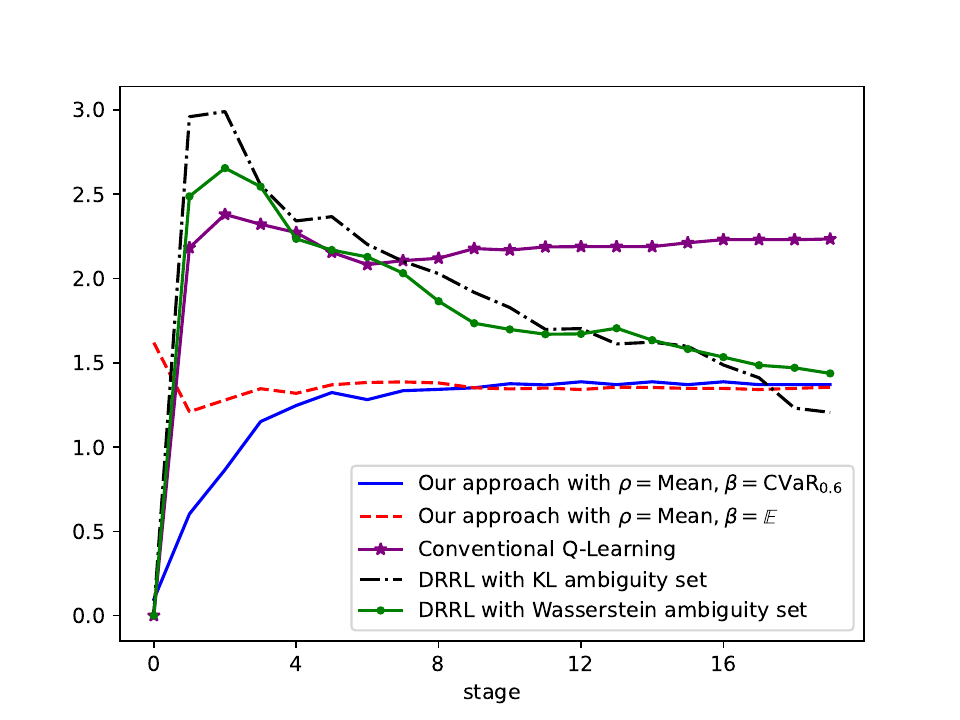}
        \caption{Mean preference\\ with $p\in\left\{0.3,\ldots,0.9\right\}$}
    \end{subfigure}

    \vspace{0.04\textwidth} 

    \begin{subfigure}{0.32\textwidth}
        \centering
        \includegraphics[width=\textwidth]{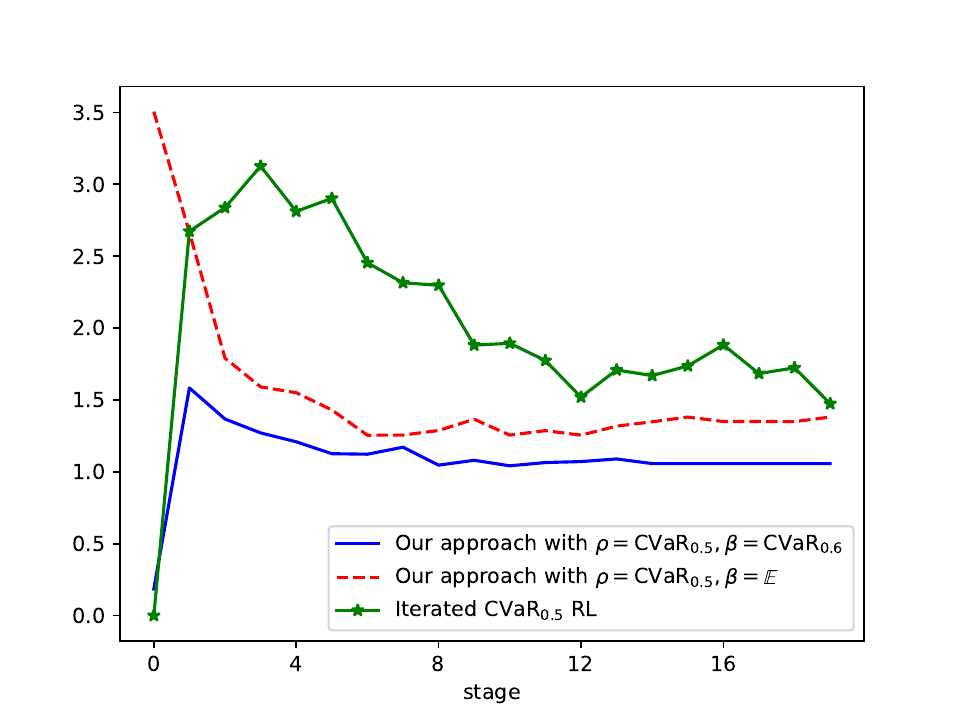}
        \caption{CVaR preference\\ with $p\in\left\{0.5,0,6,0.7\right\}$}
    \end{subfigure}
    \hfill
    \begin{subfigure}{0.32\textwidth}
        \centering
        \includegraphics[width=\textwidth]{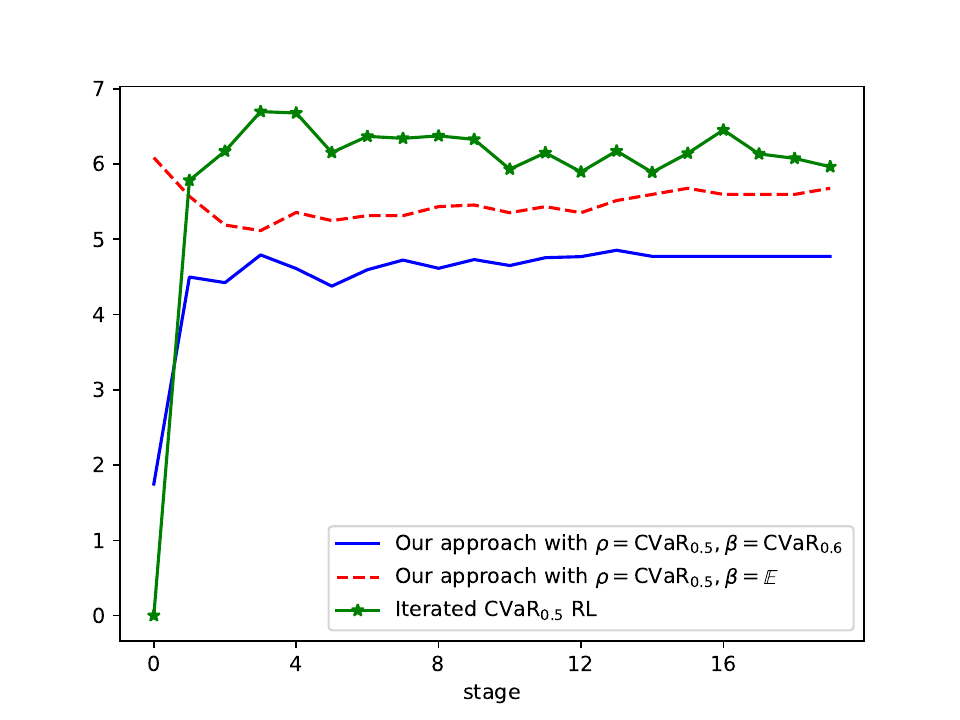}
        \caption{CVaR preference\\ with $p\in\left\{0.4,\ldots,0.8\right\}$}
    \end{subfigure}
    \hfill
    \begin{subfigure}{0.32\textwidth}
        \centering
        \includegraphics[width=\textwidth]{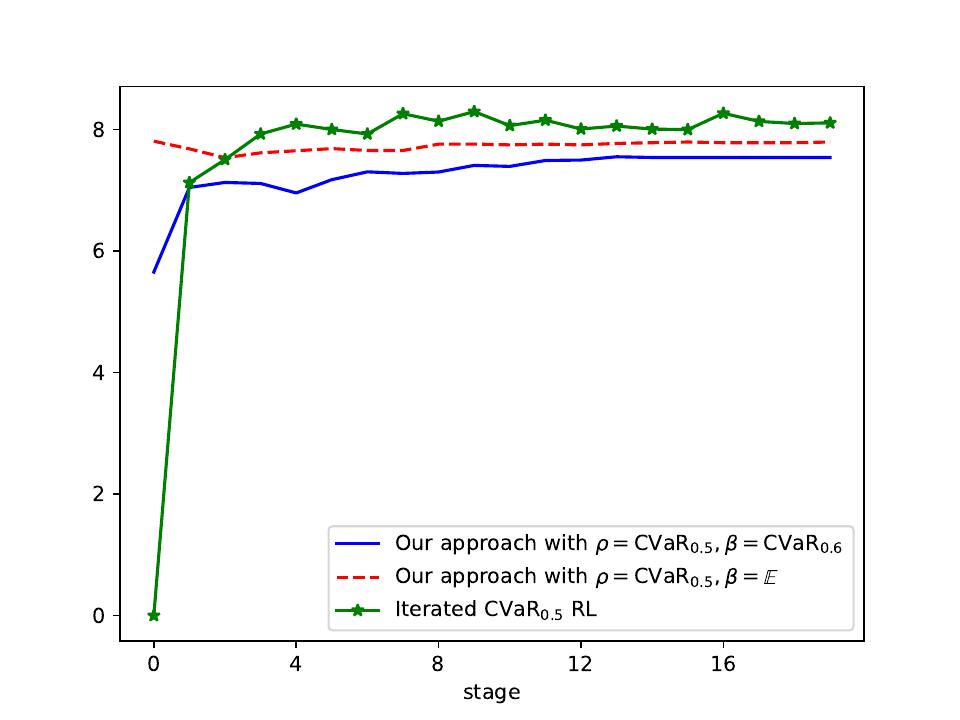}
        \caption{CVaR preference\\ with $p\in\left\{0.3,\ldots,0.9\right\}$}
    \end{subfigure}

    \caption{Robustness comparison for Coin Toss}
    \label{fig:3x2}
\end{figure}

The results of the Coin Toss problem is shown in Figure \ref{fig:3x2}. We evaluate the worst value over deployment environments for each perturbation range: the probability of heads varying within $\left\{0.5,0.6,0.7\right\}$, $\left\{0.4,\ldots,0.8\right\}$, and $\left\{0.3,\ldots,0.9\right\}$, respectively, for both the Mean preference and the CVaR preference. 
First, across all experiment groups, we observe that choosing CVaR as the outer risk measure leads to lower costs in the deployment environments compared with using the mean, highlighting the impact of the outer risk measure on robustness. 
Second, our model exhibits significantly better robustness than conventional Q-learning and iterated CVaR RL, maintaining a lower worst-case deployment cost over most of the $20$ stages. Even when the outer risk measure is chosen as expectation, our model still demonstrates a relative advantage. 
Third, in comparison with DRRL, we identify a new perspective: although there exists a trade-off between optimality and robustness, in certain settings, optimality may actually enhance robustness—particularly when the degree of transition uncertainty is unknown. When the perturbation from the training environment is small, we find that the performance in the deployment environments does not differ substantially from that in the training environment; hence, the model that is optimal in the training environment retains its advantage (e.g., subfigure (a)). However, when the perturbation becomes large, the performances of all models tend to be similar (e.g., subfigure (c)). This occurs because DRRL training depends on the radius of the ambiguity set (which can be interpreted as the degree of transition uncertainty), and its advantage emerges only when the actual perturbation matches the prescribed ambiguity radius. Additionally, even in the scenario where DRRL holds an advantage, its superiority emerges only in the stages after convergence. In contrast, our model exhibits better robustness during the initial stages, which benefits from the Bayesian framework underlying the proposed RL method. This observation also suggests that incorporating an early-stopping mechanism may further enhance robustness.

\begin{figure}[htb]
    \centering

    \begin{subfigure}{0.45\textwidth}
        \centering
        \includegraphics[width=\textwidth]{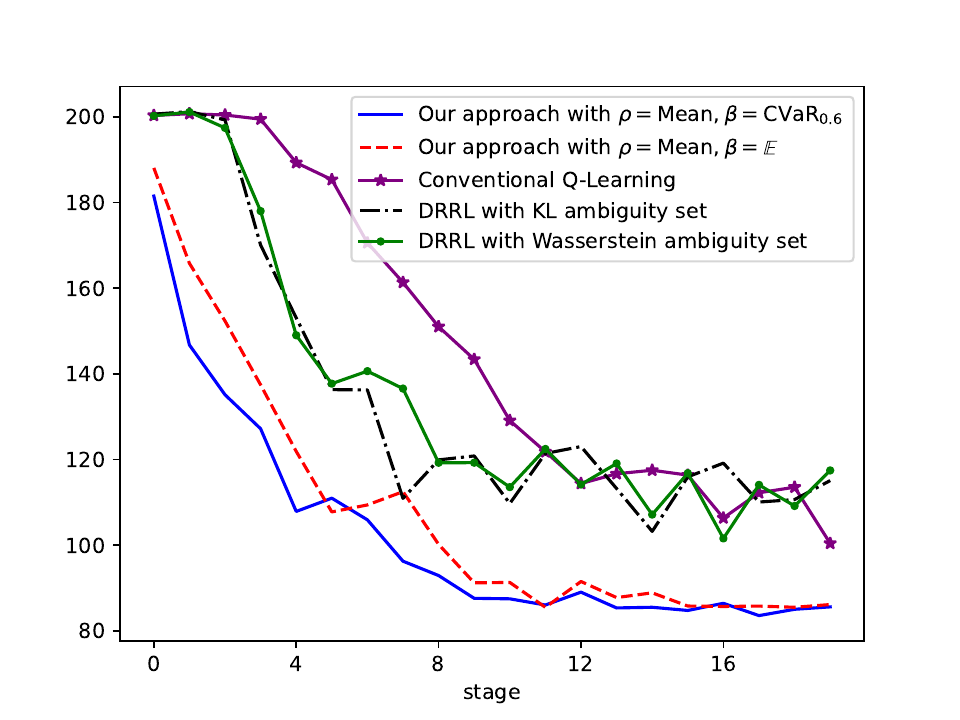}
        \caption{Mean preference with $\theta\in\left\{-5,\ldots,5\right\}$}
    \end{subfigure}
    \hfill
    \begin{subfigure}{0.45\textwidth}
        \centering
        \includegraphics[width=\textwidth]{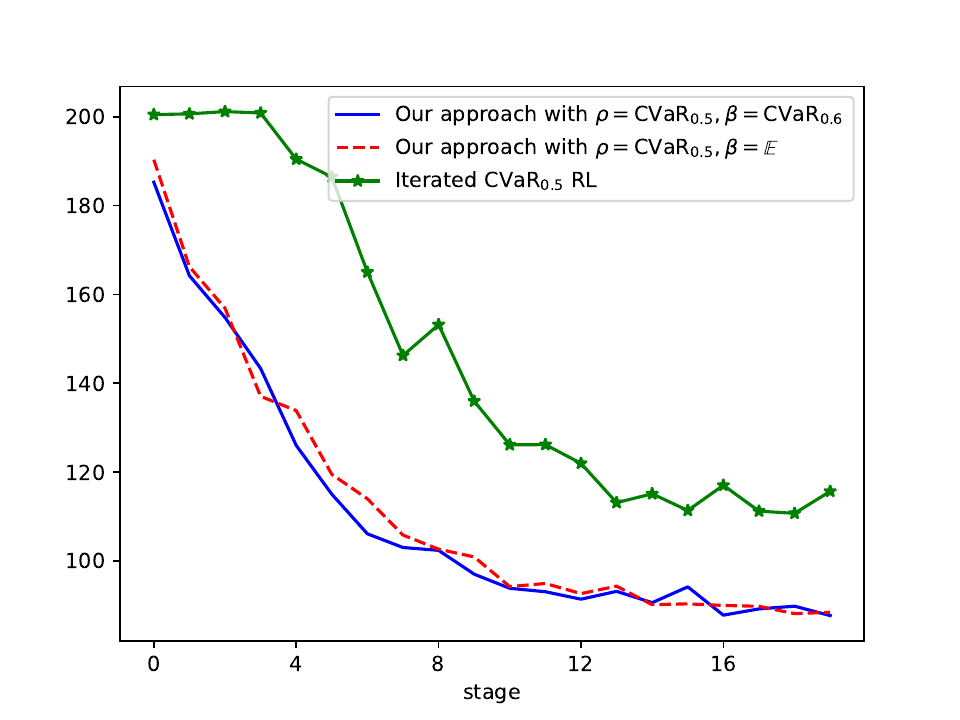}
        \caption{CVaR preference with $\theta\in\left\{-5,\ldots,5\right\}$}
    \end{subfigure}

    \caption{Robustness comparison for Inventory Management}
    \label{fig:two-in-row}
\end{figure}

The results of the Inventory Management problem is shown in Figure \ref{fig:two-in-row}. The conclusion is similar to that of the previous problem: our model consistently demonstrates a robustness advantage over $20$ stages, and selecting CVaR as the outer risk measure further strengthens this advantage. Under perturbations of this level, our model achieves a better deployment performance than DRRL. This indicates that our approach achieves a favorable balance between optimality and robustness and satisfactory performance in both the training and deployment environments.

\section{Empirical Study}

In this section, we employ an option hedging example as an empirical application of our proposed algorithm. 
We formulate the option hedging problem as a MDP model and the agent is established as the option writer, whose objective is to hedge a short position of one unit of a European call option. 
We consider a discrete hedging framework, in which the position of the underlying asset is rebalanced at predetermined, equally spaced time points.
The state space $\mathcal{S}$ is designed to include the underlying asset price $P$ and the remaining time to expiration $\tau,$ i.e., $S_t=(P_t,\tau_t).$ The action space $\mathcal{A}$ is a discrete set containing 11 equally spaced points covering the range from 0 to 1, representing the target position of the underlying asset, that is, $\mathcal{A} = \{0, 0.1, 0.2, \dots, 1.0\}$. Prior to the expiry date ($\tau > 0$), the cost is defined as the loss induced by the underlying asset; at the expiry date ($\tau = 0$), an additional cost associated with fulfilling the option obligation is incurred. Formally, $c(p,\tau,a,p',\tau-\Delta\tau)=a(p-p')-(p'-K)^+ \mathds{1} _{\{\tau=1\}}.$ Our study examines one-year options, rebalanced monthly (with a hedging interval of 
$\Delta\tau=\frac{1}{12}$). The underlying asset price is assumed to follow a geometric Brownian motion, characterized by parameters 
$(\mu,\sigma).$ Prior to the option's inception, a prior distribution is imposed on these parameters based on historical data; subsequently, the posterior distribution is updated via Bayes' theorem before each hedging operation.

Two backtesting configurations are employed. The first focuses on the SSE 50 Index (000016.SH) using the European call HO2406-C-2300 (\$K=2300\$) over the period June 19, 2023, to June 21, 2024. The second centers on the CSI 300 Index (000300.SH) using the call IO2406-C-3400 (\$K=3400\$) from June 26, 2023, to the shared expiry of June 21, 2024. The results of our approach are benchmarked against the classical Black-Scholes Delta hedging strategy.

\begin{table}[h]
\centering
\caption{Comparison of Total Hedging Losses under Different Risk Configurations}
\label{hedging_performance}
\begin{tabular}{lcc}
\toprule
Model Configuration & Experiment 1: CSI 300 Hedging & Experiment 2: SSE 50 Hedging \\
\midrule
(CVaR, Mean) & -401.23 & -189.86 \\
(Mean, Mean) & -95.62 & -98.95 \\
BS Delta Hedging & -454.86 & -253.25 \\
\bottomrule
\end{tabular}

\smallskip
\footnotesize
Note: The table presents the total hedging losses (in RMB) for three different risk-configured models across two empirical hedging experiments. Model 1: outer risk measure is CVaR, inner risk measure is Mean; Model 2: outer risk measure is Mean, inner risk measure is Mean.\\

\end{table}

Table \ref{hedging_performance} summarizes the key performance metrics. The results show that under real-world market constraints, our reinforcement learning-based approach substantially outperforms the traditional Black-Scholes Delta Hedging benchmark, particularly by achieving lower total hedging losses (negative values denote seller costs).

In Experiment 1 (CSI 300 Hedging), the BS Delta Hedging benchmark incurs the largest loss at 454.86. In contrast, our models recorded significantly lower losses: the (CVaR, Mean) configuration results in 401.23, and the (Mean, Mean) configuration achieves the best result at 95.62. Similarly, in Experiment 2 (SSE 50 Hedging), the benchmark performs comparably poorly with a loss of 253.25, while both learned policies reduce the loss substantially, with the (CVaR, Mean) model reaching 189.86 and the (Mean, Mean) model achieving 98.95. Empirically, all learned policies consistently outperform the BS Delta benchmark (with losses closer to zero), validating the proposed approach for cost-effective and robust hedging in the real market.

{\small
\bibliographystyle{apalike}
\bibliography{ref}
}

\newpage
\appendix
\section{Proofs}
\begin{proof}[Proof of Lemma \ref{lemma_1}]
First, we show that for a given set of random variables $\mathcal{X},$ a coherent risk measure $f:\mathcal{X}\to\mathbb{R}$ satisfies that $|f(X_1)-f(X_2)|\leq f(|X_1-X_2|).$ 
Let $Y = X_2-X_1.$ We have
\begin{equation}
    f(X_2) = f(X_1+Y)\leq f(X_1)+f(Y),
\notag\end{equation}
i.e.,
\begin{equation}
    f(X_2)-f(X_1)\leq f(X_1-X_2)\leq f(|X_1-X_2|).
\notag\end{equation}
Similarly, we have $f(X_1)-f(X_2)\leq f(|X_1-X_2|)$ and the conclusion holds.

For any $q\in\mathscr{P},s\in\mathcal{S},a\in\mathcal{A},$ we let $\tau_0=\pi=\delta_a,\mu_0=\delta_s$ and we have
\begin{equation}
\begin{aligned}
    |\sigma(v_1,q(\cdot|s,a))-\sigma(v_2,q(\cdot|s,a))|&= |\rho_{q,\pi,0}(v_1)-\rho_{q,\pi,0}(v_2)|\\
    &\leq \rho_{q,\pi,0}(|v_1-v_2|)\\
    &=\sigma(|v_1-v_2|,q(\cdot|s,a)),
    \end{aligned}
\notag\end{equation} and \begin{equation}
\begin{aligned}
    \sigma(|v_1-v_2|,q(\cdot|s,a))&= \rho_{q,\pi,0}(|v_1-v_2|)\\
    &\leq \max_{s'\in\mathcal{S}} |v_1(s')-v_2(s')|.
    \end{aligned}
\notag\end{equation} Therefore, 

\begin{align}
    \left|\mathcal{J}_{\chi,\pi} V_1(s)-\mathcal{J}_{\chi,\pi} V_2(s)\right| 
    &= \Bigg|\beta_{p\sim\chi}\left(\sum_{a\in\mathcal{A}}\pi(a|s) \cdot\sigma(c(s,a,\cdot)+\gamma V_1(\cdot),p(\cdot|s,a))\right)\notag\\
    &\qquad\qquad-\beta_{p\sim\chi}\left(\sum_{a\in\mathcal{A}}\pi(a|s) \cdot\sigma(c(s,a,\cdot)+\gamma V_2(\cdot),p(\cdot|s,a))\right)\Bigg|\notag\\
    &\leq \beta_{p\sim\chi}\Bigg(\Bigg|\sum_{a\in\mathcal{A}}\pi(a|s) \cdot\sigma(c(s,a,\cdot)+\gamma V_1(\cdot),p(\cdot|s,a))\notag\\
    &\qquad\qquad-\sum_{a\in\mathcal{A}}\pi(a|s) \cdot\sigma(c(s,a,\cdot)+\gamma V_2(\cdot),p(\cdot|s,a))\Bigg|\Bigg)\notag\\
    &\leq \beta_{p\sim\chi}\Bigg(\sum_{a\in\mathcal{A}}\pi(a|s) \cdot\Bigg|\sigma(c(s,a,\cdot)+\gamma V_1(\cdot),p(\cdot|s,a))\notag\\
    &\qquad\qquad\qquad\qquad\qquad- \sigma(c(s,a,\cdot)+\gamma V_2(\cdot),p(\cdot|s,a))\Bigg|\Bigg)\notag\\
    &\leq \beta_{p\sim\chi}\Bigg(\sum_{a\in\mathcal{A}}\pi(a|s) \cdot\sigma(\gamma |V_1(\cdot)-V_2(\cdot)|,p(\cdot|s,a))\Bigg)\notag\\
    &\leq \beta_{p\sim\chi}\Bigg(\sum_{a\in\mathcal{A}}\pi(a|s)  \gamma\max_{s'\in\mathcal{S}}|V_1(s')-V_2(s')|\Bigg)\notag\\
    &=\beta_{p\sim\chi}\Bigg(  \gamma\max_{s'\in\mathcal{S}}|V_1(s')-V_2(s')|\Bigg)\notag\\
    &=\gamma\max_{s'\in\mathcal{S}}|V_1(s')-V_2(s')|,\notag
    \end{align}
  which implies $\max_{s\in\mathcal{S}} \left|\mathcal{J}_{\chi,\pi} V_1(s)-\mathcal{J}_{\chi,\pi} V_2(s)\right|\leq \gamma\max_{s\in\mathcal{S}}|V_1(s)-V_2(s)|,$ i.e., $\left\Vert\mathcal{J}_{\chi,\pi} V_1-\mathcal{J}_{\chi,\pi} V_2\right\Vert_{\infty} \leq \gamma\Vert V_1-V_2\Vert_{\infty}.$ Similarly, we have
\begin{align}
    \left|\mathcal{J}_{\chi} V_1(s)-\mathcal{J}_{\chi} V_2(s)\right| 
    &= \Bigg|\min_{a\in\mathcal{A}}\beta_{p\sim\chi}\left(\sigma(c(s,a,\cdot)+\gamma V_1(\cdot),p(\cdot|s,a))\right)\notag\\
    &\qquad\qquad-\min_{a\in\mathcal{A}}\beta_{p\sim\chi}\left(\sigma(c(s,a,\cdot)+\gamma V_2(\cdot),p(\cdot|s,a))\right)\Bigg|\notag\\
    &\leq \max_{a\in\mathcal{A}}\beta_{p\sim\chi}\Bigg(\Bigg|\sigma(c(s,a,\cdot)+\gamma V_1(\cdot),p(\cdot|s,a))\notag\\
    &\qquad\qquad-\sigma(c(s,a,\cdot)+\gamma V_2(\cdot),p(\cdot|s,a))\Bigg|\Bigg)\notag\\
    &\leq \max_{a\in\mathcal{A}}\beta_{p\sim\chi}\Bigg(\sigma(\gamma |V_1(\cdot)-V_2(\cdot)|,p(\cdot|s,a))\Bigg)\notag\\
    &\leq\beta_{p\sim\chi}\Bigg(  \gamma\max_{s'\in\mathcal{S}}|V_1(s')-V_2(s')|\Bigg)\notag\\
    &=\gamma\max_{s'\in\mathcal{S}}|V_1(s')-V_2(s')|,\notag
    \end{align} which implies $\max_{s\in\mathcal{S}} \left|\mathcal{J}_{\chi} V_1(s)-\mathcal{J}_{\chi} V_2(s)\right|\leq \gamma\max_{s\in\mathcal{S}}|V_1(s)-V_2(s)|,$ i.e., $\left\Vert\mathcal{J}_{\chi} V_1-\mathcal{J}_{\chi} V_2\right\Vert_{\infty} \leq \gamma\Vert V_1-V_2\Vert_{\infty}.$ 
\end{proof}

\begin{proof}[Proof of Theorem \ref{theorem_1}]
By Lemma \ref{lemma_1}, we have that $\mathcal{J}_{\boldsymbol{\chi}}$ is a $\gamma$-contraction operator w.r.t. $\Vert\cdot\Vert_{\infty}$ norm. According to Banach’s contraction mapping principle (\cite{banach1922operations}), there exists a unique value function $V_{\chi}^*$ such that $ V =   \mathcal{J}_{\boldsymbol{\chi}} V,$ i.e., $ V(s) =   \mathcal{J}_{\boldsymbol{\chi}} V(s),$ for any $s\in\mathcal{S}.$ For any policy $\pi,$ and any state $s\in\mathcal{S},$ 
\begin{align}
V_{\chi,\pi}(s)
= &\beta_{p\sim\chi}({\rho}_{p,\pi,0}(c(s,A_0,S_1)+\notag\\
&\quad\gamma\beta_{p\sim\chi}(\mathscr{\rho}_{p,\pi,1}( c(S_1,A_1,S_2) +\notag\\
&\quad\quad\gamma\beta_{p\sim\chi}(\mathscr{\rho}_{p,\pi,2}\left( c(S_2,A_2,S_3) 
+\cdots\right)))))),\notag\\
= &\beta_{p\sim\chi}\Big(\sum_{a\in\mathcal{A}}\pi(a|s)\sigma((c(s,a,S_1)+\notag\\
&\quad\gamma\beta_{p\sim\chi}(\mathscr{\rho}_{p,\pi,1}( c(S_1,A_1,S_2) +\notag\\
&\quad\quad\gamma\beta_{p\sim\chi}(\mathscr{\rho}_{p,\pi,2}\left( c(S_2,A_2,S_3) 
+\cdots\right)))),p(\cdot|,s,a))\Big),\notag\\
=& \beta_{p\sim\chi}\Big(\sum_{a\in\mathcal{A}}\pi(a|s)\sigma(c(s,a,\cdot) + \gamma V_{\chi,\pi}(\cdot),p(\cdot|,s,a))\Big)\notag\\
\geq& \beta_{p\sim\chi}\Big(\min_{a\in\mathcal{A}}\sigma(c(s,a,\cdot) + \gamma V_{\chi,\pi}(\cdot),p(\cdot|,s,a))\Big)\notag\\
=& \mathcal{J}_{\chi}V_{\chi,\pi}(s).\notag
\end{align}
Therefore, through recursive iteration we get $V_{\chi,\pi}(s) \geq(\mathcal{J}_{\chi})^nV_{\chi,\pi}(s)$ for any $n\geq1.$ According to Banach’s contraction mapping principle, $\lim_{n\to \infty} \Vert(\mathcal{J}_{\chi})^nV_{\chi,\pi}-V_{\chi}^*\Vert\to 0,$ which implies $\lim_{n\to \infty} (\mathcal{J}_{\chi})^nV_{\chi,\pi}(s)=V_{\chi}^*(s).$ Thus, we have  
\begin{equation}
    V_{\chi,\pi}(s) \geq\lim_{n\to\infty}(\mathcal{J}_{\chi})^nV_{\chi,\pi}(s) = V_{\chi}^*(s).
\notag\end{equation}
Let $\pi_{\chi}^*$ be a Markov policy satisfying that $\pi_{\chi}^*(a|s)$ is a point mass on
\begin{equation}
    \arg \min_{a\in\mathcal{A}}  \left\{\beta_{p\sim\chi}\left(\sigma(c(s,a,\cdot)+\gamma\cdot V_{\chi}^*(s),p(\cdot|s,a))\right)\right\},
\notag\end{equation} for any $(s,a) \in \mathcal{A}\times\mathcal{S}.$ Then we have for any $s\in\mathcal{S},$
\begin{align}
V_{\chi,\pi^*_{\chi}}(s)
= &\beta_{p\sim\chi}({\rho}_{p,\pi^*_{\chi},0}(c(s,A_0,S_1)+\notag\\
&\quad\gamma\beta_{p\sim\chi}(\mathscr{\rho}_{p,\pi^*_{\chi},1}( c(S_1,A_1,S_2) +\notag\\
&\quad\quad\gamma\beta_{p\sim\chi}(\mathscr{\rho}_{p,\pi^*_{\chi},2}\left( c(S_2,A_2,S_3) 
+\cdots\right)))))),\notag\\
= &\beta_{p\sim\chi}\Big(\sum_{a\in\mathcal{A}}\pi^*_{\chi}(a|s)\sigma((c(s,a,S_1)+\notag\\
&\quad\gamma\beta_{p\sim\chi}(\mathscr{\rho}_{p,\pi^*_{\chi},1}( c(S_1,A_1,S_2) +\notag\\
&\quad\quad\gamma\beta_{p\sim\chi}(\mathscr{\rho}_{p,\pi^*_{\chi},2}\left( c(S_2,A_2,S_3) 
+\cdots\right)))),p(\cdot|,s,a))\Big),\notag\\
=& \beta_{p\sim\chi}\Big(\sum_{a\in\mathcal{A}}\pi^*_{\chi}(a|s)\sigma(c(s,a,\cdot) + \gamma V_{\chi,\pi^*_{\chi}}(\cdot),p(\cdot|,s,a))\Big)\notag\\
=& \min_{a\in\mathcal{A}}\beta_{p\sim\chi}\Big(\sigma(c(s,a,\cdot) + \gamma V_{\chi,\pi^*_{\chi}}(\cdot),p(\cdot|,s,a))\Big)\notag\\
=& \mathcal{J}_{\chi}V_{\chi,\pi^*_{\chi}}(s).\notag
\end{align}
Therefore, through recursive iteration we get $V_{\chi,\pi}(s) =(\mathcal{J}_{\chi})^nV_{\chi,\pi}(s)$ for any $n\geq1.$ According to Banach’s contraction mapping principle, $
    V_{\chi,\pi^*_{\chi}}(s) =\lim_{n\to\infty}(\mathcal{J}_{\chi})^nV_{\chi,\pi^*_{\chi}}(s) = V_{\chi}^*(s).
$
\end{proof}

\begin{proof}[Proof of Corollary \ref{coro1}]
For any $s\in\mathcal{S,}$ we have $\text{Risk}(\chi,\pi,\mu_0,\pi)=V_{\chi,\pi}(s)$ on $\{S_0=s\}.$ By Theorem \ref{theorem_1}, we have $V_{\chi,\pi^*_{\chi}}(s)\leq V_{\chi,\pi}(s)$ for any Markov policy $\pi,$ which implies 
\begin{equation}
    \text{Risk}(\chi, \pi_{\chi}^*,\mu_0, \pi_{\chi}^*)\leq \text{Risk}(\chi,\pi,\mu_0,\pi)\ \text{for any $\pi$ on}\ \{S_0=s\},
\notag\end{equation}
i.e.,
\begin{equation}
    \pi_{\chi}^* = \arg\min_{\pi\in\Pi} \left\{\text{Risk}(\chi,\pi,\mu_0,\pi)\right\}\ \text{on}\ \{S_0=s\}.
\notag\end{equation}
Since $\mathbb{P}^{q,\pi,\mu_0,\tau_0}(\cup_{s\in\mathcal{S}}\{ S_0=s\})=1,$ it holds that
\begin{equation}
    \pi_{\chi}^* = \arg\min_{\pi\in\Pi} \left\{\text{Risk}(\chi,\pi,\mu_0,\pi)\right\}\ \text{almost surely.}
\notag\end{equation}
\end{proof}

\begin{proof}[Proof of Theorem \ref{theorem_2}]
	(1) For the first conclusion, from the proof of Theorem \ref{theorem_1}, we have $V_{\chi,\pi}  = \mathcal{J}_{\chi,\pi}V_{\chi,\pi}.$ Therefore, we have \begin{align}
			|V_{\chi,\pi}(s)-V_{\delta_{\bar{q}},\pi}(s)| & = |\mathcal{J}_{\chi,\pi}V_{\chi,\pi}(s)-\mathcal{J}_{\delta_{\bar{q}},\pi}V_{\delta_{\bar{q}},\pi}(s)|\notag\\
			&=\Bigg|\beta_{p\sim\chi}\left(\sum_{a\in\mathcal{A}}\pi(a|s)\sigma(c(s,a,\cdot)+\gamma V_{\chi,\pi}(\cdot),p(\cdot|s,a))\right)\notag\\
			&\qquad\qquad-\beta_{p\sim\delta_{\bar{q}}}\left(\sum_{a\in\mathcal{A}}\pi(a|s)\sigma(c(s,a,\cdot)+\gamma V_{\delta_{\bar{q}},\pi}(\cdot),p(\cdot|s,a))\right)\Bigg|\notag\\
			&=\Bigg|\beta_{p\sim\chi}\left(\sum_{a\in\mathcal{A}}\pi(a|s)\sigma(c(s,a,\cdot)+\gamma V_{\chi,\pi}(\cdot),p(\cdot|s,a))\right)\notag\\
			&\qquad\qquad-\sum_{a\in\mathcal{A}}\pi(a|s)\sigma(c(s,a,\cdot)+\gamma V_{\delta_{\bar{q}},\pi}(\cdot),\bar{q}(\cdot|s,a))\Bigg|\notag\\
			&=\Bigg|\beta_{p\sim\chi}\Bigg(\sum_{a\in\mathcal{A}}\pi(a|s)(\sigma(c(s,a,\cdot)+\gamma V_{\chi,\pi}(\cdot),p(\cdot|s,a))\notag\\
			&\qquad\qquad\qquad\qquad-\sigma(c(s,a,\cdot)+\gamma V_{\delta_{\bar{q}},\pi}(\cdot),\bar{q}(\cdot|s,a)))\Bigg)\Bigg|\notag\\
			&=\beta_{p\sim\chi}\Bigg(\sum_{a\in\mathcal{A}}\pi(a|s)\Bigg(\Bigg|\sigma(c(s,a,\cdot)+\gamma V_{\chi,\pi}(\cdot),p(\cdot|s,a))-\sigma(c(s,a,\cdot)+\gamma V_{\delta_{\bar{q}},\pi}(\cdot),p(\cdot|s,a))\Bigg|\notag\\
            &\qquad\qquad\qquad\qquad+\Bigg|\sigma(c(s,a,\cdot)+\gamma V_{\delta_{\bar{q}},\pi}(\cdot),p(\cdot|s,a))-\sigma(c(s,a,\cdot)+\gamma V_{\delta_{\bar{q}},\pi}(\cdot),\bar{q}(\cdot|s,a))\Bigg|\Bigg)\Bigg).\notag
		\end{align}
	By the proof of Lemma \ref{lemma_1} and Assumption \ref{ass_sigma}, we have 
		\begin{align}
			|V_{\chi,\pi}(s)-V_{\delta_{\bar{q}},\pi}(s)| & \leq\beta_{p\sim\chi}\Bigg(\sum_{a\in\mathcal{A}}\pi(a|s)\Bigg(\gamma\Vert V-V \Vert_{\infty}+ B_{\sigma} \sum_{s'\in\mathcal{S}}|p(s'|s,a)-\bar{q}(s'|s,a)|\Bigg)\Bigg)\notag\\
			& \leq\gamma\Vert V_{\chi,\pi}-V_{\delta_{\bar{q}},\pi} \Vert_{\infty}+\beta_{p\sim\chi}\Bigg(\sum_{a\in\mathcal{A}}\pi(a|s)\Bigg(B_{\sigma} \sum_{s'\in\mathcal{S}}|p(s'|s,a)-\bar{q}(s'|s,a)|\Bigg)\Bigg)\notag\\
            & \leq\gamma\Vert V_{\chi,\pi}-V_{\delta_{\bar{q}},\pi} \Vert_{\infty}+B_{\sigma}\sum_{a\in\mathcal{A}}\pi(a|s)\beta_{p\sim\chi}\Bigg( \sum_{s'\in\mathcal{S}}|p(s'|s,a)-\bar{q}(s'|s,a)|\Bigg)\notag\\
            & \leq\gamma\Vert V_{\chi,\pi}-V_{\delta_{\bar{q}},\pi} \Vert_{\infty}+B_{\sigma}\max_{a\in\mathcal{A}}\beta_{p\sim\chi}\Bigg( \sum_{s'\in\mathcal{S}}|p(s'|s,a)-\bar{q}(s'|s,a)|\Bigg)\notag
		\end{align}
	which implies 
	\begin{align}
		\Vert V_{\chi,\pi}-V_{\delta_{\bar{q}},\pi}\Vert_{\infty}&\leq\gamma\Vert V_{\chi,\pi}-V_{\delta_{\bar{q}},\pi} \Vert_{\infty}+B_{\sigma}\max_{a\in\mathcal{A},s\in\mathcal{S}}\beta_{p\sim\chi}\Bigg( \sum_{s'\in\mathcal{S}}|p(s'|s,a)-\bar{q}(s'|s,a)|\Bigg),\notag\\
		\Vert V_{\chi,\pi}-V_{\delta_{\bar{q}},\pi}\Vert_{\infty}&\leq\frac{B_{\sigma}}{1-\gamma} \max_{a\in\mathcal{A},s\in\mathcal{S}}\beta_{p\sim\chi}\Bigg( \sum_{s'\in\mathcal{S}}|p(s'|s,a)-\bar{q}(s'|s,a)|\Bigg).\notag
	\end{align}
	
	(2) For the second conclusion, from the proof of Theorem \ref{theorem_1}, we have $V_{\chi}^*  = \mathcal{J}_{\chi}V_{\chi}^*.$ Therefore, we have 
		\begin{align}
			|V_{\chi}^*(s)-V_{\delta_{\bar{q}}}^*| & = |\mathcal{J}_{\chi}V_{\chi}^*(s)-\mathcal{J}_{\delta_{\bar{q}}}V_{\delta_{\bar{q}}}^*(s)|\notag\\
			&=\Bigg|\min_{a\in\mathcal{A}}\beta_{p\sim\chi}\left(\sigma(c(s,a,\cdot)+\gamma V_{\chi}^*(\cdot),p(\cdot|s,a))\right)\notag\\
			&\qquad\qquad-\min_{a\in\mathcal{A}}\beta_{p\sim\delta_{\bar{q}}}\left(\sigma(c(s,a,\cdot)+\gamma V_{\delta_{\bar{q}}}^*(\cdot),p(\cdot|s,a))\right)\Bigg|\notag\\
			&=\Bigg|\min_{a\in\mathcal{A}}\beta_{p\sim\chi}\left(\sigma(c(s,a,\cdot)+\gamma V_{\chi}^*(\cdot),p(\cdot|s,a))\right)\notag\\
			&\qquad\qquad-\min_{a\in\mathcal{A}}\sigma(c(s,a,\cdot)+\gamma V_{\delta_{\bar{q}}}^*(\cdot),\bar{q}(\cdot|s,a))\Bigg|\notag\\
			&\leq \max_{a\in\mathcal{A}}\beta_{p\sim\chi}\Bigg(\Bigg|\sigma(c(s,a,\cdot)+\gamma V_{\chi}^*(\cdot),p(\cdot|s,a))\notag\\
			&\qquad\qquad\qquad\qquad-\sigma(c(s,a,\cdot)+\gamma V_{\delta_{\bar{q}}}^*(\cdot),\bar{q}(\cdot|s,a))\Bigg|\Bigg)\notag\\
			&=\max_{a\in\mathcal{A}}\beta_{p\sim\chi}\Bigg(\Bigg|\sigma(c(s,a,\cdot)+\gamma V_{\chi}^*(\cdot),p(\cdot|s,a))-\sigma(c(s,a,\cdot)+\gamma V_{\delta_{\bar{q}}}^*(\cdot),p(\cdot|s,a))\Bigg|\notag\\
			&\qquad\qquad\qquad\qquad+\Bigg|\sigma(c(s,a,\cdot)+\gamma V_{\delta_{\bar{q}}}^*(\cdot),p(\cdot|s,a))-\sigma(c(s,a,\cdot)+\gamma V_{\delta_{\bar{q}}}^*(\cdot),\bar{q}(\cdot|s,a))\Bigg|\Bigg).\notag
		\end{align}
	By the proof of Lemma \ref{lemma_1} and Assumption \ref{ass_sigma}, we have 
	\begin{align}
			|V_{\chi,\pi}(s)-V_{\delta_{\bar{q}},\pi}(s)| & \leq\max_{a\in\mathcal{A}}\beta_{p\sim\chi}\Bigg(\gamma\Vert V-V \Vert_{\infty}+ B_{\sigma} \sum_{s'\in\mathcal{S}}|p(s'|s,a)-\bar{q}(s'|s,a)|\Bigg)\notag\\
			& \leq\gamma\Vert V_{\chi,\pi}-V_{\delta_{\bar{q}},\pi} \Vert_{\infty}+\max_{a\in\mathcal{A}}\beta_{p\sim\chi}\Bigg(B_{\sigma} \sum_{s'\in\mathcal{S}}|p(s'|s,a)-\bar{q}(s'|s,a)|\Bigg)\notag\\
            & \leq\gamma\Vert V_{\chi,\pi}-V_{\delta_{\bar{q}},\pi} \Vert_{\infty}+B_{\sigma}\max_{a\in\mathcal{A}}\beta_{p\sim\chi}\Bigg( \sum_{s'\in\mathcal{S}}|p(s'|s,a)-\bar{q}(s'|s,a)|\Bigg)\notag
		\end{align}
	which implies 
	\begin{align}
		\Vert V_{\chi,\pi}-V_{\delta_{\bar{q}},\pi}\Vert_{\infty}&\leq\gamma\Vert V_{\chi,\pi}-V_{\delta_{\bar{q}},\pi} \Vert_{\infty}+B_{\sigma}\max_{a\in\mathcal{A},s\in\mathcal{S}}\beta_{p\sim\chi}\Bigg( \sum_{s'\in\mathcal{S}}|p(s'|s,a)-\bar{q}(s'|s,a)|\Bigg),\notag\\
		\Vert V_{\chi,\pi}-V_{\delta_{\bar{q}},\pi}\Vert_{\infty}&\leq\frac{B_{\sigma}}{1-\gamma} \max_{a\in\mathcal{A},s\in\mathcal{S}}\beta_{p\sim\chi}\Bigg( \sum_{s'\in\mathcal{S}}|p(s'|s,a)-\bar{q}(s'|s,a)|\Bigg).\notag
	\end{align}
	
\end{proof}

\begin{proof}[Proof of Corollary \ref{corol2}]
	Similar to the proof of Corollary \ref{coro1}, we only need to show the conclusion holds on $\left\{S_0=s\right\}$ for any $s\in\mathcal{S},$ on which $\text{Risk}(\chi,\pi,\mu_0,\pi) = V_{\chi,\pi}(s).$ On  $\left\{S_0=s\right\},$ we have \begin{equation}
		\begin{aligned}
			|\text{Risk}(\delta_{\bar{q}},\pi^*_{\bar{q}},\mu_0,\pi^*_{\bar{q}}) - \text{Risk}(\delta_{\bar{q}},\pi^*_{\chi},\mu_0,\pi^*_{\chi})| &= |V_{\delta_{\bar{q}},\pi^*_{\bar{q}}}(s)-V_{\delta_{\bar{q}},\pi^*_{\chi}}(s)|.
		\end{aligned}
	\notag\end{equation}
	Furthermore, 
	\begin{equation}
		\begin{aligned}
			|V_{\delta_{\bar{q}},\pi^*_{\bar{q}}}(s)-V_{\delta_{\bar{q}},\pi^*_{\chi}}(s)| & =|V_{\delta_{\bar{q}}}^*(s)-V_{\delta_{\bar{q}},\pi^*_{\chi}}(s)| \\
			& =|V_{\delta_{\bar{q}}}^*(s)-V_{\chi}^*(s)+V_{\chi}^*(s)-V_{\delta_{\bar{q}},\pi^*_{\chi}}(s)| \\
			& =|V_{\delta_{\bar{q}}}^*(s)-V_{\chi}^*(s)+V_{\chi,\pi^*_{\chi}}(s)-V_{\delta_{\bar{q}},\pi^*_{\chi}}(s)| \\
			&\leq |V_{\delta_{\bar{q}}}^*(s)-V_{\chi}^*(s)|+|V_{\chi,\pi^*_{\chi}}(s)-V_{\delta_{\bar{q}},\pi^*_{\chi}}(s)|.
		\end{aligned}
	\notag\end{equation}
	By Theorem \ref{theorem_2}, we have
	\begin{align}
		|V_{\delta_{\bar{q}}}^*(s)-V_{\chi}^*(s)| \leq \frac{B_{\sigma}}{1-\gamma} \max_{a\in\mathcal{A},s\in\mathcal{S}}\beta_{p\sim\chi}\Bigg( \sum_{s'\in\mathcal{S}}|p(s'|s,a)-\bar{q}(s'|s,a)|\Bigg),\notag\\
		|V_{\chi,\pi^*_{\chi}}(s)-V_{\delta_{\bar{q}},\pi^*_{\chi}}(s)| \leq\frac{B_{\sigma}}{1-\gamma} \max_{a\in\mathcal{A},s\in\mathcal{S}}\beta_{p\sim\chi}\Bigg( \sum_{s'\in\mathcal{S}}|p(s'|s,a)-\bar{q}(s'|s,a)|\Bigg),\notag
	\end{align}
	Thus the conclusion follows.
	
\end{proof}

	\begin{proof}[Proof of Proposition  \ref{important_prop}]
		\begin{itemize}
			\item[(1)] For any $N\geq 1 $ and $\mu\in  \tilde{\mathcal{V}}_N (\chi),$ we have that with probability $1,$
			\begin{equation}\begin{aligned}
					&\int_{\mathscr{P}} \mu(p) \mathrm{d}  F_{\chi}(p) =  \sum_{i=1}^N\int_{D_i}\hat{\mu}(p_i)\mathrm{d} F_{\chi}(p) =\frac{1}{N}\sum_{i=1}^N \hat{\mu}(p_i)=1,\\
					&\mu(p)=\sum_{i=1}^N\hat{\mu}(p_i)\mathds 1_{D_i}\geq 0 ,\ \forall p\in\mathscr{P},\\
					& w_k(\mu(p))=\sum_{i=1}^N w_k\left(\hat{\mu}(p_i)\right) \mathds 1_{D_i} \leq 0,\ k\in\mathcal{K},\\
					&  \int_{\mathscr{P}} g_e(\mu(p)) \mathrm{d} F_{\chi}(p)=\sum_{i=1}^N\int_{D_i}  g_e(\hat{\mu}(p_i)) \mathrm{d} F_{\chi}(p)=\frac{1}{N}\sum_{i=1}^N g_e(\hat{\mu}(p_i))  \leq 0,\ \forall e \in \mathcal{E},
			\end{aligned}\notag\end{equation} which implies $\mu\in\mathcal{V}(\chi).$ Therefore we have $\tilde{\mathcal{V}}_N (\chi)\subset \mathcal{V}(\chi)$ for any $N\geq 1 $ and 
			\begin{equation}
				\sup_{N\geq 1}\sup_{\mu\in\tilde{\mathcal{V}}_N (\chi)} \Vert{\mu}\Vert_{m_1} \leq \sup_{\mu\in{\mathcal{V}} (\chi)} \Vert{\mu}\Vert_{m_1}<\infty.
			\notag\end{equation}
			\item[(2)] First, we note that 
			\begin{equation}
				W_N = \max_{\tilde{\mu}_N\in  \tilde{\mathcal{V}}_N (\chi)} \sum_{i=1}^N\int_{D_i}\tilde{\mu}_N(p)\cdot\sigma(c(s,a,\cdot)+\gamma V(\cdot),p_i(\cdot|s,a))\mathrm{d} F_{\chi}(p).
			\notag\end{equation}
			Moreover, 
			\begin{equation}\begin{aligned} &\Bigg|\int_{\mathscr{P}}\tilde{\mu}_N(p)\cdot\sigma(c(s,a,\cdot)+\gamma V(\cdot),p(\cdot|s,a))\mathrm{d} F_{\chi}(p)\notag\\
					&\qquad-\sum_{i=1}^N\int_{D_i}\tilde{\mu}_N(p)\cdot\sigma(c(s,a,\cdot)+\gamma V(\cdot),p_i(\cdot|s,a))\mathrm{d} F_{\chi}(p)\Bigg|\notag\\
					\leq & \sum_{i=1}^N \int_{D_i}\tilde{\mu}_N(p)|\sigma(c(s,a,\cdot)+\gamma V(\cdot),p(\cdot|s,a))\notag\\
					&\qquad-\sigma(c(s,a,\cdot)+\gamma V(\cdot),p_i(\cdot|s,a))| \mathrm{d} F_{\chi}(p)\notag\\
					\leq& \sum_{i=1}^N \int_{D_i}\tilde{\mu}_N(p) \max_{1\leq i \leq N}\sup_{p\in D_i}|\sigma(c(s,a,\cdot)+\gamma V(\cdot),p(\cdot|s,a))\notag\\
					&\qquad-\sigma(c(s,a,\cdot)+\gamma V(\cdot),p_i(\cdot|s,a))|\mathrm{d} F_{\chi}(p)\notag\\
					\leq&\max_{1\leq i \leq N}\sup_{p\in D_i}B_{\sigma} \cdot \left(\sum_{s'\in\mathcal{S}}\left|p(s'|s,a)-p_i(s'|s,a) \right|\right)\notag\\
					\leq&\max_{1\leq i \leq N}\sup_{p\in D_i}B_{\sigma} \cdot\Vert p-p_i \Vert \notag\\
					\leq&B_{\sigma} \cdot  \left(\max_{1\leq i\leq N}\text{diam}(D_i)+ \max_{1\leq i\leq N}\text{dist}(p_i,D_i)\right).\notag
			\end{aligned}\end{equation}
			Therefore, 
			\begin{equation}\begin{aligned}
					&\Bigg|W_N -\max_{\tilde{\mu}_N\in\tilde{\mathcal{V}}_N(\chi)}\int_{\mathscr{P}}\tilde{\mu}_N(p)\cdot\sigma(c(s,a,\cdot)+\gamma V(\cdot),p(\cdot|s,a))\mathrm{d} F_{\chi}(p)\Bigg|\\
					\leq & \max_{\tilde{\mu}_N\in\tilde{\mathcal{V}}_N(\chi)} \Bigg|\int_{\mathscr{P}}\tilde{\mu}_N(p)\cdot\sigma(c(s,a,\cdot)+\gamma V(\cdot),p(\cdot|s,a))\mathrm{d} F_{\chi}(p)\notag\\
					&\qquad-\sum_{i=1}^N\int_{D_i}\tilde{\mu}_N(p)\cdot\sigma(c(s,a,\cdot)+\gamma V(\cdot),p_i(\cdot|s,a))\mathrm{d} F_{\chi}(p)\Bigg|\notag\\
					\to& 0\ (N\to \infty),
			\end{aligned}\end{equation}    uniformly for any $V$ with $\Vert V\Vert_{\infty}\leq \frac{\bar{C}}{1-\gamma},$  $\mathbb{P}^{\text{sample}}$-almost surely.
			\item[(3)] For any $N\geq 1$ and $\mu\in\mathcal{V}(\chi),$ let $\hat{\mu}(p) = N\int_{D_i} \mu(p) \mathrm{d} F_{\chi}(p) $ and $\tilde{\mu}_N(p) = \sum_{i=1}^N \hat{\mu}(p)\mathds 1_{D_i}.$ Then we have that with probability $1,$
			\begin{equation}\begin{aligned}
					&\frac{1}{N}\sum_{i=1}^N \hat{\mu}(p_i)=\sum_{i=1}^N\int_{D_i} \mu(p) \mathrm{d} F_{\chi}(p)=1 ,\\
					&\hat{\mu}(p_i)=N\int_{D_i} \mu(p) \mathrm{d} F_{\chi}(p)\geq 0 ,\ \forall 1\leq i\leq N,\\
					& w_k(\hat{\mu}(p_i))=w_k\left(N\int_{D_i} \mu(p) \mathrm{d} F_{\chi}(p)\right)\\
					&\qquad\qquad\ \leq N\int_{D_i}w_k\left({\mu}(p)\right)\mathrm{d} F_{\chi}(p)  \leq 0,\ \forall 1\leq i\leq N,\ k\in\mathcal{K},\\
					&  \frac 1N \sum_{i=1}^N g_e(\hat{\mu}(p_i)) =\frac 1N \sum_{i=1}^N g_e\left(N\int_{D_i} \mu(p) \mathrm{d} F_{\chi}(p)\right)\\
					&\qquad\qquad\ \leq\sum_{i=1}^N\int_{D_i} g_e(\mu(p))\mathrm{d} F_{\chi}(p)= \int_{\mathscr{P}} g_e(\mu(p))\mathrm{d} F_{\chi}(p)\leq 0,\ \forall e \in \mathcal{E},
			\end{aligned}\notag\end{equation}
			which implies $\tilde{\mu}_N \in \tilde{\mathcal{V}}_N(\chi).$ Since $\text{diam}(D_i)\to 0$ almost surely, by Lebesgue Differentiation Theorem we have $\tilde{\mu}_N \rightharpoonup\mu$ in $\mathscr{L}_{m_1} $ weak topology. For any sequence $\left\{\tilde{\mu}_{N}\right\}^{\infty}_{N=1}$ with $\tilde{\mu}_N\in\tilde{\mathcal{V}}_N(\chi),$ by conclusion (1) we have $\left\{\tilde{\mu}_{N}\right\}^{\infty}_{N=1}\subset\mathcal{V}(\chi).$ As $\mathcal{V}(\chi)$ is convex and bounded, it is weakly sequentially compact. Therefore, there exists a subsequence $\left\{\tilde{\mu}_{N_k}\right\}^{\infty}_{k=1}$ such that $\tilde{\mu}_{N_k} \rightharpoonup\mu$ in $\mathscr{L}_{m_1} $ weak topology for some $\mu\in\mathcal{V}(\chi).$

		\end{itemize}
	
Furthermore, due to the continuity of $\sigma,$ we have 
\begin{equation}
    \sup_{V:\Vert V\Vert\leq \frac{\bar{C}}{1-\gamma}} \int_{\mathscr{P}} (\sigma(c(s,a,\cdot)+\gamma V(\cdot),p(\cdot|s,a)))^{m_1} \mathrm d F_{\chi}(p) <\infty.\notag
\end{equation}
Then by Banach–Steinhaus Theorem, we have 
\begin{equation}
\begin{aligned}
    \sup_{V:\Vert V\Vert\leq \frac{\bar{C}}{1-\gamma}}\left|\langle\tilde{\mu}_N-\mu,\sigma(c(s,a,\cdot)+\gamma V(\cdot),p(\cdot|s,a))\rangle\right|\to 0 \  (N\to \infty),\\
    \sup_{V:\Vert V\Vert\leq \frac{\bar{C}}{1-\gamma}}\left|\langle\tilde{\mu}_{N_k}-\mu,\sigma(c(s,a,\cdot)+\gamma V(\cdot),p(\cdot|s,a))\rangle\right|\to 0 \  (N\to \infty),
    \end{aligned}\notag
\end{equation}
    which implies that both convergence results are uniform.
    \end{proof}
	
\begin{proof}[Proof of Theorem \ref{theorem_convergenceBE}]

	By (2) in Proposition \ref{important_prop}, it is sufficient to show that for any $(s,a)\in\mathcal{S}\times\mathcal{A},$ \begin{equation}\begin{aligned}
			&\lim_{N\to\infty} \Bigg|\max_{\mu\in{\mathcal{V}}(\chi)}\int_{\mathscr{P}}\mu(p)\cdot\sigma(c(s,a,\cdot)+\gamma V(\cdot),p(\cdot|s,a))\mathrm{d} F_{\chi}(p) \\
			&\qquad\qquad\ -\max_{\mu\in\tilde{\mathcal{V}}_N(\chi)}\int_{\mathscr{P}}\mu(p)\cdot\sigma(c(s,a,\cdot)+\gamma V(\cdot),p(\cdot|s,a))\mathrm{d} F_{\chi}(p)\Bigg| = 0.
	\end{aligned}\notag\end{equation} 
	We define
	\begin{equation}\begin{aligned}
			&\mu^* = \arg\max_{\mu\in{\mathcal{V}}(\chi)}\int_{\mathscr{P}}\mu(p)\cdot\sigma(c(s,a,\cdot)+\gamma V(\cdot),p(\cdot|s,a))\mathrm{d} F_{\chi}(p)\\
			&\mu^*_N =   \arg\max_{\mu\in\tilde{\mathcal{V}}_N(\chi)}\int_{\mathscr{P}}\mu(p)\cdot\sigma(c(s,a,\cdot)+\gamma V(\cdot),p(\cdot|s,a))\mathrm{d} F_{\chi}(p),N\geq 1
	\end{aligned}\notag\end{equation}
	On one hand, Proposition \ref{important_prop} (3), with probability $1,$ there exists a sequence $\left\{{\mu}_{N}\right\}^{\infty}_{N=1}$ with ${\mu}_N\in\tilde{\mathcal{V}}_N(\chi)$ such that ${\mu}_N \rightharpoonup\mu^*.$ Thus we have 
	\begin{equation}\begin{aligned}
			&\liminf_{N\to\infty} \int_{\mathscr{P}}\mu^*_{N}(p)\cdot\sigma(c(s,a,\cdot)+\gamma V(\cdot),p(\cdot|s,a))\mathrm{d} F_{\chi}(p)\\
			\geq &\liminf_{N\to\infty} \int_{\mathscr{P}}\mu_{N}(p)\cdot\sigma(c(s,a,\cdot)+\gamma V(\cdot),p(\cdot|s,a))\mathrm{d} F_{\chi}(p)\\
			=&\int_{\mathscr{P}}\mu^*(p)\cdot\sigma(c(s,a,\cdot)+\gamma V(\cdot),p(\cdot|s,a))\mathrm{d} F_{\chi}(p)\\
			=&\max_{\mu\in{\mathcal{V}}(\chi)}\int_{\mathscr{P}}\mu(p)\cdot\sigma(c(s,a,\cdot)+\gamma V(\cdot),p(\cdot|s,a))\mathrm{d} F_{\chi}(p),
	\end{aligned}\notag\end{equation}  and this holds uniformly for $V$ with bound $\frac{\bar{C}}{1-\gamma}.$ 
	On the other hand, by Proposition \ref{important_prop} (1), with probability $1,$ $\left\{\mu^*_N\right\}_{N=1}^{\infty}\subset\mathcal{V}(\chi).$ Thus, we have
	\begin{equation}\begin{aligned}
			&\limsup_{N\to\infty} \int_{\mathscr{P}}\mu^*_{N}(p)\cdot\sigma(c(s,a,\cdot)+\gamma V(\cdot),p(\cdot|s,a))\mathrm{d} F_{\chi}(p)\\
			\leq&\int_{\mathscr{P}}\mu^*(p)\cdot\sigma(c(s,a,\cdot)+\gamma V(\cdot),p(\cdot|s,a))\mathrm{d} F_{\chi}(p)\\
			=&\max_{\mu\in{\mathcal{V}}(\chi)}\int_{\mathscr{P}}\mu(p)\cdot\sigma(c(s,a,\cdot)+\gamma V(\cdot),p(\cdot|s,a))\mathrm{d} F_{\chi}(p),
	\end{aligned}\notag\end{equation} and this holds uniformly for $V$ with bound $\frac{\bar{C}}{1-\gamma}.$
	Then, we have for any value function $V,$
    \begin{equation}\label{convegence}
    \begin{aligned}
        &\max_{\mu\in\tilde{\mathcal{V}}_N(\chi)}\int_{\mathscr{P}}\mu(p)\cdot\sigma(c(s,a,\cdot)+\gamma V(\cdot),p(\cdot|s,a))\mathrm{d} F_{\chi}(p)\\
        \to &\max_{\mu\in{\mathcal{V}}(\chi)}\int_{\mathscr{P}}\mu(p)\cdot\sigma(c(s,a,\cdot)+\gamma V(\cdot),p(\cdot|s,a))\mathrm{d} F_{\chi}(p).
        \end{aligned}
    \end{equation}

 Furthermore, the convergence in \eqref{convegence} is uniform for $V$ with $\Vert V\Vert_{\infty}\leq \frac{\bar{C}}{1-\gamma}.$ Combining this with Proposition \ref{important_prop} (2), we get the conclusion. 
\end{proof}

\begin{proof}[Proof of Theorem \ref{theorem_posterior}]
   Denote $\inf_{u\geq 1}\epsilon_{(u)} >0$ by $\underline{\epsilon}\ (\underline{\epsilon}>0).$ We have \begin{equation}
       \mathbb{P}(A_t=a|S_t=s)\geq \frac{\underline{\epsilon}}{|\mathcal{A}|},
   \notag\end{equation}
for any $s\in\mathcal{S},a\in\mathcal{A}$ and $t\geq0.$ Define
\begin{equation}
\begin{aligned}
    d(s,s')= \inf \{n\geq1:&\exists s_1,s_2,\ldots,s_n\in\mathcal{S},a_0,a_1,a_2,\ldots,a_n\in\mathcal{A},\ \\
    & \text{such that}\ \bar{q}(s_1|s,a_0)\bar{q}(s_2|s_1,a_1)\cdots \bar{q}(s_n|s_{n-1},a_{n-1})\bar{q}(s'|s_n,a_n)>0\} ,
\end{aligned}
\notag\end{equation}
 and $d = \max_{s,s'}d(s,a).$ Furthermore, we define 
\begin{equation}
\begin{aligned}
    \delta(s,s')= \max \{x\in\mathbb{R}:&\exists s_1,s_2,\ldots,s_n\in\mathcal{S},a_0,a_1,a_2,\ldots,a_n\in\mathcal{A},\ \text{such that}\ n\leq d(s,s')\\
    &\text{and}\  x=\bar{q}(s_1|s,a_0)\bar{q}(s_2|s_1,a_1)\cdots \bar{q}(s_n|s_{n-1},a_{n-1})\bar{q}(s'|s_n,a_n)\},
\end{aligned}
\notag\end{equation} and $\delta = \inf_{s,s'}\delta(s,s').$ Therefore, for any $s,s'\in\mathcal{S},$ there exist $n(s,s')\leq d$ and $s_1,\ldots,s_{n(s,s')},$ $a_0,a_1,\ldots,a_{n(s,s')},$
\begin{equation}
\begin{aligned}
     &\mathbb{P}(S_{t+n(s,s')+1}=s'|S_t=s)\\
     \geq& \frac{\underline{\epsilon}}{|\mathcal{A}|}\bar{q}(s_1|s,a_0)\frac{\underline{\epsilon}}{|\mathcal{A}|}\bar{q}(s_2|s_1,a_1)\cdots \frac{\underline{\epsilon}}{|\mathcal{A}|}\bar{q}(s'|s_{n(s,s')},a_{n(s,s')})\\
     \geq&\left (\frac{\underline{\epsilon}}{|\mathcal{A}|}\right)^{n(s,s')+1} \delta\\
     \geq&\left (\frac{\underline{\epsilon}}{|\mathcal{A}|}\right)^{d+1} \delta.\\
     \end{aligned}
\notag\end{equation}
Thus, we have for any $s,s'\in\mathcal{S}$ and $t\geq 1,$
\begin{equation}
    \begin{aligned}
        \mathbb{P}\left(\exists k \geq 1,S_{t+k}=s'|S_t=s\right) & = 1-\mathbb{P}\left(\forall k \geq 1,S_{t+k}\neq s'|S_t=s\right)\\
        & \geq 1-\mathbb{P}\left(\cap^{\infty}_ {l=0}\{S_{t+k}\neq s',ld\leq k\leq(l+1)d\}|S_t=s\right)\\
        & \geq 1-\prod_{l=0}^{\infty}\max_{s\in\mathcal{S}}\mathbb{P}\left(S_{t+k}\neq s',ld\leq k\leq(l+1)d|S_{t+ld}=s\right)\\
        & \geq 1-\prod_{l=0}^{\infty}\left(1-\min_{s\in\mathcal{S}}\mathbb{P}\left(\exists ld\leq k\leq(l+1)d,S_{t+k}=s'|S_{t+ld}=s\right)\right)\\
        & \geq 1-\prod_{l=0}^{\infty}\left(1-\min_{s\in\mathcal{S}}\mathbb{P}\left(S_{t+n(s,s')}=s'|S_{t+ld}=s\right)\right)\\
        &\geq 1- \left(1-\left (\frac{\underline{\epsilon}}{|\mathcal{A}|}\right)^{d+1} \delta\right)^{\infty}\\
        &=1.
    \end{aligned}
\notag\end{equation}
Therefore, we have
\begin{equation}
    \begin{aligned}
        \mathbb{P}\left(S_{t+k}=s',\ \text{i.o.}|S_t=s\right) &\geq \mathbb{P}\left(\cap_{m=0}^{\infty}\{\exists k\geq u ,S_{t+k}=s'\}|S_t=s\right)\\
    &\geq \mathbb{P}\left(\lim_{u\to\infty}\{\exists k\geq u ,S_{t+k}=s'\}|S_t=s\right)\\
     &\geq \lim_{u\to\infty}\sum_{s_1\in\mathcal{S}}\mathbb{P}\left(\exists k\geq u ,S_{t+k}=s'|S_{t+u}=s_1\right)\mathbb{P}(S_{t+u}=s_1|S_t=s)\\
     &= \lim_{u\to\infty}\sum_{s_1\in\mathcal{S}}1\cdot\mathbb{P}(S_{t+u}=s_1|S_t=s)\\
     & = 1,
    \end{aligned}
\notag\end{equation}
which implies 
\begin{equation}
    \begin{aligned}
        \mathbb{P}\left(S_{t}=s,\ \text{i.o.}\right) = \sum_{s_0\in\mathcal{S}}\mathbb{P}(S_0=s_0)\mathbb{P}\left(S_{t}=s,\ \text{i.o.}|S_0=s_0\right) =1.
    \end{aligned}\notag
\end{equation}
Moreover, we have 
\begin{equation}
    \begin{aligned}
        &\mathbb{P}\left(S_{t}=s,\ \text{i.o.},\forall s\in\mathcal{S}\right) =\mathbb{P}\left(\cap_{s\in\mathcal{S}}\{S_{t}=s,\ \text{i.o.}\}\right)= 1,\\
     &   \mathbb{P}(S_t=s,A_t=a,\ \text{i.o.},\forall s\in\mathcal{S},a\in\mathcal{A}) = 1.
    \end{aligned}\notag
\end{equation}
 Considering the posterior, we have
\begin{equation}\begin{aligned}
 f_{\chi|x_{0:T}}(p) 
     &\propto f_{\chi}(p)\prod_{s,a\in\mathcal{S}\times\mathcal{A}}\prod_{s'\in\mathcal{S}}\left(\left.p\left(s'\right|s,a\right)\right)^{m_T(s,a,s')}\\
     &= f_{\chi}(p)\prod_{s,a\in\mathcal{S}\times\mathcal{A}}\exp\left(\sum_{s'\in\mathcal{S}}m_T(s,a,s')\log(p(s'|s,a))\right)\\
     &= f_{\chi}(p)\prod_{s,a\in\mathcal{S}\times\mathcal{A}}\exp\left(T\sum_{s'\in\mathcal{S}}\frac{m_T(s,a,s')}{T}\log(p(s'|s,a))\right),
 \end{aligned}\notag\end{equation}
where $m_T(s,a,s') = \sum_{t=1}^{T}\mathds 1_{\{S_t=s,A_t =a,S_{t+1}=s'\}}.$
On $\{S_t=s,A_t=a,\ \text{i.o.},\forall s\in\mathcal{S},a\in\mathcal{A}\},$ we have
\begin{equation}
    \frac{m_T(s,a,s')}{T} \sim \bar{q}(s'|s,a)\ (T\to\infty).
\notag\end{equation}
Thus, it holds that
\begin{equation}\begin{aligned}
& f_{\chi}(p)\prod_{s,a\in\mathcal{S}\times\mathcal{A}}\exp\left(T\sum_{s'\in\mathcal{S}}\frac{m_T(s,a,s')}{T}\log(p(s'|s,a))\right)\\
 =& f_{\chi}(p)\prod_{s,a\in\mathcal{S}\times\mathcal{A}}\exp\left(T\sum_{s'\in\mathcal{S}}\bar{q}(s'|s,a)\log(p(s'|s,a))+o(T)\right)\\
 =&  f_{\chi}(p)\prod_{s,a\in\mathcal{S}\times\mathcal{A}}\prod_{s'\in\mathcal{S}}\left(\left.p\left(s'\right|s,a\right)\right)^{T\bar{q}(s'|s,a)+o(T)},
 \end{aligned}\notag\end{equation}
 which implies 
 \begin{equation}
     f_{\chi|x_{0:T}}(p) = \frac{f_{\chi}(p)\prod_{s,a\in\mathcal{S}\times\mathcal{A}}\prod_{s'\in\mathcal{S}}\left(\left.p\left(s'\right|s,a\right)\right)^{T\bar{q}(s'|s,a)+o(T)}}{\int_{\mathscr{P}}f_{\chi}(p)\prod_{s,a\in\mathcal{S}\times\mathcal{A}}\prod_{s'\in\mathcal{S}}\left(\left.p\left(s'\right|s,a\right)\right)^{T\bar{q}(s'|s,a)+o(T)}\mathrm{d} p}
\label{frac}\end{equation} Denote $\sum_{s,a,s'}\bar{q}(s'|s,a)\log(p(s'|s,a))$ by $l(x).$ We have $l(p)<l(\bar{q})$ for any $p\in\mathscr{P}$ and $p\neq \bar{q}.$ Then the numerator of \eqref{frac}  satisfies 
\begin{equation}
    f_{\chi}(p)\prod_{s,a\in\mathcal{S}\times\mathcal{A}}\prod_{s'\in\mathcal{S}}\left(\left.p\left(s'\right|s,a\right)\right)^{T\cdot\bar{q}(s'|s,a)+o(T)}\sim C_1 \exp(T\cdot l(p))
\notag\end{equation}
for some constant $C_1>0$ and by Laplace's method for the asymptotic approximation of integrals (\cite{olver1997asymptotics}) the denominator satisfies 
\begin{equation}
    \int_{\mathscr{P}}f_{\chi}(p)\prod_{s,a\in\mathcal{S}\times\mathcal{A}}\prod_{s'\in\mathcal{S}}\left(\left.p\left(s'\right|s,a\right)\right)^{T\bar{q}(s'|s,a)+o(T)}\mathrm{d} p\sim C_2 \exp(T\cdot l(\bar{q}))T^{-C_3}
\notag\end{equation}
for some constants $C_2,C_3>0.$ Therefore, we have
\begin{equation}
    f_{\chi|x_{0:T}}(p) \sim \frac{C_1}{C_2}\exp(T(l(p)-l(\bar{q})))T^{C_3}.
\notag\end{equation}
Thus $f_{\chi|x_{0:T}}(p)\to0$ if $p\neq \bar{q},$ and $f_{\chi|x_{0:T}}(p)\to\infty$ if $p=\bar{q},$ and the conclusion follows. 
\end{proof}

\begin{proof}[Proof of Theorem \ref{theorem_convergenceALL}]
  
    Combining Theorem \ref{theorem_posterior} with Theorem \ref{theorem_2}, we have \begin{equation}
        \begin{aligned}
            \Vert V^*_{\chi_{(u)}}-V^*_{\delta_{\bar{q}}}\Vert_{\infty} &\leq\frac{B_{\sigma}}{1-\gamma} \max_{a\in\mathcal{A},s\in\mathcal{S}}\beta_{p\sim\chi_{(u)}} \left(\sum_{s,a,s'}\left|p(s'|s,a)-\bar{q}(s'|s,a)\right|\right)\\
            &\leq\frac{B_{\sigma}}{1-\gamma} \beta_{p\sim\chi_{(u)}} \left(\max_{s,a,s'}\left|p(s'|s,a)-\bar{q}(s'|s,a)\right|\right)\\
            &\to\frac{B_{\sigma}}{1-\gamma} \beta_{p\sim\delta_{\bar{q}}} \left(\max_{s,a,s'}\left|p(s'|s,a)-\bar{q}(s'|s,a)\right|\right)\\
            &=0,
        \end{aligned}
    \notag\end{equation}
    almost surely as $u\to\infty.$ By Theorem \ref{theorem_convergenceBE}, with probability $1,$ when $N$ is sufficiently large, there holds \begin{equation}
        \left\Vert\hat{\mathcal{J}}_{\chi_{(u)}}V_1-\hat{\mathcal{J}}_{\chi_{(u)}}V_2\right\Vert_{\infty} \leq \gamma \Vert V_1-V_2\Vert_{\infty}
    \notag\end{equation} for any $V_1,V_2$ with bound $\frac{\bar{C}}{1-\gamma}.$ By Banach’s contraction mapping principle, there exists $\tilde{V}_{(u)}$ such that $\hat{\mathcal{J}}_{\chi_{(u)}}\tilde{V}_{(u)}=\tilde{V}_{(u)}.$ Therefore, we have
    \begin{equation}
        \begin{aligned}
            \left\Vert \hat{V}^{*}_{{(u)}}-V^*_{\chi_{(u)}}\right\Vert_{\infty} &\leq \left\Vert \hat{V}^{*}_{{(u)}}-\tilde{V}_{(u)}\right\Vert+\left\Vert\tilde{V}_{(u)}-V^*_{\chi_{(u)}}\right\Vert_{\infty},\\
            \left\Vert \hat{V}^{*}_{{(u)}}-V^*_{\chi_{(u)}}\right\Vert_{\infty}&\leq \left\Vert \hat{V}^{*}_{{(u)}}-\hat{\mathcal{J}}_{\chi_{(u)}}\hat{V}^{*}_{{(u)}}\right\Vert+ \left\Vert\hat{\mathcal{J}}_{\chi_{(u)}} \hat{V}^{*}_{{(u)}}-\tilde{V}_{(u)}\right\Vert+\left\Vert\tilde{V}_{(u)}-V^*_{\chi_{(u)}}\right\Vert_{\infty},\\
            \left\Vert \hat{V}^{*}_{{(u)}}-V^*_{\chi_{(u)}}\right\Vert_{\infty}&\leq \left\Vert \hat{V}^{*}_{{(u)}}-\hat{\mathcal{J}}_{\chi_{(u)}}\hat{V}^{*}_{{(u)}}\right\Vert+ \left\Vert\hat{\mathcal{J}}_{\chi_{(u)}} \hat{V}^{*}_{{(u)}}-\hat{\mathcal{J}}_{\chi_{(u)}}\tilde{V}_{(u)}\right\Vert+\left\Vert\tilde{V}_{(u)}-V^*_{\chi_{(u)}}\right\Vert_{\infty},\\
             \left\Vert \hat{V}^{*}_{{(u)}}-V^*_{\chi_{(u)}}\right\Vert_{\infty}&\leq \left\Vert \hat{V}^{*}_{{(u)}}-\hat{\mathcal{J}}_{\chi_{(u)}}\hat{V}^{*}_{{(u)}}\right\Vert+\gamma \left\Vert\hat{V}^{*}_{{(u)}}-\tilde{V}_{(u)}\right\Vert+\left\Vert\tilde{V}_{(u)}-V^*_{\chi_{(u)}}\right\Vert_{\infty},\\
             \left\Vert \hat{V}^{*}_{{(u)}}-V^*_{\chi_{(u)}}\right\Vert_{\infty}&\leq \frac{1}{1-\gamma}\left\Vert \hat{V}^{*}_{{(u)}}-\hat{\mathcal{J}}_{\chi_{(u)}}\hat{V}^{*}_{{(u)}}\right\Vert+ \frac{1}{1-\gamma}\left\Vert\tilde{V}_{(u)}-V^*_{\chi_{(u)}}\right\Vert_{\infty},\\
             \left\Vert \hat{V}^{*}_{{(u)}}-V^*_{\chi_{(u)}}\right\Vert_{\infty}&\leq \frac{\theta}{1-\gamma}+ \frac{1}{1-\gamma}\left\Vert\tilde{V}_{(u)}-V^*_{\chi_{(u)}}\right\Vert_{\infty}.
        \end{aligned}
   \notag \end{equation}
Moreover, 
\begin{equation}
    \begin{aligned}
        \left\Vert\tilde{V}_{(u)}-V^*_{\chi_{(u)}}\right\Vert_{\infty}&\leq \left\Vert\tilde{V}_{(u)}-{\mathcal{J}}_{\chi_{(u)}}\tilde{V}_{(u)}\right\Vert_{\infty}+ \left\Vert{\mathcal{J}}_{\chi_{(u)}}\tilde{V}_{(u)}- V^*_{\chi_{(u)}}\right\Vert_{\infty}, \\
        \left\Vert\tilde{V}_{(u)}-V^*_{\chi_{(u)}}\right\Vert_{\infty}&\leq \left\Vert\hat{\mathcal{J}}_{\chi_{(u)}}\tilde{V}_{(u)}-{\mathcal{J}}_{\chi_{(u)}}\tilde{V}_{(u)}\right\Vert_{\infty}+ \left\Vert{\mathcal{J}}_{\chi_{(u)}}\tilde{V}_{(u)}- {\mathcal{J}}_{\chi_{(u)}}V^*_{\chi_{(u)}}\right\Vert_{\infty}, \\
        \left\Vert\tilde{V}_{(u)}-V^*_{\chi_{(u)}}\right\Vert_{\infty}&\leq \left\Vert\hat{\mathcal{J}}_{\chi_{(u)}}\tilde{V}_{(u)}-{\mathcal{J}}_{\chi_{(u)}}\tilde{V}_{(u)}\right\Vert_{\infty}+ \gamma\left\Vert\tilde{V}_{(u)}- V^*_{\chi_{(u)}}\right\Vert_{\infty}, \\
        \left\Vert\tilde{V}_{(u)}-V^*_{\chi_{(u)}}\right\Vert_{\infty}&\leq \frac{1}{1-\gamma}\left\Vert\hat{\mathcal{J}}_{\chi_{(u)}}\tilde{V}_{(u)}-{\mathcal{J}}_{\chi_{(u)}}\tilde{V}_{(u)}\right\Vert_{\infty}, 
    \end{aligned}
\notag\end{equation}
which approaches $0$ almost surely as $N\to\infty$ by Theorem \ref{theorem_convergenceBE}.
Finally, we have \begin{equation}
        \begin{aligned}
            \limsup_{u\to\infty,N\to\infty}\Vert \hat{V}^{*}_{{(u)}}-V^*_{\delta_{\bar{q}}}\Vert_{\infty} &= \limsup_{u\to\infty,N\to\infty}\Vert \hat{V}^{*}_{{(u)}}-V^*_{\chi_{(u)}}+ V^*_{\chi_{(u)}}-V^*_{\delta_{\bar{q}}}\Vert_{\infty}\\
            &\leq \limsup_{u\to\infty,N\to\infty}\Vert \hat{V}^{*}_{{(u)}}-V^*_{\chi_{(u)}}\Vert_{\infty}+ \limsup_{u\to\infty,N\to\infty}\Vert V^*_{\chi_{(u)}}-V^*_{\delta_{\bar{q}}}\Vert_{\infty},\\
            & =  \limsup_{u\to\infty,N\to\infty}\Vert \hat{V}^{*}_{{(u)}}-V^*_{\chi_{(u)}}\Vert_{\infty}\\
            &\leq \frac{\theta}{1-\gamma}.
        \end{aligned}
   \notag \end{equation}
\end{proof}

\begin{lemma}
    \label{cvar_lemma}
For any $\alpha\in(0,1)$ and any random variable $X$ satisfying $X\geq0$ almost surely, there holds that
\begin{equation}
    \mathrm{CVaR}_{\alpha} (X)\leq \frac{1}{\alpha} \E X.\notag
\end{equation}
\end{lemma}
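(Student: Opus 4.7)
The plan is to invoke the Rockafellar--Uryasev variational representation of CVaR, which in the notation of the paper is already built into the definition of $\sigma$ in Example~\ref{assum1_cvar}. Explicitly, for any random variable $X$,
\begin{equation}
\mathrm{CVaR}_{\alpha}(X)=\inf_{y\in\mathbb{R}}\Bigl\{y+\tfrac{1}{\alpha}\,\mathbb{E}\bigl[(X-y)^{+}\bigr]\Bigr\}.\notag
\end{equation}
The key observation is that this infimum is no larger than its value at any particular choice of $y$, so one is free to substitute a convenient test point.

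First I would take $y=0$ in the variational formula, which immediately gives the upper bound
\begin{equation}
\mathrm{CVaR}_{\alpha}(X)\le 0+\tfrac{1}{\alpha}\,\mathbb{E}\bigl[(X-0)^{+}\bigr]=\tfrac{1}{\alpha}\,\mathbb{E}\bigl[X^{+}\bigr].\notag
\end{equation}
Second, I would use the nonnegativity hypothesis $X\ge 0$ a.s.\ to conclude $X^{+}=X$ a.s., so that $\mathbb{E}[X^{+}]=\mathbb{E} X$ and the desired inequality $\mathrm{CVaR}_{\alpha}(X)\le \tfrac{1}{\alpha}\mathbb{E} X$ follows.

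As an alternative route that stays closer to the risk-envelope language used elsewhere in the paper, one could appeal to the dual representation $\mathrm{CVaR}_{\alpha}(X)=\sup_{\xi}\mathbb{E}[\xi X]$ over densities $\xi$ satisfying $0\le \xi\le 1/\alpha$ and $\mathbb{E}\xi=1$; then $X\ge 0$ together with the pointwise bound $\xi\le 1/\alpha$ yields $\mathbb{E}[\xi X]\le \tfrac{1}{\alpha}\mathbb{E} X$ for every admissible $\xi$, and taking the supremum gives the claim. There is no genuine obstacle here: both approaches are one-line arguments, and the only thing to be careful about is that the nonnegativity of $X$ is what lets us drop the positive-part operator, which would otherwise leave the weaker (and trivially true) bound in terms of $\mathbb{E}[X^{+}]$.
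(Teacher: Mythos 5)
Your proposal is correct and its main argument (the Rockafellar--Uryasev variational formula with the test point $y=0$, then using $X\ge 0$ to replace $X^{+}$ by $X$) is exactly the proof the paper gives. The alternative risk-envelope route you sketch is also valid but is not needed.
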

\begin{proof}[Proof of Lemma \ref{cvar_lemma}]
Let $\alpha\in(0,1)$ and suppose that $X\ge 0$ almost surely.  We have
\[
\mathrm{CVaR}_\alpha(X)
= \inf_{t\in\mathbb{R}}
\left\{
t+\frac{1}{\alpha}\,\mathbb{E}\big[(X-t)_+\big]
\right\},
\]
where $(u)_+ := \max\{u,0\}$. Since $X\ge 0$ almost surely, choosing $t=0$ yields
\[
\mathrm{CVaR}_\alpha(X)
\le
0+\frac{1}{\alpha}\,\mathbb{E}\big[(X-0)_+\big]
=
\frac{1}{\alpha}\,\mathbb{E}[X],
\]
because $(X)_+ = X$ almost surely when $X\ge 0$. Therefore,
\[
\mathrm{CVaR}_{\alpha}(X)\le \frac{1}{\alpha}\,\mathbb{E}[X].
\]
\end{proof}
\begin{lemma}
    \label{Dirich_lemma}
If  $Y = (Y_1, Y_2, \dots, Y_K)$ follows a Dirichlet distribution with parameter $\alpha=(\alpha_1,\ldots,\alpha_K),$ then we have \begin{equation}
    \E \sum_{i=1}^K|Y_i-\E Y_i| \leq \sqrt{\frac{K}{\sum_{i=1}^K\alpha_i+1}}.\notag
\end{equation} 
\end{lemma}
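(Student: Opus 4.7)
The plan is a two-step Cauchy--Schwarz (or Jensen) argument that reduces the mean absolute deviation to the variances, followed by a direct computation using the known Dirichlet moments. Write $\alpha_0=\sum_{i=1}^K\alpha_i$ and $p_i=\alpha_i/\alpha_0=\mathbb{E}[Y_i]$. The standard formulas for the Dirichlet distribution give
\[
\mathrm{Var}(Y_i)=\frac{\alpha_i(\alpha_0-\alpha_i)}{\alpha_0^2(\alpha_0+1)}=\frac{p_i(1-p_i)}{\alpha_0+1}.
\]

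First I would bound each $\mathbb{E}|Y_i-\mathbb{E}Y_i|$ by $\sqrt{\mathrm{Var}(Y_i)}$ using Jensen's inequality (equivalently, Cauchy--Schwarz applied to $|Y_i-\mathbb{E}Y_i|\cdot 1$). Summing over $i$ and applying Cauchy--Schwarz again on the sum of $K$ terms yields
\[
\mathbb{E}\sum_{i=1}^K|Y_i-\mathbb{E}Y_i|\;\le\;\sum_{i=1}^K\sqrt{\mathrm{Var}(Y_i)}\;\le\;\sqrt{K\sum_{i=1}^K\mathrm{Var}(Y_i)}.
\]

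Next I would plug in the variance formula and simplify:
\[
\sum_{i=1}^K\mathrm{Var}(Y_i)=\frac{1}{\alpha_0+1}\sum_{i=1}^K p_i(1-p_i)=\frac{1-\sum_i p_i^2}{\alpha_0+1}.
\]
Since $\sum_i p_i=1$, Cauchy--Schwarz gives $\sum_i p_i^2\ge 1/K$, hence $\sum_i\mathrm{Var}(Y_i)\le \frac{1-1/K}{\alpha_0+1}=\frac{K-1}{K(\alpha_0+1)}$. Combining with the previous display,
\[
\mathbb{E}\sum_{i=1}^K|Y_i-\mathbb{E}Y_i|\;\le\;\sqrt{K\cdot\frac{K-1}{K(\alpha_0+1)}}\;=\;\sqrt{\frac{K-1}{\alpha_0+1}}\;\le\;\sqrt{\frac{K}{\alpha_0+1}},
\]
which is the claimed bound. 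There is no serious obstacle here; the only mildly delicate point is keeping track of the direction of Cauchy--Schwarz in the two applications (first, passing from $L^1$ to $L^2$ on each coordinate; second, passing from a sum of $K$ square roots to the square root of the sum), and the trivial bound $\sum_i p_i^2\ge 1/K$ at the end to absorb the factor $K-1$ into $K$.
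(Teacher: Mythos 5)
Your proof is correct and follows essentially the same route as the paper's: two applications of Cauchy--Schwarz/Jensen to reduce the mean absolute deviation to $\sqrt{K\sum_i \mathrm{Var}(Y_i)}$, followed by the Dirichlet variance formula. The only (immaterial) differences are the order in which the two inequalities are applied and your use of $\sum_i p_i^2\ge 1/K$ to get the marginally sharper intermediate constant $K-1$ in place of $K$, which the paper does not bother with.
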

\begin{proof}[Proof of Lemma \ref{Dirich_lemma}]
Let $\alpha_0 := \sum_{i=1}^K \alpha_i$. For a Dirichlet random vector
$Y=(Y_1,\dots,Y_K)\sim \mathrm{Dirichlet}(\alpha)$, we have
\[
\mathbb{E}[Y_i]=\frac{\alpha_i}{\alpha_0}, 
\qquad 
\mathrm{Var}(Y_i)=\frac{\alpha_i(\alpha_0-\alpha_i)}{\alpha_0^2(\alpha_0+1)}.
\] By Cauchy--Schwarz,
\[
\mathbb{E}\left[\sum_{i=1}^K |Y_i-\mathbb{E}Y_i|\right]
\le \sqrt{K}\;\mathbb{E}\left[\Big(\sum_{i=1}^K (Y_i-\mathbb{E}Y_i)^2\Big)^{1/2}\right]
\le \sqrt{K}\;\left(\mathbb{E}\sum_{i=1}^K (Y_i-\mathbb{E}Y_i)^2\right)^{1/2},
\]
where the last step uses Jensen's inequality since $x\mapsto \sqrt{x}$ is concave. Moreover,
\[
\mathbb{E}\sum_{i=1}^K (Y_i-\mathbb{E}Y_i)^2
=\sum_{i=1}^K \mathrm{Var}(Y_i)
=\sum_{i=1}^K \frac{\alpha_i(\alpha_0-\alpha_i)}{\alpha_0^2(\alpha_0+1)}
=\frac{\alpha_0^2-\sum_{i=1}^K \alpha_i^2}{\alpha_0^2(\alpha_0+1)}
=\frac{1-\sum_{i=1}^K (\alpha_i/\alpha_0)^2}{\alpha_0+1}
\le \frac{1}{\alpha_0+1},
\]
since $\sum_{i=1}^K (\alpha_i/\alpha_0)^2 \ge 0$. Putting these together,
\[
\mathbb{E}\sum_{i=1}^K |Y_i-\mathbb{E}Y_i|
\le \sqrt{K}\;\sqrt{\frac{1}{\alpha_0+1}}
= \sqrt{\frac{K}{\sum_{i=1}^K \alpha_i+1}}.
\]
This proves the lemma.
\end{proof}
\begin{proof}[Proof of Theorem \ref{thm:sample_complexity}]
    According to Corollary \ref{corol2} and Lemma \ref{cvar_lemma}, we have \begin{align}
        &\mathbb{P}\!\left(
\big|
\mathrm{Risk}(\delta_{\bar{q}},\pi^*_{\bar{q}},\mu_0,\pi^*_{\bar{q}})
-\mathrm{Risk}(\delta_{\bar{q}},\pi^*_{\chi_T},\mu_0,\pi^*_{\chi_T})
\big|
\ge \theta
\right)\notag\\\leq&  \mathbb{P} \left(\frac{2B_{\sigma}}{1-\gamma} \max_{s\in\mathcal{S},a\in\mathcal{A}} \beta_{p\sim\chi} \left(\sum_{s'\in\mathcal{S}}\left|p(s'|s,a)-\bar{q}(s'|s,a)\right|\right)\geq \theta\right)\notag\\
\leq&  \mathbb{P} \left(\frac{2B_{\sigma}}{1-\gamma} \max_{s\in\mathcal{S},a\in\mathcal{A}} \frac{1}{1-\alpha_2}\E_{p\sim\chi} \left(\sum_{s'\in\mathcal{S}}\left|p(s'|s,a)-\bar{q}(s'|s,a)\right|\right)\geq \theta\right)\notag\\
=&  \mathbb{P} \left(\frac{2\bar{C}}{(1-\gamma)^2\alpha_{1}\alpha_2} \max_{s\in\mathcal{S},a\in\mathcal{A}} \E_{p\sim\chi} \left(\sum_{s'\in\mathcal{S}}\left|p(s'|s,a)-\bar{q}(s'|s,a)\right|\right)\geq \theta\right)\notag\\
=&\mathbb{P} \left( \max_{s\in\mathcal{S},a\in\mathcal{A}} \E_{p\sim\chi} \left(\sum_{s'\in\mathcal{S}}\left|p(s'|s,a)-\bar{q}(s'|s,a)\right|\right)\geq \theta'\right),\notag
    \end{align} where $\theta' = \frac{(1-\gamma)^2\alpha_{1}\alpha_2\theta}{2\bar{C}}.$  Denote by $N_{s,a}$ the visiting number of state-action pair $(s,a)$ and $N_{\min} = \min_{s,a}N_{s,a}.$  For any $(s,a),$
    \begin{align}
    &\E_{p\sim\chi} \left(\sum_{s'\in\mathcal{S}}\left|p(s'|s,a)-\bar{q}(s'|s,a)\right|\right)\notag \\=& \E_{p\sim\chi} \left(\left\Vert p(\cdot|s,a)-\bar{q}(\cdot|s,a)\right\Vert_1\right) \notag\\
    \leq & \E_{p\sim\chi} \left(\left\Vert p(\cdot|s,a)-\E_{p\sim\chi} p(\cdot|s,a)\right\Vert_1\right) + \E_{p\sim\chi} \left(\left\Vert \E_{p\sim\chi} p(\cdot|s,a)-\bar{q}(\cdot|s,a)\right\Vert_1\right).\notag
    \end{align}

Using Lemma \ref{Dirich_lemma}, we have 
\begin{equation}
    \E_{p\sim\chi} \left(\left\Vert p(\cdot|s,a)-\E_{p\sim\chi} p(\cdot|s,a)\right\Vert_1\right) \leq  \sqrt{\frac{|\mathcal{S}|}{N_{s,a}+\sum_{s_i}\alpha_0(s_i|s,a)+1}} \leq \sqrt{\frac{|\mathcal{S}|}{N_{s,a}+1}} .\notag
\end{equation}
For the second term, \begin{align}
     &\E_{p\sim\chi} p(s_i|s,a) \notag\\
     =&\frac{\alpha_0(s_i|s,a)+m(s,a,s_i)}{\sum_{s'}\alpha_0(s'|s,a)+N_{s,a}}\notag\\
     =&\frac{N_{s,a}}{\sum_{s'}\alpha_0(s'|s,a)+N_{s,a}}\cdot\frac{m(s,a,s_i)}{N_s,a} + \frac{\sum_{s'}\alpha_0(s'|s,a)}{\sum_{s'}\alpha_0(s'|s,a)+N_{s,a}}\cdot\frac{\alpha_0(s_i|s,a)}{\sum_{s'}\alpha_0(s'|s,a)}\notag \\
     \triangleq& \lambda \hat{p} (s_i|s,a)+(1-\lambda ) p^{\text{prior}}(s_i|s,a),\notag
\end{align} where $\lambda = \frac{N_{s,a}}{\sum_{s'}\alpha_0(s'|s,a)+N_{s,a}} \in(0,1).$ Therefore, 
\begin{align}
    &\E_{p\sim\chi} \left(\left\Vert \E_{p\sim\chi} p(\cdot|s,a)-\bar{q}(\cdot|s,a)\right\Vert_1\right) \notag\\=& \E_{p\sim\chi} \left(\left\Vert   \lambda \hat{p} (\cdot|s,a)+(1-\lambda ) p^{\text{prior}}(\cdot|s,a)-\bar{q}(\cdot|s,a)\right\Vert_1\right) \notag\\
    \leq & \lambda\E_{p\sim\chi} \left(\left\Vert    \hat{p} (\cdot|s,a)-\bar{q}(\cdot|s,a)\right\Vert_1\right) + (1-\lambda) \E_{p\sim\chi} \left(\left\Vert    {p}^{\text{prior}} (\cdot|s,a)-\bar{q}(\cdot|s,a)\right\Vert_1\right)\notag\\
    \leq & \E_{p\sim\chi} \left(\left\Vert    \hat{p} (\cdot|s,a)-\bar{q}(\cdot|s,a)\right\Vert_1\right) +  2(1-\lambda)\notag\\
    \leq & \E_{p\sim\chi} \left(\left\Vert    \hat{p} (\cdot|s,a)-\bar{q}(\cdot|s,a)\right\Vert_1\right) +  \frac{2\bar{A}_0}{\bar{A}_0+N_{s,a}}.\notag
\end{align}
Furthermore, we have
\begin{align}
         &\E_{p\sim\chi} \left(\left\Vert    \hat{p} (\cdot|s,a)-\bar{q}(\cdot|s,a)\right\Vert_1\right)\notag \\
         =& \sum_{s_i} \E_{p\sim\chi} \left(\left\vert    \hat{p} (s_i|s,a)-\bar{q}(s_i|s,a)\right\vert\right)\notag \\
         \leq & \sum_{s_i}\sqrt{\text{Var}_{p\sim\chi}\left(\left\vert    \hat{p} (s_i|s,a)\right\vert\right)} \notag\\
         = & \sum_{s_i}\sqrt{\frac{\bar{q}(s_i|s,a)(1-\bar{q}(s_i|s,a))}{N_{s,a}}} \notag\\
         \leq & \sqrt{|\mathcal{S}|} \sqrt{\frac{\sum_{s_i}\bar{q}(s_i|s,a)(1-\bar{q}(s_i|s,a))}{N_{s,a}}}\notag \\
         \leq &\sqrt{\frac{|\mathcal{S}|}{N_{s,a}}}.\notag
\end{align}
Overall, it holds that
\begin{align}
    \E_{p\sim\chi} \left(\sum_{s'\in\mathcal{S}}\left|p(s'|s,a)-\bar{q}(s'|s,a)\right|\right) \leq  \frac{2\bar{A}_0}{N_{s,a}+\bar{A}_0} + 2\sqrt{\frac{|\mathcal{S}|}{N_{s,a}}},\notag
\end{align}
    which implies 
    \begin{equation}
        \max_{s,a}  \E_{p\sim\chi} \left(\sum_{s'\in\mathcal{S}}\left|p(s'|s,a)-\bar{q}(s'|s,a)\right|\right) \leq\frac{2\bar{A}_0}{N_{\min}+\bar{A}_0} + 2\sqrt{\frac{|\mathcal{S}|}{N_{\min}}}.\notag
    \end{equation}
    Therefore, we have
    \begin{align}
       & \mathbb{P} \left( \max_{s\in\mathcal{S},a\in\mathcal{A}} \E_{p\sim\chi} \left(\sum_{s'\in\mathcal{S}}\left|p(s'|s,a)-\bar{q}(s'|s,a)\right|\right)\geq \theta'\right) \notag\\
        \leq & \mathbb{P} \left(\frac{2\bar{A}_0}{N_{\min}+\bar{A}_0}\geq \frac{\theta'}{2}\right) + \mathbb{P}\left(2\sqrt{\frac{|\mathcal{S}|}{N_{\min}}}\geq \frac{\theta'}{2}\right)\notag\\
        =&\mathbb{P} \left(N_{\min} \leq \frac{4\bar{A}_0}{\theta'}-\bar{A}_0\right) + \mathbb{P}\left(N_{\min}\leq \frac{16|\mathcal{S}|}{\theta'^2}\right) \label{eqmid}
    \end{align}
    By Chernoff's inequality,  \begin{equation}
        \mathbb{P}\left(N_{s,a}\leq \frac{1}{2}\mu_{\min}(T-T_0)\right)\leq \exp\left(-\frac{\mu_{\min}(T-T_0)}{8}\right),\notag
    \end{equation} which implies \begin{align}
        \mathbb{P}\left(N_{\min}\leq \frac{1}{2}\mu_{\min}(T-T_0)\right)=& \mathbb{P}\left(\bigcup_{s,a}\left\{N_{s,a}\leq \frac{1}{2}\mu_{\min}(T-T_0)\right\}\right)\notag\\ \leq &\sum_{s,a}\mathbb{P}\left(N_{s,a}\leq \frac{1}{2}\mu_{\min}(T-T_0)\right) \notag\\
      \leq  &|\mathcal{S}||\mathcal{A}|\exp\left(-\frac{\mu_{\min}(T-T_0)}{8}\right).\notag
    \end{align}
    Therefore, if \begin{equation}
         \frac{1}{2}\mu_{\min}(T-T_0) \geq \max\left\{\frac{4\bar{A}_0}{\theta'}-\bar{A}_0,\frac{16|\mathcal{S}|}{\theta'^2}\right\},\notag
    \end{equation} and 
    \begin{equation}
        |\mathcal{S}||\mathcal{A}|\exp\left(-\frac{\mu_{\min}(T-T_0)}{8}\right) \leq \frac{\delta}{2},\notag
    \end{equation} it holds that the right-hand-side of \eqref{eqmid}
\begin{equation}
    \leq 2 \mathbb{P}\left(N_{\min}\leq\frac{1}{2}\mu_{\min}(T-T_0)\right) \leq \delta.\notag
\end{equation}
Therefore, to guarantee that
\begin{equation}
\mathbb{P}\!\left(
\big|
\mathrm{Risk}(\delta_{\bar{q}},\pi^*_{\bar{q}},\mu_0,\pi^*_{\bar{q}})
-\mathrm{Risk}(\delta_{\bar{q}},\pi^*_{\chi_T},\mu_0,\pi^*_{\chi_T})
\big|
\ge \theta
\right)\ \le\ \delta,
\notag
\end{equation}
it is sufficient that
\begin{equation}
T \;\ge\; T_0 +
\max\Bigg\{
\frac{2}{\mu_{\min}}\Big(\frac{4\bar{A}_0}{\theta'}-\bar{A}_0\Big),\,
\frac{32|\mathcal{S}|}{\mu_{\min}\theta'^2},\,
\frac{8}{\mu_{\min}}\ln\left(\frac{2|\mathcal{S}||\mathcal{A}|}{\delta}\right)
\Bigg\}.\notag
\end{equation}
    By substituting $\theta'$ into the above condition, the conclusion follows.
\end{proof}

\begin{lemma}
    \label{Dirich_lemma2}
There exist random vectors $Y = (Y_1, Y_2, \dots, Y_K)$ and $Z = (Z_1, Z_2, \dots, Z_K)$ following Dirichlet distributions with parameter $\alpha=(\alpha_1,\ldots,\alpha_K)$ and $\alpha+m = (\alpha_1+m_1,\ldots,\alpha_K+m_K),$ respectively, such that \begin{equation}
    \E \sum_{i=1}^K|Y_i-Z_i| \leq 2\ln\left(1+\frac{\sum_{i=1}^Km_i}{\sum_{i=1}^K \alpha_i}\right).
\end{equation}
\end{lemma}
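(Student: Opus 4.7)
The plan is to exhibit an explicit coupling of $Y$ and $Z$ via the standard Gamma representation of the Dirichlet distribution, and to bound $\sum_i |Y_i - Z_i|$ pathwise by a quantity whose expectation is easy to compute. Concretely, I would take independent random variables $G_1,\ldots,G_K$ with $G_i \sim \mathrm{Gamma}(\alpha_i, 1)$, and independently $G'_1,\ldots,G'_K$ with $G'_i \sim \mathrm{Gamma}(m_i, 1)$ (interpreting $G'_i \equiv 0$ when $m_i = 0$). Set $S = \sum_i G_i$, $S' = \sum_i G'_i$, $T = S + S'$, and define
\[
Y_i = \frac{G_i}{S}, \qquad Z_i = \frac{G_i + G'_i}{T}.
\]
The well-known Gamma/Dirichlet identity gives $Y \sim \mathrm{Dir}(\alpha)$ and $Z \sim \mathrm{Dir}(\alpha + m)$, independently of the total sums used for normalization.

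Next, I would estimate each term by clearing denominators:
\[
|Y_i - Z_i| = \left|\frac{G_i}{S} - \frac{G_i + G'_i}{T}\right| = \frac{|G_i S' - S G'_i|}{S\,T} \;\leq\; \frac{G_i S' + S G'_i}{S\,T}.
\]
Summing over $i$ and using $\sum_i G_i = S$ and $\sum_i G'_i = S'$, both cross-terms telescope to $S S'$, yielding the pathwise bound
\[
\sum_{i=1}^K |Y_i - Z_i| \;\leq\; \frac{2\,S S'}{S\,T} \;=\; \frac{2 S'}{S + S'}.
\]
Since $S \sim \mathrm{Gamma}(A, 1)$ and $S' \sim \mathrm{Gamma}(M, 1)$ are independent, with $A = \sum_i \alpha_i$ and $M = \sum_i m_i$, the ratio $S'/(S + S')$ is $\mathrm{Beta}(M, A)$-distributed, so
\[
\E\!\left[\frac{S'}{S + S'}\right] \;=\; \frac{M}{A + M}.
\]

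Finally, I would close the argument with the elementary inequality $\tfrac{x}{1+x} \leq \ln(1 + x)$ for all $x \geq 0$, applied at $x = M/A$, which gives
\[
\E\sum_{i=1}^K |Y_i - Z_i| \;\leq\; \frac{2M}{A + M} \;\leq\; 2\ln\!\left(1 + \frac{M}{A}\right),
\]
as required. There is no real obstacle in this argument: the only delicate points are (i) checking that the additive Gamma coupling indeed yields the two Dirichlet marginals even when some $m_i$ vanish (which is handled by the degenerate case $G'_i \equiv 0$), and (ii) noting that the triangle-inequality-type bound $|G_i S' - S G'_i| \leq G_i S' + S G'_i$ is tight enough precisely because the cross-terms collapse to the same quantity $SS'$, which is what produces the clean factor of $2$ in the final bound.
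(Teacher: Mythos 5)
Your proposal is correct, and it takes a genuinely different route from the paper. The paper decomposes $m$ into $M=\sum_i m_i$ one-hot increments, proves a single-increment coupling bound $\E\sum_i|U_i-V_i|\le 2/(\sum_i\beta_i+1)$ via the Gamma representation with one extra $\mathrm{Gamma}(1,1)$ variable, and then chains these bounds through the triangle inequality for the $\ell_1$-Wasserstein distance, summing $\sum_{s=1}^{M}\frac{2}{\alpha_0+s}\le 2\ln(1+M/\alpha_0)$. You instead build one global coupling by adding an independent $\mathrm{Gamma}(m_i,1)$ block to each coordinate, obtain the pathwise bound $\sum_i|Y_i-Z_i|\le 2S'/(S+S')$ with $S'/(S+S')\sim\mathrm{Beta}(M,A)$, and conclude with $\frac{x}{1+x}\le\ln(1+x)$. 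Your single-shot coupling is in fact the natural $M$-step aggregation of the paper's one-hot coupling, and it buys two things: a strictly sharper intermediate bound $\frac{2M}{A+M}\le 2\ln(1+M/A)$, and validity for arbitrary nonnegative real increments $m_i$ rather than only integer vectors expressible as sums of one-hot updates (which is all the paper's chaining argument covers). The paper's approach, on the other hand, isolates a reusable one-step Wasserstein contraction estimate that matches how the Dirichlet parameters are actually updated one observation at a time in the algorithm. All the steps you flag as delicate check out: the additive Gamma construction does give both Dirichlet marginals (with $G_i'\equiv 0$ when $m_i=0$), the cross terms do telescope to $SS'$, and the Beta mean computation is standard.
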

\begin{proof}[Proof of Lemma \ref{Dirich_lemma2}]
Let $\alpha_0:=\sum_{i=1}^K\alpha_i$ and $M:=\sum_{i=1}^K m_i$. Write
\[
m=\sum_{s=1}^{M} e_{k_s},
\]
where each $k\in\{1,\dots,K\}$ appears exactly $m_k$ times and $e_k$ is the $k$-th standard basis vector. Define the intermediate parameter sequence
\[
\beta^{(0)}=\alpha,\qquad \beta^{(s)}=\beta^{(s-1)}+e_{k_s}\quad (s=1,\dots,M),
\]
so that $\beta^{(M)}=\alpha+m.$ We will prove the following \emph{one-hot step} bound: for any $k$ and any $\beta$ with $\beta_i>0$,
if $U\sim \mathrm{Dirichlet}(\beta)$ and $V\sim \mathrm{Dirichlet}(\beta+e_k)$ are coupled as below, then
\begin{equation}\label{eq:onehot}
\mathbb{E}\bigg[\sum_{i=1}^K |U_i-V_i|\bigg]\le \frac{2}{\sum_{i=1}^K\beta_i+1}.
\end{equation} \textbf{Proof of \eqref{eq:onehot}.}
Use the normalized-gamma representation. Let $G_i\sim\mathrm{Gamma}(\beta_i,1)$ be independent and let
$H\sim\mathrm{Gamma}(1,1)$ be independent of $(G_i)_{i=1}^K$. Put $S:=\sum_{i=1}^K G_i$ and define
\[
U_i:=\frac{G_i}{S},\qquad 
V_i:=\frac{G_i+\mathbf{1}\{i=k\}H}{S+H}.
\]
Then $U\sim\mathrm{Dirichlet}(\beta)$ and $V\sim\mathrm{Dirichlet}(\beta+e_k)$. Moreover,
\[
\sum_{i=1}^K |U_i-V_i|
= \sum_{i\ne k}\left|\frac{G_i}{S}-\frac{G_i}{S+H}\right|
+ \left|\frac{G_k}{S}-\frac{G_k+H}{S+H}\right|
= \sum_{i\ne k}\frac{G_iH}{S(S+H)}+\frac{H(S-G_k)}{S(S+H)}+\frac{H}{S+H}.
\]
Using $\sum_{i\ne k}G_i=S-G_k$, the first two sums equal $\frac{H(S-G_k)}{S(S+H)}+\frac{H(S-G_k)}{S(S+H)}
= \frac{2H(S-G_k)}{S(S+H)}\le \frac{2H}{S+H}$. Hence,
\[
\sum_{i=1}^K |U_i-V_i| \le \frac{2H}{S+H}.
\]
Taking expectations gives
\[
\mathbb{E}\sum_{i=1}^K |U_i-V_i| \le 2\,\mathbb{E}\!\left[\frac{H}{S+H}\right].
\]
Since $S\sim\mathrm{Gamma}(\sum_i\beta_i,1)$ and $H\sim\mathrm{Gamma}(1,1)$ are independent,
$\frac{H}{S+H}\sim\mathrm{Beta}\big(1,\sum_i\beta_i\big)$, so
\[
\mathbb{E}\!\left[\frac{H}{S+H}\right]=\frac{1}{\sum_{i=1}^K\beta_i+1},
\]
which proves \eqref{eq:onehot}. Now return to the original claim. Let $\mu_s:=\mathrm{Dirichlet}(\beta^{(s)})$.
Let $W_1(\mu,\nu)$ denote the $1$-Wasserstein distance with cost $\|x-y\|_1$:
\[
W_1(\mu,\nu):=\inf_{\pi\in\Pi(\mu,\nu)} \mathbb{E}_\pi\|X-Y\|_1,
\]
which satisfies the triangle inequality. For each step $s$, by the explicit one-hot coupling above,
\[
W_1(\mu_{s-1},\mu_s)\le \frac{2}{\sum_i \beta^{(s-1)}_i+1}=\frac{2}{\alpha_0+s}.
\]
Therefore, by the triangle inequality,
\[
W_1(\mu_0,\mu_M)\le \sum_{s=1}^{M} W_1(\mu_{s-1},\mu_s)
\le \sum_{s=1}^{M}\frac{2}{\alpha_0+s}.
\]
Finally, since $x\mapsto 1/x$ is decreasing,
\[
\sum_{s=1}^{M}\frac{1}{\alpha_0+s}\le \int_{\alpha_0}^{\alpha_0+M}\frac{dx}{x}
= \ln\!\left(1+\frac{M}{\alpha_0}\right).
\]
Thus
\[
W_1(\mu_0,\mu_M)\le 2\ln\!\left(1+\frac{M}{\alpha_0}\right).
\]
In particular, there exists a coupling $(Y,Z)$ with marginals
$Y\sim\mathrm{Dirichlet}(\alpha)$ and $Z\sim\mathrm{Dirichlet}(\alpha+m)$ such that
\[
\mathbb{E}\sum_{i=1}^K |Y_i-Z_i|
\le 2\ln\!\left(1+\frac{\sum_{i=1}^K m_i}{\sum_{i=1}^K \alpha_i}\right),
\]
which proves the lemma.
\end{proof}

\begin{proof}[Proof of Proposition \ref{lemma_DPcom}]
    \begin{align}
        |V^*_{\text D(\otimes({\alpha}+{m}))}(s) - V^*_{\text D(\otimes{\alpha})}(s)| &= |\mathcal{J}_{\text D(\otimes({\alpha}+{m}))} V^*_{\text D(\otimes({\alpha}+{m}))}(s)-\mathcal{J}_{\text D(\otimes{\alpha})}  V^*_{\text D(\otimes{\alpha})}(s)|\notag\\
&=\Bigg|\min_{a\in\mathcal{A}} \beta_{p\sim\text D(\otimes({\alpha}+{m}))}\left(\sigma(c(s,a,\cdot)+\gamma V_{\text D(\otimes({\alpha}+{m}))}^*(\cdot),p(\cdot|s,a))\right)\notag\\
			&\qquad\qquad-\min_{a\in\mathcal{A}}\beta_{p\sim\text D(\otimes{\alpha})}\left(\sigma(c(s,a,\cdot)+\gamma V_{\text D(\otimes{\alpha})}^*(\cdot),p(\cdot|s,a))\right)\Bigg|\notag\\
            &\leq\Bigg|\min_{a\in\mathcal{A}}\beta_{p\sim\text D(\otimes({\alpha}+{m}))}\left(\sigma(c(s,a,\cdot)+\gamma V_{\text D(\otimes({\alpha}+{m}))}^*(\cdot),p(\cdot|s,a))\right)\notag\\
			&\qquad\qquad-\min_{a\in\mathcal{A}}\beta_{p\sim\text D(\otimes({\alpha}+{m}))}\left(\sigma(c(s,a,\cdot)+\gamma V_{\text D(\otimes{\alpha})}^*(\cdot),p(\cdot|s,a))\right)\Bigg|\notag\\
            &+\Bigg|\min_{a\in\mathcal{A}}\beta_{p\sim\text D(\otimes({\alpha}+{m}))}\left(\sigma(c(s,a,\cdot)+\gamma V_{\text D(\otimes({\alpha}+{m}))}^*(\cdot),p(\cdot|s,a))\right)\notag\\
			&\qquad\qquad-\min_{a\in\mathcal{A}}\beta_{p\sim\text D(\otimes{\alpha})}\left(\sigma(c(s,a,\cdot)+\gamma V_{\text D(\otimes({\alpha}+{m}))}^*(\cdot),p(\cdot|s,a))\right)\Bigg|\notag\\
    &\leq \gamma \left\Vert V^*_{\text D(\otimes({\alpha}+{m}))} - V^*_{\text D(\otimes{\alpha})}\right\Vert_{\infty}\notag\\
    & + \max_{a\in\mathcal{A}}\inf_{(p_1,p_2)\in C(\text D(\otimes({\alpha}+{m})),\text D(\otimes{\alpha}))}\beta_{(p_1,p_2)}\Bigg(\Bigg|\sigma(c(s,a,\cdot)+\gamma V_{\text D(\otimes({\alpha}+{m}))}^*(\cdot),p_1(\cdot|s,a))\notag\\
			&\qquad\qquad-\sigma(c(s,a,\cdot)+\gamma V_{\text D(\otimes({\alpha}+{m}))}^*(\cdot),p_2(\cdot|s,a))\Bigg|\Bigg),\notag
    \end{align} where $C(\chi_1,\chi_2)$ denotes all joint distributions with marginals $\chi_1$ and $\chi_2.$ Therefore, there holds that
    \begin{align}
        &\left\Vert V^*_{\text D(\otimes({\alpha}+{m}))} - V^*_{\text D(\otimes{\alpha})}\right\Vert_{\infty}\notag\\
        \leq & \frac{1}{1-\gamma} \max_{a\in\mathcal{A}}\inf_{(p_1,p_2)\in C(\text D(\otimes({\alpha}+{m})),\text D(\otimes{\alpha}))}\beta_{(p_1,p_2)}\Bigg(\Bigg|\sigma(c(s,a,\cdot)+\gamma V_{\text D(\otimes({\alpha}+{m}))}^*(\cdot),p_1(\cdot|s,a))\notag\\
			&\qquad\qquad-\sigma(c(s,a,\cdot)+\gamma V_{\text D(\otimes({\alpha}+{m}))}^*(\cdot),p_2(\cdot|s,a))\Bigg|\Bigg)\notag
    \end{align}
    Using Lemma \ref{cvar_lemma}, we have
    \begin{align}
        &\left\Vert V^*_{\text D(\otimes({\alpha}+{m}))} - V^*_{\text D(\otimes{\alpha})}\right\Vert_{\infty}\notag\\
        \leq & \frac{1}{1-\gamma} \frac{1}{1-\alpha_2}\max_{a\in\mathcal{A},s\in\mathcal{S}}\inf_{(p_1,p_2)\in C(\text D(\otimes({\alpha}+{m})),\text D(\otimes{\alpha}))}\E_{(p_1,p_2)}\Bigg(\Bigg|\sigma(c(s,a,\cdot)+\gamma V_{\text D(\otimes({\alpha}+{m}))}^*(\cdot),p_1(\cdot|s,a))\notag\\
			&\qquad\qquad-\sigma(c(s,a,\cdot)+\gamma V_{\text D(\otimes({\alpha}+{m}))}^*(\cdot),p_2(\cdot|s,a))\Bigg|\Bigg)\notag\\
            \leq & \frac{1}{1-\gamma} \frac{1}{1-\alpha_2}\max_{a\in\mathcal{A},s\in\mathcal{S}}\inf_{(p_1,p_2)\in C(\text D(\otimes({\alpha}+{m})),\text D(\otimes{\alpha}))}\E_{(p_1,p_2)}\Bigg(\frac{2\bar{C}}{\alpha_1(1-\gamma)}\sum_{s'\in\mathcal{S}}|p_1(s'|s,a)-p_2(s'|s,a)|\Bigg)\notag\\
            \leq & \frac{2\bar{C}}{\alpha_1\alpha_2(1-\gamma)^2} \max_{a\in\mathcal{A},s\in\mathcal{S}}\inf_{(p_1,p_2)\in C(\text D(\otimes({\alpha}+{m})),\text D(\otimes{\alpha}))}\E_{(p_1,p_2)}\Bigg(\sum_{s'\in\mathcal{S}}|p_1(s'|s,a)-p_2(s'|s,a)|\Bigg).\notag
    \end{align} By Lemma \ref{Dirich_lemma2}, 
    \begin{align}
         &\left\Vert V^*_{\text D(\otimes({\alpha}+{m}))} - V^*_{\text D(\otimes{\alpha})}\right\Vert_{\infty}\notag\\
         \leq &  \frac{4\bar{C}|\mathcal{S}|}{\alpha_1\alpha_2(1-\gamma)^2} \sum_{a\in\mathcal{A},s\in\mathcal{S}}\ln\left(1+\frac{\Delta_{s,a}}{\sum_{s'\in\mathcal{S}}\alpha(s'|s,a)}\right)\notag\\
         \leq &  \frac{4\bar{C}|\mathcal{S}|^2|\mathcal{A}|}{\alpha_1\alpha_2(1-\gamma)^2} \ln\left(1+\sum_{a\in\mathcal{A},s\in\mathcal{S}}\frac{\Delta_{s,a}}{\sum_{s'\in\mathcal{S}}\alpha(s'|s,a)|\mathcal{S}||\mathcal{A}|}\right)\notag\\
         \leq & \frac{4\bar{C}|\mathcal{S}|^2|\mathcal{A}|}{\alpha_1\alpha_2(1-\gamma)^2} \ln\left(1+\frac{\sum_{a\in\mathcal{A},s\in\mathcal{S}}\Delta_{s,a}}{\min_{a\in\mathcal{A},s\in\mathcal{S}}\sum_{s'\in\mathcal{S}}\alpha(s'|s,a)|\mathcal{S}||\mathcal{A}|}\right)\notag\\
         = &  \frac{4\bar{C}|\mathcal{S}|^2|\mathcal{A}|}{\alpha_1\alpha_2(1-\gamma)^2} \ln\left(1+\frac{\Delta }{|\mathcal{S}||\mathcal{A}|O_{\alpha}}\right).\notag
    \end{align}
    
\end{proof}

\begin{proof}[Proof of Corollary \ref{corol_DPcom}]
Denote by $V^{(k)}_u$ the value function after $k$ iterations in the $u$-th stage. Then
\begin{align}
    \left\Vert V^{(k)}_u-  V^*_{\text D(\otimes({\alpha}+{m}))} \right\Vert \leq &\gamma ^{k} \left\Vert V_u^{(0)}-  V^*_{\text D(\otimes({\alpha}+{m}))} \right\Vert \notag\\
    \leq& \gamma ^{k} \left\Vert V_u^{(0)}-  V^*_{\text D(\otimes({\alpha}+{m}))} \right\Vert + \gamma ^{k} \left\Vert V^*_{\text D(\otimes{\alpha})}-  V^*_{\text D(\otimes({\alpha}+{m}))} \right\Vert\notag\\
    \leq & \gamma ^{k} \theta + \gamma ^{k} \left\Vert V^*_{\text D(\otimes{\alpha})}-  V^*_{\text D(\otimes({\alpha}+{m}))} \right\Vert.\notag
\end{align}
To ensure $\left\Vert V^{(k)}_u-  V^*_{\text D(\otimes({\alpha}+{m}))} \right\Vert \leq\theta,$ it is sufficient that 
\begin{equation}
    \gamma ^{k} \theta + \gamma ^{k} \left\Vert V^*_{\text D(\otimes{\alpha})}-  V^*_{\text D(\otimes({\alpha}+{m}))} \right\Vert \leq \theta,\notag
\end{equation}
which is equivalent to that
\begin{align}
    \frac{1}{\frac{1}{\gamma^k}-1} \left\Vert V^*_{\text D(\otimes{\alpha})}-  V^*_{\text D(\otimes({\alpha}+{m}))} \right\Vert \leq \theta,\notag\\
   \frac{1}{\gamma^k} \geq \frac{1}{\theta} \left\Vert V^*_{\text D(\otimes{\alpha})}-  V^*_{\text D(\otimes({\alpha}+{m}))} \right\Vert +1 .\notag
\end{align}
Taking the logarithm on both sides, the inequality above is equivalent to
\begin{align}
    k\ln\left(\frac{1}{\gamma}\right) \geq \ln\left(1+\frac{1}{\theta} \left\Vert V^*_{\text D(\otimes{\alpha})}-  V^*_{\text D(\otimes({\alpha}+{m}))} \right\Vert\right) ,\notag\\
    k \geq \frac{1}{\ln\left(\frac{1}{\gamma}\right)}\ln\left(1+\frac{1}{\theta} \left\Vert V^*_{\text D(\otimes{\alpha})}-  V^*_{\text D(\otimes({\alpha}+{m}))} \right\Vert\right).\notag
\end{align}
Combine this with Proposition \ref{corol_DPcom} and the conclusion follows.
\end{proof}

\begin{proof}[Proof of Proposition \ref{sweep_bound}]
    Using Corollary \ref{corol_DPcom}, we have
    \begin{align}
        \sum_{u=U_L}^{U_{L+1}} k^{(u)}  \leq \sum_{u=U_L}^{U_{L+1}} \left( \frac{1}{\ln\!\left(\tfrac{1}{\gamma}\right)} 
\ln\!\left(1+ \frac{4 \bar{C} }{\alpha_1\alpha_2}\cdot\frac{|\mathcal{S}|^2|\mathcal{A}|}{\theta (1 - \gamma)^2} 
\ln\!\left(1 + \frac{\Delta_{(u)}}{|\mathcal{S}||\mathcal{A}| O_{u}} \right) \right) +1\right).\notag
    \end{align} By Jensen's inequality, we have
    \begin{align}
        \sum_{u=U_L}^{U_{L+1}} k^{(u)}  \leq  &|\mathcal{A}||\mathcal{S}|\left( \frac{1}{\ln\!\left(\tfrac{1}{\gamma}\right)} 
\ln\!\left(\sum_{u=U_L}^{U_{L+1}}\frac{1}{|\mathcal{A}||\mathcal{S}|}\left( 1+ \frac{4 \bar{C} }{\alpha_1\alpha_2}\cdot\frac{|\mathcal{S}|^2|\mathcal{A}|}{\theta (1 - \gamma)^2} 
\ln\!\left(1 + \frac{\Delta_{(u)}}{|\mathcal{S}||\mathcal{A}| O_{u}} \right) \right) \right)+1\right)\notag\\
\leq &|\mathcal{A}||\mathcal{S}|\left( \frac{1}{\ln\!\left(\tfrac{1}{\gamma}\right)} 
\ln\!\left(\frac{1}{|\mathcal{A}||\mathcal{S}|}\left( 1+ \frac{4 \bar{C} }{\alpha_1\alpha_2}\cdot\frac{|\mathcal{S}|^2|\mathcal{A}|}{\theta (1 - \gamma)^2} 
\ln\!\left(1 + \sum_{u=U_L}^{U_{L+1}}\frac{\Delta_{(u)}}{|\mathcal{S}|^2|\mathcal{A}|^2 O_{u}} \right) \right) \right)+1\right)\notag\\
\leq &|\mathcal{A}||\mathcal{S}|\left( \frac{1}{\ln\!\left(\tfrac{1}{\gamma}\right)} 
\ln\!\left(\frac{1}{|\mathcal{A}||\mathcal{S}|}\left( 1+ \frac{4 \bar{C} }{\alpha_1\alpha_2}\cdot\frac{|\mathcal{S}|^2|\mathcal{A}|}{\theta (1 - \gamma)^2} 
\ln\!\left(1 + \frac{\sum_{u=U_L}^{U_{L+1}} \Delta_{(u)}}{|\mathcal{S}|^2|\mathcal{A}|^2 O_{U_L}} \right) \right) \right)+1\right).\notag
\end{align}
Since every state-action pair is traversed in each sweep, it follows that $O_{L_{U+1}}\geq O_{L_{U}} +1$, implying that $O_{L_{U}}\geq O_0+U.$ Therefore, we have
\begin{equation}
    \sum_{u=U_L}^{U_{L+1}} k^{(u)}  \leq |\mathcal{A}||\mathcal{S}|\left( \frac{1}{\ln\!\left(\tfrac{1}{\gamma}\right)} 
\ln\!\left(\frac{1}{|\mathcal{A}||\mathcal{S}|}\left( 1+ \frac{4 \bar{C} }{\alpha_1\alpha_2}\cdot\frac{|\mathcal{S}|^2|\mathcal{A}|}{\theta (1 - \gamma)^2} 
\ln\!\left(1 + \frac{\sum_{u=U_L}^{U_{L+1}}\Delta_{(u)}}{|\mathcal{S}|^2|\mathcal{A}|^2(O_0+L)} \right) \right) \right)+1\right).\notag
\end{equation}
\end{proof}

\begin{proof}[Proof of Theorem \ref{c1}]

\begin{lemma}
\label{bianli}
    Let $(X_t)_{t\ge 1}$ be the trajectory on $\{1,\dots,K\}$.
Assume there exists $T_0$ such that for all $t\ge T_0$ and all states $i\in[K]$,
\begin{equation}\label{eq:mu_min_after_T0}
\mathbb{P}(X_{t+1}=i\mid X_1,\dots,X_t)\ \ge\ \mu_{\min}\qquad\text{a.s.}
\end{equation}
Define one round as the time needed to visit all $K$ states once, and let $T_1,\dots,T_L$ be the lengths of $L$ rounds. Then we have
\begin{equation}
    \max_{1 \le r \le L} T_r 
= \mathcal{O}_p\!\left(\frac{1}{\mu_{\min}}\log(KL)\right).\notag
\end{equation}
\end{lemma}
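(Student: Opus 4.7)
The plan is to reduce the claim to a coupon-collector style tail estimate combined with a union bound over the $L$ rounds. First I would fix an arbitrary round $r$ with starting time $\tau_r\ge T_0$ and observe that every step during this round satisfies assumption \eqref{eq:mu_min_after_T0}. For a specific target state $i\in[K]$ and any positive integer $M$, an iterated application of \eqref{eq:mu_min_after_T0} via the tower property gives
\[
\mathbb{P}\!\left(X_{\tau_r+j}\neq i,\ \forall\, 1\le j\le M\right)\;\le\;(1-\mu_{\min})^{M}\;\le\;e^{-\mu_{\min} M}.
\]
A union bound over all $K$ candidate states then produces the single-round tail bound
\[
\mathbb{P}(T_r>M)\;\le\;K\,e^{-\mu_{\min} M}.
\]

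Second, I would apply a union bound across the $L$ rounds:
\[
\mathbb{P}\!\left(\max_{1\le r\le L}T_r>M\right)\;\le\;LK\,e^{-\mu_{\min} M}.
\]
For any target confidence $\delta\in(0,1)$, choosing $M=\lceil\mu_{\min}^{-1}\log(LK/\delta)\rceil$ makes the right-hand side at most $\delta$. This is exactly the content of $\mathcal{O}_p\bigl(\mu_{\min}^{-1}\log(KL)\bigr)$: for every $\delta>0$ there is a constant $C_{\delta}$ with $\mathbb{P}\!\left(\max_r T_r \le C_{\delta}\mu_{\min}^{-1}\log(KL)\right)\ge 1-\delta$ for all sufficiently large $KL$.

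The proof is essentially routine once those two ingredients are in hand; the main subtlety is that \eqref{eq:mu_min_after_T0} is a \emph{conditional} lower bound with respect to the entire past, so the induction step showing $(1-\mu_{\min})^{M}$ must be carried out by factoring the joint probability as a product of one-step conditional probabilities and applying the bound successively under the tower property. A minor bookkeeping point is making sure all rounds under consideration begin at or after time $T_0$; any finite number of pre-$T_0$ steps contributes only an additive $\mathcal{O}(T_0)$ constant that is absorbed into the $\mathcal{O}_p$ notation, since $T_0$ does not scale with $K$ or $L$ in the intended application.
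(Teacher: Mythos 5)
Your proposal is correct and follows essentially the same route as the paper's proof: an iterated application of the conditional lower bound via the tower property to get the single-round tail $K e^{-\mu_{\min} M}$, a union bound over the $L$ rounds, and the choice $M \asymp \mu_{\min}^{-1}\log(KL/\delta)$, with the pre-$T_0$ rounds dismissed as contributing only a bounded additive term (the paper handles this by isolating the single round straddling $T_0$, but the effect is the same).
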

\begin{proof}[Proof of Lemma \ref{bianli}]

Define the round endpoints $(\tau_r)_{r\ge 0}$ by $\tau_0:=0$ and for $r\ge 1$,
\[
\tau_r := \inf\Big\{t>\tau_{r-1}:\ \text{all $K$ states have been visited at least once in }(\tau_{r-1},t]\Big\},
\]
and let $T_r:=\tau_r-\tau_{r-1}$ be the length of round $r$.

Let
\[
r_0 := \min\{r\ge 1:\ \tau_{r-1}\ge T_0\},
\]
i.e., $r_0$ is the first round whose \emph{start time} is no earlier than $T_0$.
Then only the (at most one) round $r_0$ may partially overlap with the pre-$T_0$ segment; all rounds
$r\ge r_0+1$ start at time $\tau_{r-1}\ge T_0$ and thus every step inside such a round satisfies
\eqref{eq:mu_min_after_T0}.

Fix any $r\ge r_0+1$ and any state $i$. For $t\ge 1$, let $A_{i,r}(t)$ be the event that state $i$
is not visited during the first $t$ steps of round $r$.
Conditioning step-by-step and using \eqref{eq:mu_min_after_T0}, we get
\[
\mathbb{P}(A_{i,r}(t))
=\mathbb{E}\Big[\prod_{s=1}^{t}\mathbb{P}\big(X_{\tau_{r-1}+s}\neq i\mid X_1,\dots,X_{\tau_{r-1}+s-1}\big)\Big]
\le (1-\mu_{\min})^{t}\le e^{-\mu_{\min} t}.
\]
If $T_r>t$, then at least one state has not been visited in the first $t$ steps of round $r$. Hence, by a union bound,
\[
\mathbb{P}(T_r>t)\le \sum_{i=1}^K \mathbb{P}(A_{i,r}(t)) \le K e^{-\mu_{\min} t},
\qquad \forall r\ge r_0+1.
\]

Now apply a union bound over the $L$ rounds under consideration. If the lemma concerns rounds
$r=1,\dots,L$, then at most one of them (namely $r_0$) can be affected by the pre-$T_0$ segment.
Thus, for any $t\ge 1$,
\[
\mathbb{P}\Big(\max_{1\le r\le L} T_r > t\Big)
\le \mathbb{P}(T_{r_0}>t) + \sum_{\substack{1\le r\le L\\ r\neq r_0}}\mathbb{P}(T_r>t)
\le \mathbb{P}(T_{r_0}>t) + (L-1)K e^{-\mu_{\min} t}.
\]
In particular, if one either (i) discards the single possibly ``bad'' round $r_0$,
or (ii) assumes $T_{r_0}=O_p(1)$ (or any weaker bound) from other arguments,
then the dominant term is $(L-1)K e^{-\mu_{\min} t}$ and we obtain the high-probability bound
\[
\mathbb{P}\Big(\max_{r_0+1\le r\le L} T_r > t\Big)\le LK e^{-\mu_{\min} t}.
\]
Choosing $t=\mu_{\min}^{-1}(\log(KL)+u)$ gives
\[
\mathbb{P}\Big(\max_{r_0+1\le r\le L} T_r >
\tfrac{1}{\mu_{\min}}(\log(KL)+u)\Big)\le e^{-u},
\]
which implies
\[
\max_{r_0+1\le r\le L} T_r = \mathcal{O}_p\!\left(\frac{1}{\mu_{\min}}\log(KL)\right).
\]
Since removing (at most) one round does not change the order in probability, the stated result follows.
\end{proof}

\begin{lemma}
\label{jianjin}
    When $a,x\to\infty,$ \begin{equation}
        \int_c^{ax} \ln\left(1+x\ln\left(1+\frac{a}{t}\right)\right) \d t = \mathcal{O}(ax).\notag
    \end{equation}
\end{lemma}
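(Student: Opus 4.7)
The plan is to rescale the integration variable, split the resulting integral at a convenient point, and bound each piece separately.

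First, I would substitute $t = au$, so $dt = a\,du$ and the integral becomes
\begin{equation}
\int_c^{ax} \ln\!\left(1+x\ln\!\left(1+\frac{a}{t}\right)\right) dt
\;=\; a\int_{c/a}^{x} \ln\!\left(1+x\ln\!\left(1+\frac{1}{u}\right)\right) du.\notag
\end{equation}
It therefore suffices to show that $I(x):=\int_{c/a}^{x}\ln(1+x\ln(1+1/u))\,du = \mathcal{O}(x)$ as $a,x\to\infty$, since multiplying by the leading $a$ yields the claimed $\mathcal{O}(ax)$ bound. I would split $I(x) = I_1 + I_2$ at $u=1$.

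For the tail piece $I_2 = \int_1^{x}\ln(1+x\ln(1+1/u))\,du$, use $\ln(1+1/u)\le 1/u$ to get $I_2 \le \int_1^{x}\ln(1+x/u)\,du$. The substitution $v = x/u$ converts this to $x\int_1^{x}\frac{\ln(1+v)}{v^2}\,dv$, which is bounded by $x\int_1^{\infty}\frac{\ln(1+v)}{v^2}\,dv$; this last integral converges because $\ln(1+v)/v^2 \sim (\ln v)/v^2$ at infinity. Hence $I_2 = \mathcal{O}(x)$.

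For the head piece $I_1 = \int_{c/a}^{1}\ln(1+x\ln(1+1/u))\,du$, use $1+1/u \le 2/u$ on $(0,1]$ to get $\ln(1+1/u) \le \ln 2 - \ln u$. Then
\begin{equation}
I_1 \;\le\; \int_{c/a}^{1}\ln\!\big(1+x(\ln 2 - \ln u)\big)\,du,\notag
\end{equation}
and the substitution $v=-\ln u$ transforms this into $\int_{0}^{\ln(a/c)}\ln(1+x(\ln 2 + v))\,e^{-v}\,dv$. Bounding $\ln(1+x(\ln 2+v)) \le \ln x + \ln(1+\ln 2 + v)$ and using that $\int_0^\infty \ln(1+\ln 2+v)\,e^{-v}\,dv$ is a finite constant, I obtain $I_1 = \mathcal{O}(\ln x)$, which is absorbed into $\mathcal{O}(x)$. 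Combining $I_1$ and $I_2$ gives $I(x) = \mathcal{O}(x)$, and hence the original integral is $\mathcal{O}(ax)$, as required.

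The only mild subtlety will be in the head piece, where the lower endpoint $c/a$ drifts to $0$ as $a\to\infty$ and the integrand blows up logarithmically; the exponential weight introduced by the substitution $v=-\ln u$ tames this singularity and yields a bound independent of $a$. No other step requires care beyond routine monotonicity and convergence checks.
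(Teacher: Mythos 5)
Your proof is correct. It reaches the same $\mathcal{O}(ax)$ bound by a slightly different route: you rescale by $t=au$ and then split the domain at $u=1$, handling the tail with the linearization $\ln(1+1/u)\le 1/u$ and the head with $\ln(1+1/u)\le \ln 2-\ln u$ plus the exponential-weight substitution $v=-\ln u$. The paper avoids the split entirely by applying the linearization $\ln(1+a/t)\le a/t$ \emph{globally}, which collapses the nested logarithm to $\ln(1+ax/t)$ on the whole interval; a single substitution $t=axu$ then reduces everything to $\int_{c/ax}^{1}\ln(1+1/u)\,du\le \ln 2+1$, so the head-piece machinery (the $x\ge 1$ requirement for extracting $\ln x$, the finiteness of $\int_0^\infty\ln(1+\ln 2+v)e^{-v}\,dv$) is never needed. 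Your version is somewhat more work but gives finer information — it shows the contribution from $t\le a$ is only $\mathcal{O}(a\ln x)$, with the $\mathcal{O}(ax)$ mass concentrated in the tail — whereas the paper's uniform bound is the shorter path to the stated order. Both arguments are sound; just note explicitly that your split at $u=1$ and the inequality $\ln(1+x(\ln 2+v))\le\ln x+\ln(1+\ln 2+v)$ presuppose $x\ge 1$ and $c/a\le 1$, which is harmless in the limit $a,x\to\infty$ but worth stating.
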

\begin{proof}[Proof of Lemma \ref{jianjin}]
Fix $c>0$ and let $a,x\to\infty$. Set $B:=ax$. Note that for all $t>0$,
\[
\ln\Bigl(1+\frac{a}{t}\Bigr)\le \frac{a}{t}
\]
(since $\ln(1+u)\le u$ for $u\ge 0$). Hence, for $t\in[c,B]$,
\[
\ln\!\left(1+x\ln\!\left(1+\frac{a}{t}\right)\right)
\le \ln\!\left(1+x\cdot \frac{a}{t}\right)
= \ln\!\left(1+\frac{B}{t}\right).
\]
Therefore,
\[
\int_c^{B}\ln\!\left(1+x\ln\!\left(1+\frac{a}{t}\right)\right)\,dt
\le
\int_c^{B}\ln\!\left(1+\frac{B}{t}\right)\,dt.
\]
Make the change of variables $t=Bu$ (so $dt=B\,du$). Then
\[
\int_c^{B}\ln\!\left(1+\frac{B}{t}\right)\,dt
=
B\int_{c/B}^{1}\ln\!\left(1+\frac{1}{u}\right)\,du.
\]
For $u\in(0,1]$ we have $1+\frac{1}{u}\le \frac{2}{u}$. Hence,
\[
\ln\!\left(1+\frac{1}{u}\right)\le \ln\!\left(\frac{2}{u}\right)=\ln 2-\ln u.
\]
Since $\int_0^1(-\ln u)\,du=1$, it follows that
\[
\int_{c/B}^{1}\ln\!\left(1+\frac{1}{u}\right)\,du
\le \int_0^{1}(\ln 2-\ln u)\,du
= \ln 2 + 1.
\]
Combining the above bounds yields
\[
\int_c^{ax}\ln\!\left(1+x\ln\!\left(1+\frac{a}{t}\right)\right)\,dt
\le (\ln 2+1)\,ax,
\]
which implies
\[
\int_c^{ax}\ln\!\left(1+x\ln\!\left(1+\frac{a}{t}\right)\right)\,dt
= \mathcal{O}(ax).
\]
\end{proof}

  Denote by $L^*$ the active sweeps. Then we have 
    \begin{align}
        L^* =& \min\left\{L\geq 1:\frac{1}{\ln\!\left(\tfrac{1}{\gamma}\right)}\ln\!\left( 1+\frac{4 \bar{C} }{\alpha_1\alpha_2}\cdot\frac{|\mathcal{S}|^2|\mathcal{A}|}{\theta (1 - \gamma)^2} 
\ln\!\left(1 + \frac{\sum_{u=U_L}^{U_{L+1}}\Delta_{(u)}}{|\mathcal{S}|^2|\mathcal{A}|^2 (O_{0}+L)} \right) \right)\leq 1\right\}\notag \\
= &\min\left\{L\geq 1:\frac{4 \bar{C} }{\alpha_1\alpha_2}\cdot\frac{|\mathcal{S}|^2|\mathcal{A}|}{\theta (1 - \gamma)^2} 
\ln\!\left(1 + \frac{\sum_{u=U_L}^{U_{L+1}}\Delta_{(u)}}{|\mathcal{S}|^2|\mathcal{A}|^2 (O_{0}+L)}  \right)\leq \frac{1}{\gamma}-1\right\}\notag\\
=& \min\left\{L\geq 1:\frac{\sum_{u=U_L}^{U_{L+1}}\Delta_{(u)}}{|\mathcal{S}|^2|\mathcal{A}|^2 (O_{0}+L)} \leq \exp\left(\frac{\alpha_1\alpha_2\theta(1-\gamma)^3}{4\gamma\bar{C}|\mathcal{S}|^2|\mathcal{A}|}\right)-1\right\}\notag\\
=&\min\left\{L\geq 1:L\geq \left(\exp\left(\frac{\alpha_1\alpha_2\theta(1-\gamma)^3}{4\gamma\bar{C}|\mathcal{S}|^2|\mathcal{A}|}\right)-1\right)^{-1}\frac{1}{|\mathcal{S}|^2|\mathcal{A}|^2}\sum_{u=U_L}^{U_{L+1}}\Delta_{(u)} - O_0\right\}\notag
    \end{align}
    Therefore, the number of active stages is \begin{align}
        |\mathcal{S}||\mathcal{A}|L ^* =& \mathcal{O}\left(|\mathcal{S}||\mathcal{A}|\left(\exp\left(\frac{\alpha_1\alpha_2\theta(1-\gamma)^3}{4\gamma\bar{C}|\mathcal{S}|^2|\mathcal{A}|}\right)-1\right)^{-1}\frac{1}{|\mathcal{S}|^2|\mathcal{A}|^2}\sum_{u=U_L}^{U_{L+1}}\Delta_{(u)}\right)\notag \\ 
        =& \mathcal{O}\left(|\mathcal{S}||\mathcal{A}|\left(\frac{\alpha_1\alpha_2\theta(1-\gamma)^3}{4\gamma\bar{C}|\mathcal{S}|^2|\mathcal{A}|}\right)^{-1}\frac{1}{|\mathcal{S}|^2|\mathcal{A}|^2}\sum_{u=U_L}^{U_{L+1}}\Delta_{(u)}\right) \notag\\ 
         =& \mathcal{O}\left(\frac{4\gamma\bar{C}|\mathcal{S}|^2|\mathcal{A}|}{\alpha_1\alpha_2\theta(1-\gamma)^3}\frac{1}{|\mathcal{S}||\mathcal{A}|}\sum_{u=U_L}^{U_{L+1}}\Delta_{(u)}\right)\notag \\
         =& \mathcal{O}\left(\frac{|\mathcal{S}|}{\theta(1-\gamma)^3}\sum_{u=U_L}^{U_{L+1}}\Delta_{(u)}\right).\notag
    \end{align}
    Using Lemma \ref{bianli}, we have that the number of active stages is \begin{align}
         |\mathcal{S}||\mathcal{A}|L ^* &= \mathcal{O}_p\left(\frac{|\mathcal{S}|}{\theta(1-\gamma)^3}\mu_{\min}\ln\left(|\mathcal{S}||\mathcal{A}|\right)\right)\notag\\
         & = \mathcal{O}_p\left(\frac{|\mathcal{S}|^{\xi+1}|\mathcal{A}|^{\eta}}{\theta(1-\gamma)^3}\ln\left(|\mathcal{S}||\mathcal{A}|\right)\right).\notag
    \end{align}
    Thus, the global total number of value iterations is 
    \begin{align}
      &  \sum_{L=1}^{L^*}  \sum_{u=U_L}^{U_{L+1}} k^{(u)} \notag
      \\\leq  &\sum_{L=1}^{L^*}{|\mathcal{A}||\mathcal{S}|} \left(
\frac{1}{\ln\!\left(\tfrac{1}{\gamma}\right)}\ln\!\left( 1+\frac{4 \bar{C} }{\alpha_1\alpha_2}\cdot\frac{|\mathcal{S}|^2|\mathcal{A}|}{\theta (1 - \gamma)^2} 
\ln\!\left(1 + \frac{\sum_{u=U_L}^{U_{L+1}}\Delta_{(u)}}{|\mathcal{S}|^2|\mathcal{A}|^2 (O_{0}+L)} \right) \right)+1\right)\notag \\
\leq& \int_{0}^{L^*}{|\mathcal{A}||\mathcal{S}|} \left(
\frac{1}{\ln\!\left(\tfrac{1}{\gamma}\right)}\ln\!\left( 1+\frac{4 \bar{C} }{\alpha_1\alpha_2}\cdot\frac{|\mathcal{S}|^2|\mathcal{A}|}{\theta (1 - \gamma)^2} 
\ln\!\left(1 + \frac{\max_{1\leq L\leq L^*} \sum_{u=U_L}^{U_{L+1}}\Delta_{(u)}}{|\mathcal{S}|^2|\mathcal{A}|^2 (O_{0}+x)} \right) \right)+1\right) \d x\label{hehe}
    \end{align}
We denote $Y=\frac{4 \bar{C} }{\alpha_1\alpha_2}\cdot\frac{|\mathcal{S}|^2|\mathcal{A}|}{\theta (1 - \gamma)^2} $ and $Z = \frac{\max_{1\leq L\leq L^*} \sum_{u=U_L}^{U_{L+1}}\Delta_{(u)}}{|\mathcal{S}|^2|\mathcal{A}|^2 },$ and we have
\begin{align}
    L^* =&\min\left\{L\geq 1:L\geq \left(\exp\left(\frac{1}{Y}\right)-1\right)^{-1}Z - O_0\right\}\notag \\
    \leq& \min\left\{L\geq 1:L\geq YZ - O_0\right\}\notag
\end{align}
We denote $\widetilde{L} = L\geq YZ - O_0.$ The right-hand-side of \eqref{hehe}  
\begin{align}
   \leq &  \int_{0}^{\widetilde{L}}{|\mathcal{A}||\mathcal{S}|} \left(
\frac{1}{\ln\!\left(\tfrac{1}{\gamma}\right)}\ln\!\left( 1+Y 
\ln\!\left(1 + \frac{Z}{ O_{0}+x} \right) \right)+1\right) \d x\notag \\
= & \int_{O_0}^{YZ}{|\mathcal{A}||\mathcal{S}|} \left(
\frac{1}{\ln\!\left(\tfrac{1}{\gamma}\right)}\ln\!\left( 1+Y 
\ln\!\left(1 + \frac{Z}{ x} \right) \right)+1\right) \d x\notag \\
= & {|\mathcal{A}||\mathcal{S}|} \left(
\frac{1}{\ln\!\left(\tfrac{1}{\gamma}\right)}\int_{O_0}^{YZ}\ln\!\left( 1+Y 
\ln\!\left(1 + \frac{Z}{ x} \right) \right)\d x+1\right) .\notag 
\end{align} Furthermore, using Lemma \ref{bianli} and Lemma \ref{jianjin}, we have
\begin{align}
     \sum_{L=1}^{L^*}  \sum_{u=U_L}^{U_{L+1}} k^{(u)} =& {|\mathcal{A}||\mathcal{S}|} \left(
\frac{1}{\ln\!\left(\tfrac{1}{\gamma}\right)}\mathcal{O}(YZ)+1\right) \notag\\
=& {|\mathcal{A}||\mathcal{S}|} \left(
\frac{1}{\ln\!\left(\tfrac{1}{\gamma}\right)}\mathcal{O}_p\left(\frac{Y}{|\mathcal{S}|^2|\mathcal{A}|^2}\mu_{\min}^{-1}\ln(|\mathcal{S}||\mathcal{A}|L^*) \right)+1\right) \notag\\
=& {\mathcal{O}}_{p}\left(\frac{|\mathcal{S}|^{\xi+1}|\mathcal{A}|^{\eta}}{\theta(1-\gamma)^4}\cdot\ln\left(\frac{|\mathcal{S}|^{\xi+2}|\mathcal{A}|^{\eta+1}}{\theta(1-\gamma)^3}\ln\left(|\mathcal{S}||\mathcal{A}|\right)\right)\right) \notag\\
=& \widetilde{\mathcal{O}}_{p}\left(\frac{|\mathcal{S}|^{\xi+1}|\mathcal{A}|^{\eta}}{\theta(1-\gamma)^4}\right)\notag,
\end{align} which concludes the proof.
\end{proof}

\end{document}